\g@addto@macro\bfseries{\boldmath}
\newcommand{\lmatrp}[1]{\ensuremath{\left(\begin{array}{#1}}}
\newcommand{\rmatrp}{\ensuremath{\end{array}\right)}}
\newcommand{\df}{\ensuremath{\mathrel{\mathop:}=}}
\newcommand{\im}{\mathrm{im}\,}
\newcommand{\id}{\mathrm{id}}
\newtheorem{theorem}{Theorem} [section]
\newtheorem{proposition}[theorem]{Proposition}
\newtheorem{corollary}[theorem]{Corollary}
\newtheorem{lemma}[theorem]{Lemma}
\theoremstyle{definition}
\theoremstyle{remark}
\newtheorem{remark}[theorem]{Remark}
\newtheorem{example}[theorem]{Example}
\begin{document}
    \title{$K$-theory of AF-algebras from\\ braided C*-tensor categories}
    \author{Andreas N\ae s Aaserud and David Emrys Evans}

    \maketitle

    \begin{abstract}
\noindent Renault, Wassermann, Handelman and Rossmann (early 1980s) and Evans and Gould (1994) explicitly described the $K$-theory of certain unital AF-algebras $A$ as (quotients of) polynomial rings.
        In this paper, we show that in each case the multiplication in the 
        polynomial ring (quotient) is induced by a $*$-homomorphism $A\otimes 
        A\to A$ arising from a unitary braiding on a C*-tensor category and 
        essentially defined by Erlijman and Wenzl (2007). We also present some 
        new explicit calculations based on the work of Gepner, Fuchs and 
        others. Specifically, we perform computations for the rank two compact 
        Lie groups SU(3), Sp(4) and G$_2$ that are analogous to the 
        Evans--Gould computation for the rank one compact Lie group SU(2).

        The Verlinde rings are the fusion rings of Wess--Zumino--Witten models 
        in conformal field theory or, equivalently, of certain related 
        C*-tensor categories.
        Freed, Hopkins and Teleman (early 2000s) realized these rings via twisted equivariant $K$-theory. Inspired by this, our long-term goal is to realize these rings in a simpler $K$-theoretical manner, avoiding the technicalities of loop group analysis. As a step in this direction, we note that the Verlinde rings can be recovered as above in certain special cases.
    \end{abstract}

    \tableofcontents

    \section{Introduction}

    This paper concerns the $K$-theory of certain unital AF-algebras. Recall that AF-algebras are limits of inductive sequences of finite-dimensional C*-algebras with connecting $*$-homomorphisms. They were first introduced by Bratteli in the paper \cite{Br}. The $K$-theory $K_0(A)$ of an AF-algebra $A$ is an ordered group, i.e., an abelian (additively written) group $G$ together with a positive cone $G_+$, i.e., a subset $G_+\subset G$ that satisfies $G_++G_+\subset G_+$, $G_+\cap (-G_+) = \{0\}$ and $G = G_+-G_+$. If $A$ is a unital AF-algebra then $K_0(A)$ is an ordered group with a distinguished order unit (cf.\ Remark \ref{remark:SU3}), and Elliott \cite{Ell} showed that unital AF-algebras are classified by this data. He also classified non-unital AF-algebras by a slightly more complicated invariant. The ordered groups that arise as $K_0(A)$ for some AF-algebra $A$ are called dimension groups. They were abstractly characterized by Effros, Handelman and Shen in \cite{EHS}. We refer the reader to Effros' monograph \cite{E} for more information on the $K$-theory of AF-algebras.

    Although $K_0(A)$ is in general only an ordered group,
    it also sometimes has
    a natural ordered ring structure (cf.\ section \ref{subsection:ring_structure}). By the Elliott Classification Theorem \cite{Ell}, if $A$ is a unital AF-algebra such that $K_0(A)$ has the structure of an ordered ring then the product on $K_0(A)$ is induced by a (non-explicit) $*$-homomorphism $A\otimes A\to A$. In the present paper, we consider certain unital AF-algebras $A$ for which $K_0(A)$ has the structure of an ordered ring and the underlying $*$-homomorphism $A\otimes A\to A$ can be defined in terms of a unitary braiding on a rigid C*-tensor category. (These concepts are described in some detail in section \ref{subsection:categories} below, and some historical background and references can be found in Remark \ref{remark:history}.)
    We place certain $K$-theory calculations
    of
    Renault, Wassermann, Handelman and Rossmann (early 1980s) and Evans--Gould 
    (1994) within this framework and, in some special cases, give new explicit 
    descriptions of the ring $K_0(A)$ in terms of generators and relations, 
    i.e., as a quotient of a polynomial ring by an ideal generated by certain 
    polynomials. Below, we motivate our computations and put them into a 
    historical context.

    In the early 2000s, Freed, Hopkins and Teleman used $K$-theory to study 
    Wess--Zumino--Witten (WZW) models in 2d conformal field theory. Each WZW 
    model has an underlying rigid C*-tensor category (in fact, a unitary 
    modular tensor category) $\mathbf{Rep}_k(G)$ with a unitary braiding. Here, 
    $G$ is a simple, connected, simply connected, compact Lie group and $k$ is 
    a positive integer known as the ``level'' (cf.\ Example \ref{example:RepkG} 
    below). The fusion ring of this category, which is known as the Verlinde 
    ring $\mathrm{Ver}_k(G)$, describes the so-called operator product 
    expansion of primary fields in the model. Freed, Hopkins and Teleman 
    described $\mathrm{Ver}_k(G)$ in terms of twisted equivariant $K$-theory 
    (cf.\ \cite{FHT1,FHT2,FHT3}). In symbols, those authors proved that, as 
    rings,
    $$
        \mathrm{Ver}_k(G) \cong {}^\tau K_{G}^{\dim(G)}(G),
    $$
    where $G$ acts on itself by conjugation, the twist $\tau = \tau(k)$ belongs to the equivariant \v{C}ech cohomology group $H^3_{G}(G;\mathbb{Z})\cong \mathbb{Z}$, and the ring structure on the right hand side arises from the group product. (The above formula is actually a special case of a more general theorem from \cite{FHT1,FHT2,FHT3}.) Note that ${}^\tau K_{G}^{\dim(G)}(G)$ can also be viewed, in terms of the graded $K$-theory of C*-algebras, as $K_{\dim(G)}^{G}(C_0(G,\mathbb{K}_\tau))$ or, equivalently, as $K_{\dim(G)}(C_0(G,\mathbb{K}_\tau)\rtimes G)$, where $C_0(G,\mathbb{K}_\tau)$ is the C*-algebra of sections of a certain twisted $G$-equivariant graded bundle $\mathbb{K}_\tau$ of compact operators over $G$.

    Inspired by this, a long-term goal of ours is to find a simpler $K$-theoretical description of these rings, avoiding the technically complicated analysis of loop groups used in \cite{FHT1,FHT2,FHT3} and perhaps even allowing us to describe the underlying rigid C*-tensor category in terms of modules over a ``natural'' (possibly non-AF) C*-algebra along with a ``natural'' tensor product on these modules. This served as one impetus for
    the present paper. Since it was written, we have actually accomplished something along these lines. Namely, in \cite{AE} (see Remark 5.6 therein) we use the graphical calculus for $\mathbf{Rep}_k(G)$ (see Remark \ref{remark:history} below) to realize this category as a braided C*-tensor category of Hilbert C*-modules over a C*-algebra of compact operators. In the present paper, we content ourselves with realizing --- in certain special cases --- the ring $\mathrm{Ver}_k(G)$, equipped with a natural positive cone, as the ordered $K_0$-ring of a unital AF-algebra arising from the category $\mathbf{Rep}_k(G)$.

    $K$-theory has also been used to study certain actions of compact groups on 
    AF-algebras, for instance in the work of Wassermann \cite{Wa}, 
    Handelman--Rossmann \cite{HR1,HR2} and Handelman \cite{H1,H2} in the 1980s. 
    As part of their work, those authors explicitly described the $K$-theory of 
    the corresponding fixed point AF-algebras as polynomial rings. For example, 
    inspired by Renault's computation of $K_0(M_{2^\infty}^{\mathbb{T}})$ (cf.\ 
    the appendix to \cite{Re}), Wassermann showed that 
    $K_0(M_{2^\infty}^{\mathrm{SU}(2)})\cong \mathbb{Z}[t]$ as ordered rings 
    (cf.\ \cite{Wa}, pp.\ 118--123; see also Theorem \ref{theorem:SU2} below), 
    while Handelman and Rossmann showed that 
    $K_0(M_{n^\infty}^{\mathrm{U}(n)})\cong \mathbb{Z}[t_1,\ldots,t_{n-1}]$ as 
    rings (cf.\ Proposition VII.1 in \cite{HR1}). In these formulae, 
    $M_{n^\infty}$ is the infinite tensor product $M_n(\mathbb{C})\otimes 
    M_n(\mathbb{C})\otimes\cdots$ and the action of $\mathrm{SU}(2)$ on 
    $M_{2^\infty}$ (resp.\ of $\mathrm{U}(n)$ on $M_{n^\infty}$) arises from 
    the action of $\mathrm{SU}(2)$ on $M_2(\mathbb{C})$ (resp.\ of 
    $\mathrm{U}(n)$ on $M_n(\mathbb{C})$) by conjugation. (See also Example 
    \ref{example:product_type} below.)

    Related to this, Evans and Gould in \cite{EG} explicitly described the 
    $K$-theory of certain unital AF-algebras that arise from Jones' subfactors 
    \cite{J} and from WZW models associated to SU(2) as quotients of polynomial 
    rings (see also Theorem \ref{theorem:SU2k} below). In the present paper, we 
    note that braided C*-tensor categories, in the sense of Erlijman--Wenzl 
    \cite{EW}, yield a common framework for the computations of Renault, 
    Wassermann, Handelman and Rossmann on the one hand and of Evans and Gould 
    on the other by showing that in each case the ring structure in $K$-theory 
    is induced by a $*$-homomorphism, essentially defined by Erlijman and Wenzl 
    in \cite{EW}, that arises from a unitary braiding on an underlying rigid 
    C*-tensor category.
    (Note that the tensor product of Hilbert C*-modules that appears in \cite{AE} is defined via a $*$-homomorphism that is given by a similar formula.)
    For example, in Wassermann's computation the underlying category is the 
    category $\mathbf{Rep}(\mathrm{SU}(2))$ of continuous finite-dimensional 
    unitary representations of $\mathrm{SU}(2)$ whereas in the Evans--Gould 
    computation it is $\mathbf{Rep}_k(\mathrm{SU}(2))$.
    Moreover, we perform computations for the rank two Lie groups SU(3), Sp(4) 
    and G$_2$ that are analogous to the Evans--Gould computation for SU(2). Our 
    calculations are based on the work of Gepner \cite{G}, Fuchs \cite{F} and 
    others (see section \ref{subsection:fusion_rings}) in which Verlinde rings 
    were described as quotients of polynomial rings or, equivalently, of 
    representation rings.

    Let us now describe the structure of the present paper. In section \ref{section:preliminaries}, we discuss preliminary material on rigid C*-tensor categories and unitary braidings, including both the complicated examples $\mathbf{Rep}_k(G)$ and simpler examples, such as the representation categories of compact groups. We also define the fusion rings of these categories, and discuss modular $S$-matrices and the aforementioned work of Gepner, Fuchs and others on the Verlinde rings.

    Thereafter, in section \ref{section:AF-algebras}, we define the unital 
    AF-algebras with which we are concerned in the present paper. Specifically, 
    given a rigid C*-tensor category $\mathcal{C}$ and an object $\pi$ in 
    $\mathcal{C}$, we define the unital AF-algebra $A(\mathcal{C},\pi) = 
    \text{$\mathrm{ind}$-$\mathrm{lim}$}_n\mathrm{End}((\bar{\pi}\otimes\pi)^{\otimes
     n})$. Following Erlijman--Wenzl \cite{EW}, we also use a unitary braiding 
    on $\mathcal{C}$ to define a $*$-homomorphism $\theta\colon 
    A(\mathcal{C},\pi)\otimes A(\mathcal{C},\pi)\to A(\mathcal{C},\pi)$. As an 
    example, we describe in detail the situation where $\mathcal{C} = 
    \mathbf{Rep}_k(\mathrm{SU}(2))$ and $\pi$ is the fundamental 
    representation, in which case $A(\mathcal{C},\pi)$ may be described in 
    terms of Temperley--Lieb diagrams. Returning to the general setup, we next 
    show that the $*$-homomorphism $\theta$ induces a well-defined ring 
    structure on $K_0(A(\mathcal{C},\pi))$ by embedding 
    $K_0(A(\mathcal{C},\pi))$ as a subgroup of a localization of the fusion 
    ring of $\mathcal{C}$ in such a way that $K_0(\theta)$ is identified with 
    the product on the localization.

    In section \ref{section:explicit_computations}, we compute 
    $K_0(A(\mathbf{Rep}_k(G),\pi))$, where $G$ is one of the rank two Lie 
    groups SU(3), Sp(4) and G$_2$ and $\pi$ is a fundamental representation. 
    For the convenience of the reader, we also state the aforementioned 
    computations of Wassermann and Evans--Gould. The computation for $G = 
    \mathrm{SU}(3)$ is given in detail, but we omit the similar (but simpler) 
    details of the computations for Sp(4) and G$_2$. We also give a computation 
    of $K_0(A(\mathbf{Rep}(\mathrm{SU}(3)),\pi))$, but as this is essentially 
    due to Handelman and Rossmann, we only provide the main ingredients in an 
    ``elementary'' proof (using only Lie theory and general facts about ordered 
    groups and rings).

    In section \ref{section:recovering}, we show that if $\mathcal{C} = \mathbf{Rep}_k(\mathrm{SU}(2))$, where $k-1\notin 3\mathbb{Z}$ or $k\in 2\mathbb{Z}$, then one can find an object $\pi$ in $\mathcal{C}$ (or, more precisely, an explicit isomorphism class) such that $K_0(A(\mathcal{C},\pi))$ and $\mathrm{Ver}_k(\mathrm{SU}(2))$ are isomorphic as ordered rings.
    Finally, concluding remarks and open questions may be found in section \ref{section:conclusion}.

    We end the present section with a comment on notation. We will use the symbol $\mathbb{N}$ to denote the set of (strictly) positive integers. The set of non-negative integers will be denoted by $\mathbb{N}_0$.
    Given a set $X$ and $r\in\mathbb{N}$, we will denote the Cartesian product $X\times X\times\cdots\times X$ ($r$ factors) by $X^{\times r}$.

\vspace{5mm}
\noindent{\it Acknowledgements.}
We would like to thank the following entities for their generous hospitality during the writing of the present paper: The Isaac Newton Institute for Mathematical Sciences in Cambridge, England, during the research program {\it Operator Algebras: Subfactors and their Applications} in the spring of 2017; the Hausdorff Research Institute for Mathematics in Bonn, Germany, during the trimester program {\it von Neumann Algebras} in the summer of 2016; and the Dublin Institute for Advanced Studies in Dublin, Ireland, during a research visit in December 2017.

We would also like to thank the organizers of the conference {\it Young Mathematicians in C*-Algebras 2018} (YMC*A 2018) in Leuven, Belgium, the organizers of the Operator Algebra Seminar at the University of Copenhagen, and the organizers of the Analysis Seminar at Glasgow University for giving the first named author opportunities to present our work.

This research was supported by Engineering and Physical Sciences Research Council (EPSRC) grants EP/K032208/1 and EP/N022432/1.

    \section{Preliminaries}\label{section:preliminaries}

    \subsection{C*-tensor categories}\label{subsection:categories}

    In this paper, we will consider examples of so-called \emph{rigid C*-tensor categories}. A complete definition of such a category, as well as some historical background for this definition, can e.g.\ be found in Chapter 2 of \cite{NT}. (See also e.g.\ \cite{EW}, \cite{BHP}, \cite{PV}.) Here, we content ourselves with an outline of the main features of such a category. (In particular, we are sweeping under the rug the question of strictness of such a category, which is e.g.\ covered in \cite{NT}.) As we go over these features, the reader should keep in mind the following simple examples.
     \begin{itemize}
        \item The category $\mathbf{Hilb}$, whose objects are finite-dimensional Hilbert spaces and whose morphisms are linear maps.
        \item The category $\mathbf{Rep}(G)$, whose objects are finite-dimensional unitary representations of a finite group $G$ and whose morphisms are intertwining linear maps.
     \end{itemize}
For the convenience of the reader, we give a reminder about conjugation in these categories (which is one of the features of a general rigid C*-tensor category).
Given a Hilbert space $H$, the \emph{conjugate Hilbert space} $\overline{H}$ is defined, as a set, by $\overline{H} = \{\bar{\xi}\,:\,\xi\in H\}$. Its Hilbert space operations are defined by $\bar{\xi}+\bar{\eta} = \overline{\xi+\eta}$, $z\bar{\xi} = \overline{\bar{z}\xi}$ and $\langle \bar{\xi},\bar{\eta}\rangle = \langle \eta,\xi\rangle$ for $\xi,\eta\in H$ and $z\in\mathbb{C}$. Given a linear map $T\colon H\to K$, where $K$ is a Hilbert space, the \emph{conjugate operator} $\overline{T}\colon \overline{H}\to\overline{K}$ is defined by $\overline{T}(\bar{\xi}) = \overline{T(\xi)}$ for $\xi\in H$.
Given a unitary representation $\pi\colon G\to\mathcal{U}(H)$, its \emph{conjugate} (or \emph{dual}) \emph{representation} is the unitary representation $\bar{\pi}\colon G\to \mathcal{U}(\overline{H})$ defined by $\bar{\pi}(g) = \overline{\pi(g)}$ for $g\in G$. It is an easy exercise to check that, indeed, $\overline{H}$ is a Hilbert space, $\overline{T}$ is a linear map, and $\bar{\pi}$ is a unitary representation, and that if $T$ intertwines two unitary representations $\pi$ and $\rho$ then $\overline{T}$ intertwines $\bar{\pi}$ and $\bar{\rho}$.

Frobenius Reciprocity, which is mentioned below as one of the features of a rigid C*-tensor category, is a classical fact for representations of finite groups.

    \subsubsection{Features of a rigid C*-tensor category}

    A \emph{rigid C*-tensor category} $\mathcal{C}$ has the following features (where $\pi$ and $\rho$ always denote arbitrary objects of $\mathcal{C}$):

    \begin{itemize}
        \item Each morphism set $\mathrm{Mor}(\pi,\rho)$ is a (complex) Banach space.
        \item There is an involutive $*$-operation on morphisms, mapping a morphism $T\colon \pi\to\rho$ to a morphism $T^*\colon \rho\to\pi$, such that the \emph{C*-identity}
        $$
            \|T^*T\| = \|T\|^2
        $$
        holds for any morphism $T\colon \pi\to\rho$.

        Accordingly, two objects $\pi$ and $\rho$ in $\mathcal{C}$ are said to be (unitarily) \emph{isomorphic} if there exists a morphism $u\colon\pi\to\rho$ such that $u^*u = \mathrm{id}_\pi$ and $uu^* = \mathrm{id}_\rho$.

        \item One can take \emph{direct sums} of objects and of morphisms, and there is a distinguished object $\boldsymbol{0}$, which is a \emph{zero object}. In particular, $\boldsymbol{0}\oplus \pi\cong \pi\oplus\boldsymbol{0}\cong \pi$.
        \item One can take \emph{tensor products} of objects and of morphisms, and there is a distinguished object $\boldsymbol{1}$, which is a \emph{tensor unit}. In particular, $\boldsymbol{1}\otimes \pi\cong \pi\otimes\boldsymbol{1}\cong \pi$.
        \item Every object $\pi$ (resp.\ morphism $T$) has a \emph{conjugate} $\bar{\pi}$ (resp.\ $\overline{T}$).
        \item \emph{Semisimplicity}: Every object is a (finite) direct sum of simple objects. (By definition, an object $\pi$ is \emph{simple} if $\mathrm{End}(\pi)\df\mathrm{Mor}(\pi,\pi)\cong\mathbb{C}$.)

        \item The tensor unit $\boldsymbol{1}$ is a simple object.
        \item Each endomorphism space $\mathrm{End}(\pi)$ is a finite-dimensional unital C*-algebra with a canonical faithful positive trace $\mathrm{Tr}_\pi$ that satisfies $\mathrm{Tr}_{\pi\otimes\rho}(T_1\otimes T_2) = \mathrm{Tr}_\pi(T_1)\mathrm{Tr}_\rho(T_2)$ for all $T_1\in\mathrm{End}(\pi)$ and $T_2\in\mathrm{End}(\rho)$. (These traces were constructed in this general setting by Longo and Roberts \cite{LR}.)

            Moreover, the following is true. Denote by $\Lambda$ the set of isomorphism classes of simple objects in $\mathcal{C}$ and choose a representative $\pi_\lambda$ in $\lambda$ for each $\lambda\in\Lambda$. If $\pi\cong \bigoplus_{\lambda\in\Lambda} \pi_\lambda^{\oplus N_\pi^\lambda}$ for some multiplicities $N_\pi^\lambda\in\mathbb{N}_0$ then
            $$
                \mathrm{End}(\pi)\cong\bigoplus_{\lambda\in\Lambda} M_{N_\pi^\lambda}(\mathbb{C})
            $$
            as C*-algebras and $N_\pi^\lambda = \dim_{\mathbb{C}}\mathrm{Mor}(\pi_\lambda,\pi)$ for all $\lambda\in\Lambda$.
        \item Each object $\pi$ has a \emph{quantum dimension} $d(\pi) = \mathrm{Tr}_\pi(\mathrm{id}_\pi)$, which satisfies $d(\pi\otimes\rho) = d(\pi)d(\rho)$ and $d(\pi\oplus\rho) = d(\pi)+d(\rho)$ (cf.\ \cite{LR}).
        \item  \emph{Frobenius Reciprocity}, that is, certain isomorphisms\begin{gather*}\mathrm{Mor}(\pi\otimes\rho,\psi)\cong \mathrm{Mor}(\pi,\psi\otimes\bar{\rho})\cong \mathrm{Mor}(\rho,\bar{\pi}\otimes\psi).\end{gather*} In particular, $\boldsymbol{1}$ is always a direct summand of $\pi\otimes\bar{\pi}$.
    \end{itemize}
    Moreover, the operations and objects $^*$, $\bar{{\color{white}\cdot}}$, $\otimes$, $\oplus$, $\boldsymbol{0}$ and $\boldsymbol{1}$ satisfy certain natural compatibility conditions, which we will not state here.

    \begin{example}
        In addition to the examples that we mentioned above, the category $\mathbf{Rep}(G)$ of continuous finite-dimensional unitary representations of a compact (quantum) group $G$ is also a rigid C*-tensor category (cf.\ e.g.\ \cite{NT}).
    \end{example}

    Let us end this section by stating a fact that we will use later.

    \begin{remark}\label{remark:projections}
        Let $\pi$ be an object in a rigid C*-tensor category $\mathcal{C}$. Then
        the unitary equivalence classes of projections in the finite-dimensional C*-algebra $\mathrm{End}(\pi)$ are indexed by the (isomorphism classes of) direct summands $\rho$ of $\pi$. More precisely, if we denote the class corresponding to $\rho$ by $X_\rho$ then a projection $p\in \mathrm{End}(\pi)$ belongs to $X_\rho$ if and only if there exists a morphism $v\colon \rho\to\pi$ such that $v^*v = \mathrm{id}_\rho$ and $vv^* = p$. This is essentially the statement that $\mathcal{C}$ is a C*-category that admits direct sums and is semisimple.
    \end{remark}

    \subsubsection{Braided C*-tensor categories}

    The examples that we are most interested in have an additional feature, namely a unitary braiding, which we proceed to define following \cite{EW}.

    A \emph{unitary braiding} $c_{-,-}$ on a (strict) rigid C*-tensor category $\mathcal{C}$ is an assignment of an isomorphism
		$$
			c_{\pi,\rho}\colon \pi\otimes\rho\longrightarrow \rho\otimes\pi
		$$
		to every pair $(\pi,\rho)$ of objects in $\mathcal{C}$, which
		\begin{itemize}
			\item is \emph{natural}, i.e., given morphisms $T_j\colon \pi_j\to \rho_j$ ($j=1,2$), we have that $$(T_2\otimes T_1)\circ c_{\pi_1,\rho_1} =c_{\pi_2,\rho_2}\circ (T_1\otimes T_2);$$
			\item satisfies the \emph{hexagon identities}, i.e.,
			\begin{align*}
                (\id_{\rho}\otimes c_{\pi,\sigma})\circ (c_{\pi,\rho}\otimes\id_\sigma) &= c_{\pi,\rho\otimes\sigma},\\
                (c_{\pi,\sigma}\otimes\id_\rho)\circ(\id_\pi\otimes c_{\rho,\sigma}) &= c_{\pi\otimes\rho,\sigma};
            \end{align*}
			\item is \emph{unitary}, i.e., \begin{align*}
c_{\pi,\rho}^*\circ c_{\pi,\rho} &= \id_{\pi\otimes\rho},\\ c_{\pi,\rho}\circ c_{\pi,\rho}^* &= \id_{\rho\otimes\pi}.
\end{align*}
		\end{itemize}
    Following \cite{EW}, we call a rigid C*-tensor category with a distinguished unitary braiding a \emph{braided C*-tensor category}.

    \begin{example}\label{example:rep_braiding}
        The categories $\mathbf{Hilb}$ and $\mathbf{Rep}(G)$, where $G$ is a (genuine, i.e., not ``quantum'') compact group, have a unitary braiding. Namely, to a pair $(\pi,\rho)$ of unitary representations, $\pi\colon G\to\mathcal{U}(H)$ and $\rho\colon G\to\mathcal{U}(K)$, one associates the isomorphism $c_{\pi,\rho}\colon H\otimes K\to K\otimes H$ defined by $\xi\otimes\eta\mapsto \eta\otimes\xi$. This braiding is \emph{symmetric} (or \emph{trivial}) in the sense that $c_{\rho,\pi}\circ c_{\pi,\rho} = \id_{\pi\otimes\rho}$. (The categories $\mathbf{Rep}(G)$, where $G$ ranges over all compact groups, were shown to be characterized by the existence of such a braiding by Doplicher and Roberts in \cite{DR}.)
    \end{example}

    We are interested in certain rigid C*-tensor categories that have finitely many (isomorphism classes of) simple objects (i.e., they are what are often called \emph{fusion categories}) as well as an asymmetric, even non-degenerate, unitary braiding (i.e., they are what are often called \emph{unitary modular tensor categories}). Non-degeneracy of a braiding is defined as follows.
        To a rigid C*-tensor category $\mathcal{C}$ with a finite set $\Lambda$ of simple objects and a unitary braiding $c_{-,-}$, one can associate a matrix $S\in M_{\Lambda}(\mathbb{C})$, called the \emph{modular $S$-matrix}, by setting
        $$
            S_{\mu,\nu} = \mathrm{Tr}_{\mu\otimes\nu}(c_{\nu,\mu}\circ c_{\mu,\nu})
        $$
        for $\mu,\nu\in\Lambda$. The braiding $c_{-,-}$ is said to be \emph{non-degenerate}, and the category $\mathcal{C}$ to be \emph{modular}, if $S$ is an invertible matrix. (See e.g.\ \cite{Ga2} for more information.)

    \begin{example}
        For $\mathcal{C} = \mathbf{Rep}(G)$, where $G$ is a finite group, the modular $S$-matrix is given by $$S_{\mu,\nu} = \mathrm{Tr}_{\mu\otimes\nu}(\id_{\mu\otimes\nu}) = d(\mu\otimes\nu) = d(\mu)d(\nu),$$ whereby it is of rank one. Thus, this category is not modular unless $G$ is the trivial group, in which case the category in question is $\mathbf{Hilb}$.
    \end{example}

    \begin{example}\label{example:RepkG}
    It is rather more difficult to construct examples ($\neq \mathbf{Hilb}$) of modular braided C*-tensor categories. One class of examples arises from WZW models in 2d conformal field theory or, in other words, as categories of integrable highest-weight modules over certain affine Lie algebras or vertex operator algebras. (See e.g.\ \cite{HL} and the references therein.) These categories, which we will denote by $\mathbf{Rep}_k(G)$, are parameterized by pairs $(G,k)$, where $G$ is a simple, connected, simply connected, compact Lie group (such as $\mathrm{SU}(n)$ for $n\geq 2$) and $k$ is a positive integer that is known as the ``level''. We will use the symbol $G_k$ to denote the quantum group (see below) or WZW model that underlies $\mathbf{Rep}_k(G)$.

    The category $\mathbf{Rep}_k(G)$ can also be realized using level $k$ positive energy representations of the loop group $LG$ of $G$, which is the group of smooth maps $S^1\to G$ under pointwise multiplication. Note, however, that the tensor product in $\mathbf{Rep}_k(G)$ does not arise from the usual tensor product of representations
    (see \cite{PS}, \cite{Wa2}).
    Moreover, the category $\mathbf{Rep}_k(G)$ also arises from the representation theory of a certain quantum group $U_q(G)$ at a root of unity $q$, which is a Hopf algebra associated to $G$
    (see \cite{We2}). For an overview of the various realizations of $\mathbf{Rep}_k(G)$, we refer the reader to \cite{He}.

    Below, we will explain relevant aspects of these categories, but we will not provide the reader with a detailed construction. For this, we direct the reader to the references in \cite{He}. (See also Remark \ref{remark:history} below.)
    \end{example}

    \subsection{Fusion rings}\label{subsection:fusion_rings}

        Let $\mathcal{C}$ be a rigid C*-tensor category. We retain the notation of the previous section. In particular, $\Lambda$ is the set of (isomorphism classes of) simple objects in $\mathcal{C}$. Given an object $\pi$ in $\mathcal{C}$ and $\mu,\nu\in\Lambda$, we have that
	$$
		\pi\otimes \pi_\nu \cong \bigoplus_{\mu\in\Lambda} \pi_\mu^{\oplus N_{\pi,\nu}^\mu}
	$$
	for certain multiplicities $N_{\pi,\nu}^\mu = N_{\pi\otimes\nu}^\mu$, which are called the \emph{fusion rules} of the category and are organized into \emph{fusion matrices} $N_{\pi} = (N_{\pi,\nu}^\mu)_{\nu,\mu}\in M_{\Lambda}(\mathbb{C})$. It is sometimes useful to consider the \emph{fusion graph} $\Gamma_\pi$ of $\mathcal{C}$ with respect to $\pi$, whose vertex set is $\Lambda$ and whose adjacency matrix is $N_\pi$. In other words, in $\Gamma_\pi$ there are $N_{\pi,\nu}^\mu$ edges from the vertex $\nu\in\Lambda$ to the vertex $\mu\in\Lambda$.

    \begin{remark}\label{remark:PF}
        If $\Lambda$ is a finite set then the quantum dimension $d(\pi)$ of 
        $\pi$ is precisely $\|N_\pi\|$, i.e., the Perron--Frobenius eigenvalue 
        of $N_\pi$. The corresponding Perron--Frobenius eigenvector is 
        $(d(\mu))_{\mu\in\Lambda}$. (See e.g.\ \cite{Ga2} for more information.)
    \end{remark}

    \noindent The \emph{fusion ring} $F_{\mathcal{C}}$ of $\mathcal{C}$ is 
    defined as follows. As a group, it is the free abelian group 
    $\mathbb{Z}^{(\Lambda)} = \bigoplus_{\mu\in\Lambda} \mathbb{Z}\mu$, which 
    we turn into an associative unital ring by imposing the product
    $$
        \lambda\cdot\nu = \sum_{\mu\in\Lambda} N_{\lambda,\nu}^\mu\mu
    $$
    for $\lambda,\nu\in\Lambda$. (See e.g.\ \cite{F}, \cite{Ga2} for information on fusion rings.)

    Throughout this paper, we restrict attention to categories whose fusion 
    ring is commutative. Braided C*-tensor categories of course have this 
    property. However, there are many examples of rigid C*-tensor categories 
    that have a commutative fusion ring but no unitary braiding. For instance, 
    as mentioned in \cite{EW}, Drinfeld--Jimbo quantized universal enveloping 
    algebras $U_q(G)$ with $q>1$ are such examples. Moreover, it is easy to 
    find examples of rigid C*-tensor categories whose fusion ring is 
    non-commutative. For instance, given a non-abelian discrete group $\Gamma$, 
    the category $\mathbf{Rep}(\hat{\Gamma})$ is such an example. (Here, 
    $\hat{\Gamma}$ is the compact quantum group $(C_r^*(\Gamma),\Delta)$, where 
    $C_r^*(\Gamma)$ is the reduced group C*-algebra of $\Gamma$ and the 
    co-multiplication $\Delta$ is defined by $u_g\mapsto u_g\otimes u_g$ for 
    $g\in\Gamma$, where $\{u_g\}_{g\in\Gamma}$ are the canonical unitary 
    elements in $C_r^*(\Gamma)$.)

    The fusion ring of the category $\mathbf{Rep}_k(G)$ is called the 
    \emph{Verlinde ring} of $G$ at level $k$ (after Verlinde, who studied these 
    rings in \cite{Ver}) and we will denote it by $\mathrm{Ver}_k(G)$. (Here, 
    and for the remainder of this section, $G$ denotes a simple, connected, 
    simply connected, compact Lie group.) The fusion rules of 
    $\mathbf{Rep}_k(G)$ may be viewed as a truncation of the fusion rules of 
    $\mathbf{Rep}(G)$. They are given by the Ka\v{c}--Walton Formula (cf.\ 
    e.g.\ \cite{Ga2} and the references therein).

    It is a classical fact that the simple objects $\{\pi_{\vec\lambda}\}_{\vec\lambda}$ in the category $\mathbf{Rep}(G)$ are naturally parameterized by the set $\mathbb{N}_0^{\times r}$ of \emph{Dynkin labels} $\vec\lambda = (\lambda_1,\ldots,\lambda_r)$, where $r$ is the rank of $G$, in such a way that the zero vector $\vec 0$ corresponds to the trivial representation and the standard basis vectors $\vec{e}_j$ correspond to the fundamental representations. Meanwhile, the simple objects $\{\pi_{\vec\lambda}\}_{\vec\lambda}$ in the category $\mathbf{Rep}_k(G)$ are naturally indexed by a certain finite subset of $\mathbb{N}_0^{\times r}$ corresponding to integrable highest weights at level $k$. Explicitly, the Dynkin labels $\vec\lambda$ that correspond to simple objects in $\mathbf{Rep}_k(G)$ are exactly those for which the ``level'' $\ell(\vec\lambda) \df \sum_{j=1}^r a_j^\vee\lambda_j$ of $\vec\lambda$ is at most $k$. Here, $a_1^\vee,\ldots,a_r^\vee$ are the \emph{colabels} of the group $G$, which are e.g.\ explicitly given in \cite{Ga}.

    \begin{example}
        In the category $\mathbf{Rep}(\mathrm{SU}(2))$, the simple objects are indexed by the set $\Lambda = \mathbb{N}_0$ and the fusion rules are given by
        $$
            \pi_i\otimes\pi_j\cong \pi_{|i-j|}\oplus\pi_{|i-j|+2}\oplus\cdots\oplus \pi_{i+j}
        $$
        for $i,j\in\mathbb{N}_0$. Meanwhile, in the category $\mathbf{Rep}_k(\mathrm{SU}(2))$, the simple objects are indexed by the set $\Lambda = \{0,1,\ldots,k\}$ and the fusion rules are given by
    	$$
    		\pi_i\otimes\pi_j\cong    \begin{cases}
    		                              \pi_{|i-j|}\oplus\cdots\oplus\pi_{i+j}&\text{ if }i+j\leq k,\\
    		                              \pi_{|i-j|}\oplus\cdots\oplus\pi_{2k-(i+j)}&\text{ if }i+j > k,
    		                          \end{cases}
    	$$
        for $0\leq i,j\leq k$ (as shown by Gepner and Witten in \cite{GeW}).
    \end{example}

\begin{figure}
\centering
\resizebox{0.55\textwidth}{!}{%
\includegraphics{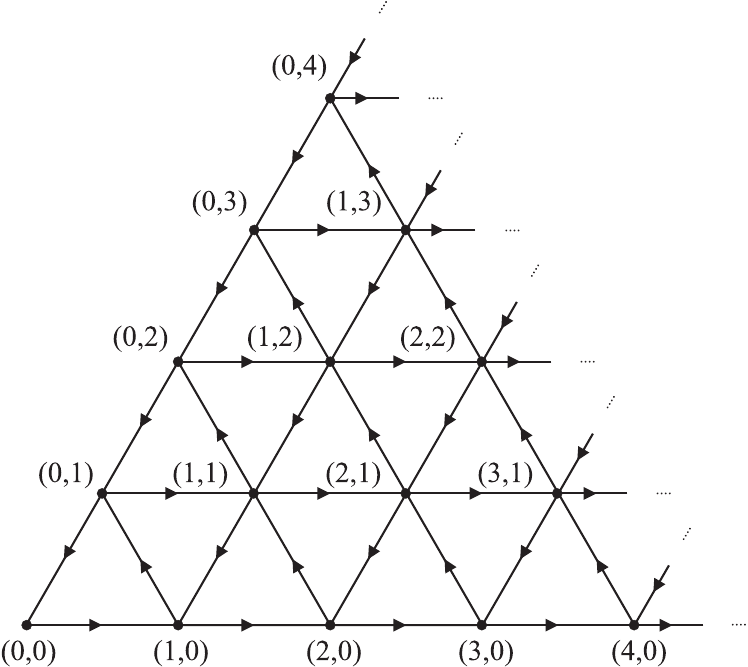}
}
\caption{\footnotesize The fusion graph of $\mathbf{Rep}(\mathrm{SU}(3))$ with
respect to $\pi_{(1,0)}$.}
\label{fig:4}
\end{figure}

    \begin{example}\label{example:SU3_fusion_rules}
        In the category $\mathbf{Rep}(\mathrm{SU}(3))$, the simple objects are indexed by the set $\Lambda = \mathbb{N}_0^{\times 2}$ and the fusion rules are determined by the fusion graph $\Gamma_{\pi_{(1,0)}}$ of $\mathbf{Rep}(\mathrm{SU}(3))$ with respect to the fundamental representation $\pi_{(1,0)}$, which is shown in figure \ref{fig:4}. Note that the fusion graph of $\mathbf{Rep}(\mathrm{SU}(3))$ with respect to $\pi_{(0,1)}=\bar{\pi}_{(1,0)}$ is obtained from $\Gamma_{\pi_{(1,0)}}$ by reversing all of the edges.

        In the category $\mathbf{Rep}_k(\mathrm{SU}(3))$, the simple objects are indexed by the set $\Lambda = \{\vec\lambda\in\mathbb{N}_0^{\times 2}\,:\,\ell(\vec\lambda)\leq k\}$, where $\ell(\vec\lambda) = \lambda_1+\lambda_2$ for $\vec\lambda = (\lambda_1,\lambda_2)\in\mathbb{N}_0^{\times 2}$. The fusion rules are determined by the fusion graph $\Gamma_{\pi_{(1,0)}}$ of $\mathbf{Rep}_k(\mathrm{SU}(3))$ with respect to the simple object $\pi_{(1,0)}$, which is obtained from the graph in figure \ref{fig:4} by discarding all vertices $\vec\lambda$ for which $\ell(\vec\lambda)>k$ (as well as any edges that are connected to them). Again, the fusion graph of $\mathbf{Rep}_k(\mathrm{SU}(3))$ with respect to $\pi_{(0,1)}=\bar{\pi}_{(1,0)}$ is obtained from $\Gamma_{\pi_{(1,0)}}$ by reversing all edges. The fusion rules of $\mathbf{Rep}_k(\mathrm{SU}(3))$ were explicitly calculated at all ``levels'' $k$ in \cite{BMW}.
    \end{example}

    In our $K$-theory computations, we will use a result of Gepner \cite{G}, which was transferred from the setting of $\mathrm{Ver}_k(G)\otimes\mathbb{C}$ (the \emph{Verlinde algebra}) to the setting of $\mathrm{Ver}_k(G)$ by Fuchs \cite{F}. To state it, we first recall some facts and introduce some terminology and notation. It is a classical theorem that when $\mathcal{C} = \mathbf{Rep}(G)$, in which case $F_{\mathcal{C}}$ is the \emph{representation ring} $R(G)$ of $G$, there is a ring isomorphism
    $
        F_{\mathcal{C}} \to \mathbb{Z}[x_1,\ldots,x_r]
    $
    that maps (the isomorphism class of) the $j$'th fundamental representation to the variable $x_j$ and the trivial representation to $1$. (As above, $r$ is the rank of $G$.) We will denote the image of $\pi_{\vec\lambda}$ under this isomorphism by $Q_{\vec\lambda}^G(x_1,\ldots,x_r)$.
    The \emph{fusion variety} $V$ associated to the category $\mathbf{Rep}_k(G)$ is the subset
    $$
        V = \{(x_{\vec\lambda}^{(1)},\ldots,x_{\vec\lambda}^{(r)})\,:\,\ell(\vec\lambda)\leq k\}
    $$
    of $\mathbb{C}^r$. In this formula, 
    $x_{\vec\lambda}^{(j)}=S_{\vec{e}_j,\vec\lambda}/S_{\vec 0,\vec\lambda}$ 
    for $j=1,\ldots,r$, where $S$ is the modular $S$-matrix for 
    $\mathbf{Rep}_k(G)$, which is given by the Ka\v{c}--Peterson Formula 
    \cite{KP}. (The formula is e.g.\ spelled out for each $G$ in \cite{Ga} and 
    for $G = \mathrm{SU}(3)$ in \cite{EP1}, but note that the formula for the 
    modular $S$-matrix on page 400 of \cite{EP1} contains a typographical 
    error: The `$+$' in the second term should be `$-$'.)
The \emph{fusion ideal} $J_k(G)$ associated to the category $\mathbf{Rep}_k(G)$ is the ideal
$$
    J_k(G) = \{p(x_1,\ldots,x_r)\in\mathbb{Z}[x_1,\ldots,x_r]\,:\,p(\vec{z}) = 0\text{ for all }\vec{z}\in V\}
$$
of $\mathbb{Z}[x_1,\ldots,x_r]$. In particular, $(x_{\vec 0}^{(1)},\ldots,x_{\vec 0}^{(r)}) = (S_{\vec{e}_1,\vec 0}/S_{\vec 0,\vec 0},\ldots,S_{\vec{e}_r,\vec 0}/S_{\vec 0,\vec 0}) = (d(\pi_{\vec{e}_1}),\ldots,d(\pi_{\vec{e}_r}))$ (cf.\ e.g.\ \cite{Ga2}) is a common zero of the polynomials in $J_k(G)$. (As usual, $d$ denotes the quantum dimension.)

    \begin{theorem}[Gepner, Fuchs]\label{theorem:Gepner}
The fusion ring of $\mathbf{Rep}_k(G)$, that is, the Verlinde ring $\mathrm{Ver}_k(G)$, is isomorphic to
$
    \mathbb{Z}[x_1,\ldots,x_r]/J_k(G)
$
in such a way that the simple object $\pi_{\vec\lambda}$ corresponds to the coset $[Q^{G}_{\vec\lambda}(x_1,\ldots,x_r)]$.
\end{theorem}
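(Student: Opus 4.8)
The plan is to realize $\mathrm{Ver}_k(G)$ as a quotient of the representation ring $R(G) = F_{\mathbf{Rep}(G)} = \mathbb{Z}[x_1,\dots,x_r]$ and then to identify the kernel of the quotient map with $J_k(G)$, using the modular $S$-matrix to diagonalize the fusion rules. By the Ka\v{c}--Walton formula the truncation of the fusion rules of $\mathbf{Rep}(G)$ realizes $\mathrm{Ver}_k(G)$ as a quotient \emph{ring} of $R(G)$; write $q\colon R(G)\to\mathrm{Ver}_k(G)$ for the resulting surjective ring homomorphism, so that $q\bigl(Q^G_{\vec\lambda}\bigr) = [\pi_{\vec\lambda}]$ whenever $\ell(\vec\lambda)\le k$. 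Once we have shown that $\ker q = J_k(G)$, the induced ring isomorphism $R(G)/J_k(G)\xrightarrow{\ \sim\ }\mathrm{Ver}_k(G)$ automatically carries $[Q^G_{\vec\lambda}]$ to $[\pi_{\vec\lambda}]$ for $\ell(\vec\lambda)\le k$, which is the asserted compatibility. (I tacitly assume $k$ is large enough that every fundamental weight $\vec e_j$ lies at level $\le k$, so that $q(x_j)=[\pi_{\vec e_j}]$; the remaining small levels require only cosmetic changes, using the affine Weyl symmetry of the $S$-matrix.)

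The essential input is the Verlinde formula, available because $\mathbf{Rep}_k(G)$ is modular, i.e.\ $S$ is invertible. It says that the fusion matrices $N_{\vec\lambda}$ are simultaneously diagonalized by $S$, the $\vec\mu$-th eigenvalue of $N_{\vec\lambda}$ being $S_{\vec\lambda,\vec\mu}/S_{\vec 0,\vec\mu}$. Since fusion product corresponds to multiplication of fusion matrices, it follows that for each $\vec\mu\in\Lambda$ the assignment $[\pi_{\vec\lambda}]\mapsto S_{\vec\lambda,\vec\mu}/S_{\vec 0,\vec\mu}$ extends to a ring homomorphism (a ``character'') $\chi_{\vec\mu}\colon\mathrm{Ver}_k(G)\to\mathbb{C}$. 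Invertibility of $S$ also forces its columns to be linearly independent, so no two of the $|\Lambda|$ characters $\chi_{\vec\mu}$ coincide; being pairwise distinct characters of a commutative ring, they are linearly independent as functionals on $\mathrm{Ver}_k(G)\otimes_{\mathbb Z}\mathbb{C}$, which has dimension $|\Lambda|$, and therefore $a\mapsto(\chi_{\vec\mu}(a))_{\vec\mu\in\Lambda}$ is an isomorphism of $\mathbb{C}$-algebras $\mathrm{Ver}_k(G)\otimes_{\mathbb Z}\mathbb{C}\xrightarrow{\ \sim\ }\mathbb{C}^{\Lambda}$. In particular the $\chi_{\vec\mu}$ separate the points of $\mathrm{Ver}_k(G)$.

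Next, pull the characters back along $q$. Each composite $\chi_{\vec\mu}\circ q\colon R(G)\to\mathbb{C}$ is a ring homomorphism out of a polynomial ring, hence equals evaluation at the point whose $j$-th coordinate is $\chi_{\vec\mu}(q(x_j)) = \chi_{\vec\mu}([\pi_{\vec e_j}]) = S_{\vec e_j,\vec\mu}/S_{\vec 0,\vec\mu} = x^{(j)}_{\vec\mu}$; that is, $\chi_{\vec\mu}\circ q$ is evaluation at $\vec z_{\vec\mu} := (x^{(1)}_{\vec\mu},\dots,x^{(r)}_{\vec\mu})\in V$. Hence, if $p\in J_k(G)$ then $\chi_{\vec\mu}(q(p)) = p(\vec z_{\vec\mu}) = 0$ for every $\vec\mu\in\Lambda$, so $q(p)=0$ because the characters separate points; thus $J_k(G)\subseteq\ker q$. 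For the reverse inclusion --- the step where one must genuinely work over $\mathbb{Z}$ rather than over $\mathbb{C}$ --- observe that $J_k(G)$ is, by definition, the kernel of the evaluation map $R(G)\to\mathbb{C}^{V}$, so $R(G)/J_k(G)$ embeds into $\mathbb{C}^{V}$ and is in particular a torsion-free abelian group of rank $\le|V|$. Distinctness of the characters $\chi_{\vec\mu}$ forces the points $\vec z_{\vec\mu}$ ($\vec\mu\in\Lambda$) to be pairwise distinct, so $|V| = |\Lambda|$; meanwhile $q$ descends to a surjection $R(G)/J_k(G)\twoheadrightarrow\mathrm{Ver}_k(G)$ onto a free abelian group of rank $|\Lambda|$, whence $R(G)/J_k(G)$ has rank exactly $|\Lambda|$. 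Now a surjection from a torsion-free abelian group of finite rank $n$ onto a free abelian group of rank $n$ is an isomorphism: tensoring with $\mathbb{Q}$ gives a surjection of $n$-dimensional $\mathbb{Q}$-vector spaces, hence an isomorphism, so the kernel is torsion and therefore trivial. Applying this to $\bar q\colon R(G)/J_k(G)\to\mathrm{Ver}_k(G)$ gives $\ker q = J_k(G)$, completing the argument.

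The main obstacle is precisely this last descent from $\mathbb{C}$ to $\mathbb{Z}$. Over $\mathbb{C}$ the Verlinde formula already yields $\mathrm{Ver}_k(G)\otimes\mathbb{C}\cong\mathbb{C}^{\Lambda}$, the coordinate ring of the finite point set $V$, essentially for free --- this is Gepner's result --- but pinning down the \emph{integral} ring requires the torsion-freeness and rank bookkeeping above. That bookkeeping goes through cleanly only because $J_k(G)$ has been \emph{defined} here to be the full vanishing ideal of $V$; had $J_k(G)$ instead been specified by an explicit finite generating set (e.g.\ as a Jacobian-type ideal of a ``potential''), one would additionally have to prove that those generators cut out the entire vanishing ideal over $\mathbb{Z}$, which is the substance of Fuchs' refinement of Gepner's theorem and is not needed for the statement as phrased here.
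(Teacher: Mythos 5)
The paper does not actually prove this theorem: it is quoted from Gepner \cite{G} and Fuchs \cite{F} (the latter supplying the passage from $\mathrm{Ver}_k(G)\otimes\mathbb{C}$ to the integral ring), so there is no in-paper argument to compare yours against. Your reconstruction rests on the two standard inputs --- that the Ka\v{c}--Walton truncation $q\colon R(G)\to\mathrm{Ver}_k(G)$ is a surjective ring homomorphism with $q(Q^G_{\vec\lambda})=[\pi_{\vec\lambda}]$ on the alcove, and that the Verlinde formula produces $|\Lambda|$ pairwise distinct characters $\chi_{\vec\mu}$ which separate the points of the torsion-free ring $\mathrm{Ver}_k(G)$ --- and, given these, the identification $\chi_{\vec\mu}\circ q=\mathrm{ev}_{\vec z_{\vec\mu}}$ does essentially all the work. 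This is the right skeleton. (The small-level caveat is real, though: for $\mathrm{G}_2$ at level $1$ the fundamental weight $\vec e_2$ has colabel $2$ and is not integrable, so both $q(x_j)=[\pi_{\vec e_j}]$ and the entries $S_{\vec e_j,\vec\lambda}$ entering the definition of $V$ must be interpreted via the affine Weyl symmetry; this deserves more than a parenthesis if the statement is to cover all $k$.)

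The one step that does not hold as written is the rank bound in your reverse inclusion: from the embedding $R(G)/J_k(G)\hookrightarrow\mathbb{C}^V$ you infer that $R(G)/J_k(G)$ is a torsion-free abelian group of rank at most $|V|$. An embedding of an abelian group into $\mathbb{C}^n$ gives torsion-freeness but no rank bound at all: evaluation at a transcendental number embeds all of $\mathbb{Z}[x]$, of infinite rank, into $\mathbb{C}^1$, and even for an algebraic point $\mathbb{Z}[x]/\langle x^2-2\rangle\cong\mathbb{Z}[\sqrt 2]$ embeds into $\mathbb{C}^{\{\sqrt 2\}}$ with rank $2>1$. Repairing that bookkeeping would require knowing that $V$ is stable under the Galois action (true, by the Galois symmetry of modular data, but a further nontrivial input you have not invoked). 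Fortunately the whole detour is unnecessary, because $\ker q\subseteq J_k(G)$ is in fact the immediate direction: if $q(p)=0$ then $p(\vec z_{\vec\mu})=\chi_{\vec\mu}(q(p))=0$ for every $\vec\mu\in\Lambda$, and since $J_k(G)$ is by definition the set of integer polynomials vanishing on $V=\{\vec z_{\vec\mu}\}_{\vec\mu\in\Lambda}$, this already gives $p\in J_k(G)$. Delete the rank argument, keep that two-line computation, and your proof closes.
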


\noindent In \cite{G}, Gepner also proved that
$$
    J_k(\mathrm{SU}(2)) = \langle Q^{\mathrm{SU}(2)}_{k+1}(x_1)\rangle
$$
and
$$
    J_k(\mathrm{SU}(3)) = \langle Q^{\mathrm{SU}(3)}_{(k+1,0)}(x_1,x_2),Q^{\mathrm{SU}(3)}_{(k+2,0)}(x_1,x_2)\rangle,
$$
where, given elements $r_1,\ldots,r_n$ (resp.\ a subset $X$) in a ring $R$, we denote by $\langle r_1,\ldots,r_n\rangle$ (resp.\ $\langle X\rangle$) the (two-sided) ideal generated by the set $\{r_1,\ldots,r_n\}$ (resp.\ $X$).
Below, we will use these facts as well as the formula
$$
    J_k(\mathrm{Sp}(4)) = \langle \{Q^{\mathrm{Sp}(4)}_{\vec\lambda}(x_1,x_2)\,:\,\lambda_1+\lambda_2 = k+1\}\cup\{Q^{\mathrm{Sp}(4)}_{(0,k+2)}(x_1,x_2)\}\rangle,
$$
which follows from the work of Bourdeau--Mlawer--Riggs--Schnitzer \cite{BMRS} 
and Gepner--Schwimmer \cite{GS}, and the formula
$$
    J_k(\mathrm{G}_2) = \langle \{Q^{\mathrm{G}_2}_{\vec\lambda}(x_1,x_2)\,:\,\lambda_1+2\lambda_2 = k+1\}\cup\{Q^{\mathrm{G}_2}_{(k+2,0)}(x_1,x_2)\}\rangle,
$$
which may e.g.\ be deduced from the work of Douglas (cf.\ Theorem 1.1 in \cite{D}). For an in-depth discussion of explicit generating sets for the fusion ideals, we refer the reader to the papers \cite{BR} and \cite{D2}.

    \section{AF-algebras from braided categories}\label{section:AF-algebras}

    \subsection{Construction and basic examples}\label{subsection:construction}

    Let $\mathcal{C}$ be a rigid C*-tensor category and fix an object $\pi$ in $\mathcal{C}$. Put $\sigma = \bar{\pi}\otimes\pi$ and, for each $n\in\mathbb{N}_0$, $A(\mathcal{C},\pi)_n = \mathrm{End}(\sigma^{\otimes n})$. Define, for each $n\in\mathbb{N}_0$, a $*$-homomorphism
    $
        \iota_n\colon A(\mathcal{C},\pi)_n\to A(\mathcal{C},\pi)_{n+1}
    $
    by $\iota_n(T) = T\otimes \mathrm{id}_\sigma$ for $T\colon \sigma^{\otimes n}\to \sigma^{\otimes n}$. Since
    $$
        \mathrm{Tr}_{\sigma^{\otimes (n+1)}}(\iota_n(T)) = \mathrm{Tr}_{\sigma^{\otimes n}}(T)\mathrm{Tr}_\sigma(\mathrm{id}_\sigma) = d(\sigma)\mathrm{Tr}_{\sigma^{\otimes n}}(T)
    $$
    for all $T\colon \sigma^{\otimes n}\to \sigma^{\otimes n}$, it follows that each $\iota_n$ is injective, hence isometric. Thus, we may define the inductive limit C*-algebra
    $$
        A(\mathcal{C},\pi) = \text{$\mathrm{ind}$-$\mathrm{lim}$}_n (A(\mathcal{C},\pi)_n,\iota_n),
    $$
    which is a unital AF-algebra. Note that, up to $*$-isomorphism, $A(\mathcal{C},\pi)$ only depends on the fusion rules for $\pi$, i.e., on the matrix $N_\pi$.

    \begin{example}\label{example:product_type}
        Unsurprisingly, the easiest examples arise from the category $\mathbf{Hilb}$. If $V$ is a Hilbert space of dimension $N$ then $A(\mathbf{Hilb},V)\cong M_{N^\infty}$, where $M_{N^\infty}$ is the infinite tensor product $M_N(\mathbb{C})\otimes M_N(\mathbb{C})\otimes\cdots$, i.e., the UHF-algebra associated to the ``supernatural number'' $N^\infty$. In particular, $A(\mathbf{Hilb},\mathbb{C})\cong\mathbb{C}$ while $A(\mathbf{Hilb},\mathbb{C}^2)$ is $*$-isomorphic to the CAR-algebra.

    To get a slightly more complicated example, let $\pi$ be an object of $\mathbf{Rep}(G)$, where $G$ is a compact group. Suppose that $\pi$ is an $N$-dimensional representation. Then $A(\mathbf{Rep}(G),\pi)$ is $*$-isomorphic to the fixed point algebra $M_{N^\infty}^G$ of the following action of $G$ on $M_{N^\infty}$ by automorphisms. For each $g\in G$, view $\pi(g)$ and $\bar{\pi}(g)$ as unitary matrices in $M_N(\mathbb{C})$. Then $G$ acts via the formula
    $$
        g\cdot (x_1\otimes x_2\otimes \cdots) = \bar{\pi}(g)x_1\bar{\pi}(g)^*\otimes \pi(g)x_2\pi(g)^*\otimes\cdots.
    $$
    As we have mentioned before, such fixed point algebras were much studied by Wassermann \cite{Wa}, Handelman and Rossmann \cite{HR1,HR2}, and Handelman \cite{H1,H2} in the early 1980s, and some of their $K$-theory computations will be presented later in this paper (cf.\ Theorem \ref{theorem:SU2}, Theorem \ref{theorem:SU3} and Corollary \ref{corollary:SO3}).
    \end{example}

    \begin{remark}
        As e.g.\ explained in Remark 8.2 of \cite{PV}, it follows from the work of many people, including Jones \cite{J}, Wenzl \cite{We1,We2}, Popa \cite{Po}, Banica \cite{Ba} and Xu \cite{Xu}, that $A(\mathcal{C},\pi)$ is the inductive limit of a tower of higher relative commutants arising from a certain subfactor associated to the pair $(\mathcal{C},\pi)$.
    \end{remark}

    \subsection{From unitary braidings to $*$-homomorphisms}\label{subsection:homomorphisms}

    We will next, following \cite{EW}, define a $*$-homomorphism $A(\mathcal{C},\pi)\otimes A(\mathcal{C},\pi)\to A(\mathcal{C},\pi)$ that will turn out to induce a multiplication map on $K_0(A(\mathcal{C},\pi))$.

    \subsubsection{Definition of the $*$-homomorphism}

    Assume now moreover that $\mathcal{C}$ is a braided C*-tensor category with unitary braiding $c_{-,-}$. For each $n\in\mathbb{N}$, Erlijman and Wenzl in \cite{EW} defined a $*$-homomorphism
    $$
        \theta_n\colon A(\mathcal{C},\pi)_n\otimes A(\mathcal{C},\pi)_n\longrightarrow A(\mathcal{C},\pi)_{2n}
    $$
    by
    $$
        \theta_n(T_1\otimes T_2) = U_n(T_1\otimes T_2)U_n^*
    $$
    for $T_1,T_2\in A(\mathcal{C},\pi)_n = \mathrm{End}(\sigma^{\otimes n})$, where
    $$
        U_n = \prod_{j=1}^{n-1}\big(\id_\sigma^{\otimes (n-j)}\otimes c_{\sigma,\sigma}^{\otimes j}\otimes\id_\sigma^{\otimes (n-j)}\big)\in\mathcal{U}(A(\mathcal{C},\pi)_{2n}).
    $$
    Using a well-known graphical calculus for braided C*-tensor categories, $\theta_n(T_1\otimes T_2)$ may be depicted as follows when $n=3$.
    \begin{center}
    \resizebox{.55\textwidth}{!}{%
    \begin{tikzpicture}[scale = .5]
        \node() at (-3.5,0) {\Large $\theta_n(T_1\otimes T_2) =$};
        \begin{scope}[ultra thick]
            \draw (0.5,-1.5) rectangle (3.5,1.5);
            \draw (4.5,-1.5) rectangle (7.5,1.5);
            \node() at (2,0) {\Huge $T_1$};
            \node() at (6,0) {\Huge $T_2$};

            \begin{scope}[color=blue,dashed]
                \draw (1,1.5) -- (1,6); \draw (2,1.5) -- (2,4); \draw (3,1.5) -- (3,2);
                \draw (1,-1.5) -- (1,-6); \draw (2,-1.5) -- (2,-4); \draw (3,-1.5) -- (3,-2);

                \draw (3,2) .. controls (3.5,3.5) and (4.5,2.5) .. (5,4);
                \draw (2,4) .. controls (2.25,5.5) and (2.75,4.5) .. (3,6);
                \draw (5,4) .. controls (5.25,5.5) and (5.75,4.5) .. (6,6);

                \draw (3,-2) .. controls (3.5,-3.5) and (4.5,-2.5) .. (5,-4);
                \draw (2,-4) .. controls (2.25,-5.5) and (2.75,-4.5) .. (3,-6);
                \draw (5,-4) .. controls (5.25,-5.5) and (5.75,-4.5) .. (6,-6);
            \end{scope}

            \begin{scope}[color=red]
                \draw (5,1.5) -- (5,2);\draw (6,1.5) -- (6,4); \draw (7,1.5) -- (7,6);
                \draw (5,-1.5) -- (5,-2);\draw (6,-1.5) -- (6,-4); \draw (7,-1.5) -- (7,-6);

                \draw (3,4) .. controls (3.5,2.5) and (4.5,3.5) .. (5,2);
                \draw (2,6) .. controls (2.25,4.5) and (2.75,5.5) .. (3,4);
                \draw (5,6) .. controls (5.25,4.5) and (5.75,5.5) .. (6,4);

                \draw (3,-4) .. controls (3.5,-2.5) and (4.5,-3.5) .. (5,-2);
                \draw (2,-6) .. controls (2.25,-4.5) and (2.75,-5.5) .. (3,-4);
                \draw (5,-6) .. controls (5.25,-4.5) and (5.75,-5.5) .. (6,-4);
            \end{scope}

            \begin{scope}[thin, dashed]
                \draw (0,-2) rectangle (8,2);
                \draw (0,2) rectangle (8,4);
                \draw (0,4) rectangle (8,6);
                \draw (0,-4) rectangle (8,-2);
                \draw (0,-6) rectangle (8,-4);
            \end{scope}		
        \end{scope}
	\end{tikzpicture}	
}
\end{center}
The following lemma was proved in \cite{EW}.
\begin{lemma}[Erlijman--Wenzl]\label{lemma:EW} For each $n\in\mathbb{N}$,
    $$
        \theta_{n+1}\circ (\iota_n\otimes\iota_n) = \iota_{2n+1}\circ\iota_{2n}\circ \theta_n.
    $$
\end{lemma}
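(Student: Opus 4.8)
The plan is to verify the claimed identity
$$
\theta_{n+1}\circ(\iota_n\otimes\iota_n) = \iota_{2n+1}\circ\iota_{2n}\circ\theta_n
$$
by expanding both sides explicitly in terms of the unitaries $U_n$ and the braiding morphisms, and then matching them term by term using naturality of the braiding together with the elementary ``far commutation'' relations among the factors $\id_\sigma^{\otimes a}\otimes c_{\sigma,\sigma}^{\otimes b}\otimes\id_\sigma^{\otimes c}$. Concretely, for $T_1,T_2\in A(\mathcal C,\pi)_n$ the left-hand side applied to $T_1\otimes T_2$ is
$$
U_{n+1}\bigl((T_1\otimes\id_\sigma)\otimes(T_2\otimes\id_\sigma)\bigr)U_{n+1}^*,
$$
while the right-hand side is
$$
U_n(T_1\otimes T_2)U_n^*\otimes\id_\sigma\otimes\id_\sigma .
$$
Since both sides are $*$-homomorphisms and the $\iota$'s and $\theta$'s are all $*$-homomorphisms, it suffices to prove the identity on a set of generators of $A(\mathcal C,\pi)_n\otimes A(\mathcal C,\pi)_n$, i.e.\ on elementary tensors $T_1\otimes T_2$; moreover, by linearity and multiplicativity it is in fact enough to treat the case where $T_1,T_2$ each lie in $\mathrm{Mor}(\sigma^{\otimes n},\sigma^{\otimes n})$ arbitrarily, which is what I do above.

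First I would record the product formula defining $U_{n+1}$,
$$
U_{n+1} = \prod_{j=1}^{n}\bigl(\id_\sigma^{\otimes(n+1-j)}\otimes c_{\sigma,\sigma}^{\otimes j}\otimes\id_\sigma^{\otimes(n+1-j)}\bigr),
$$
and separate out the $j=n$ factor $W\df \id_\sigma\otimes c_{\sigma,\sigma}^{\otimes n}\otimes\id_\sigma$ (acting on the middle $2n$ strands) from the remaining product $V$, which involves only $c_{\sigma,\sigma}^{\otimes j}$ for $j\le n-1$ and hence is supported on the ``inner'' strands; one checks that $V = \id_\sigma\otimes U_n'\otimes\id_\sigma$ for a unitary $U_n'$ closely related to $U_n$ (the precise bookkeeping of which strands are acted upon is the one place care is needed). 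The graphical calculus depicted in the excerpt is the right way to see the geometry: the braid $U_{n+1}$ differs from $U_n\otimes\id_\sigma\otimes\id_\sigma$ by the extra ``half-twist'' $W$ that carries the last strand of the $T_1$-block and the last strand of the $T_2$-block across each other, and $W$ commutes with $T_1\otimes T_2$ precisely because $T_1,T_2$ do not touch those two outermost strands after the inclusions $\iota_n$. I would then use naturality of $c_{-,-}$ in the form $(A\otimes B)\circ c_{\sigma^{\otimes n},\sigma^{\otimes n}} = c_{\sigma^{\otimes n},\sigma^{\otimes n}}\circ(B\otimes A)$ to slide $T_1\otimes T_2$ (with its appended identities) past the relevant braiding factors, and the hexagon identities to reassemble the product of elementary braids into the single crossing $c_{\sigma^{\otimes n},\sigma^{\otimes n}}$ appearing implicitly in $U_n$.

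The main obstacle, and the part deserving the most careful attention, is the index bookkeeping: tracking exactly which tensor factors (strands) each of the roughly $n^2$ elementary braid factors in $U_{n+1}$ acts on, and verifying that after conjugation the extra factor $W$ contributes only a pair of identity morphisms on the two new outermost strands — i.e.\ that $W\bigl((T_1\otimes\id_\sigma)\otimes(T_2\otimes\id_\sigma)\bigr)W^* = U_n(T_1\otimes T_2)U_n^*\otimes\id_\sigma\otimes\id_\sigma$ up to the already-accounted-for inner braiding $V$. Once the supports are correctly identified this reduces to repeated application of (i) the interchange law $(A\otimes B)\circ(C\otimes D) = (A\circ C)\otimes(B\circ D)$, (ii) functoriality of $\otimes$ on morphisms, and (iii) naturality of the braiding, with no nontrivial analytic input beyond the fact (already established above in the construction of $A(\mathcal C,\pi)$) that all the maps in sight are well-defined $*$-homomorphisms between the relevant finite-dimensional C*-algebras. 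I would present the argument largely through the graphical calculus, since the isotopy of braid diagrams makes the term-matching transparent, and relegate the strand-index verification to a short explicit check. This being the lemma of Erlijman–Wenzl cited from \cite{EW}, I expect the write-up to consist essentially of reproducing their diagrammatic argument with the indexing adapted to the present notation.
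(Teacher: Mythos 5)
Your setup is correct (both sides applied to $T_1\otimes T_2$ are as you write them, and it suffices to check elementary tensors), and your overall strategy --- manipulate $U_{n+1}$ algebraically using naturality and the hexagon identities, guided by the graphical calculus --- is exactly the spirit of the paper's proof. But there is a genuine gap in the middle. The decomposition you propose, $U_{n+1}=W\cdot V$ with $W=\id_\sigma\otimes c_{\sigma,\sigma}^{\otimes n}\otimes\id_\sigma$ and $V=\id_\sigma\otimes U_n\otimes\id_\sigma$, is just the defining recursion and is not the factorization that makes the argument close. Your description of $W$ as a single half-twist carrying the last strand of each block across the other is inaccurate: $W$ is a product of $n$ elementary crossings pairing strands $(2,3),(4,5),\dots,(2n,2n+1)$, almost all of which lie \emph{inside} the supports of $T_1$ and $T_2$. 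Consequently the claim that $W$ commutes with $(T_1\otimes\id_\sigma)\otimes(T_2\otimes\id_\sigma)$ is false even for $n=1$ (there, conjugation by $W$ sends $\id\otimes T_2$ to $T_2\otimes\id$ --- it rearranges rather than commutes). Likewise, conjugating first by $V$ does not simplify anything, because the individual crossings in $V$ touch only one strand of the $T_2$ block, so naturality cannot be applied to them one at a time.

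The missing idea, which is the actual content of the paper's proof, is the different factorization
\begin{equation*}
U_{n+1} \;=\; \bigl(U_n\otimes\id_\sigma^{\otimes 2}\bigr)\,\bigl(\id_\sigma^{\otimes n}\otimes c_{\sigma,\sigma^{\otimes n}}\otimes\id_\sigma\bigr),
\end{equation*}
in which the right-hand factor is a single \emph{cabled} crossing of the one new strand past the entire $T_2$ block. This identity is not immediate from the definition; the paper establishes it by induction on $n$, combining the recursion $U_{n+1}=\id_\sigma\otimes\bigl(c_{\sigma,\sigma}^{\otimes n}U_n\bigr)\otimes\id_\sigma$ with the hexagon identity $c_{\sigma,\sigma^{\otimes n}}=(\id_\sigma^{\otimes(n-1)}\otimes c_{\sigma,\sigma})(c_{\sigma,\sigma^{\otimes(n-1)}}\otimes\id_\sigma)$. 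Once it is in hand, one single application of naturality, $c_{\sigma,\sigma^{\otimes n}}(\id_\sigma\otimes T_2)c_{\sigma,\sigma^{\otimes n}}^{-1}=T_2\otimes\id_\sigma$, finishes the proof. Your proposal gestures at ``reassembling the elementary braids into a single crossing'' but never states this identity, defers the entire verification to ``a short explicit check,'' and rests on a commutation claim that is false; as written it would not go through without discovering the factorization above, which is where essentially all the work lies.
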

\noindent In fact, Erlijman and Wenzl used the maps $\theta_n$ to construct new examples of subfactors. In the present paper, we only use the above lemma (and the fact that each $\theta_n$ is isometric) to get an induced $*$-homomorphism
$$
    \theta\colon A(\mathcal{C},\pi)\otimes A(\mathcal{C},\pi)\to A(\mathcal{C},\pi).
$$
For the convenience of the reader, we give an algebraic proof of the lemma.

\begin{proof}[Proof of Lemma \ref{lemma:EW}] (For simplicity, we give the proof in the case where $\mathcal{C}$ is a strict category, but this is not an actual restriction.)
    Note first that
    $$
        c_{\sigma,\sigma^n} = (\id_{\sigma}^{\otimes (n-1)}\otimes c_{\sigma,\sigma})(c_{\sigma,\sigma^{n-1}}\otimes\id_\sigma)
    $$
    for all $n\geq 1$ by the hexagon identities.

    We next claim that
    \begin{equation}\label{equation:braid}
        U_{n+1} = (U_n\otimes\id_\sigma^{\otimes 2})(\id_\sigma^{\otimes n}\otimes c_{\sigma,\sigma^n}\otimes\id_\sigma)
    \end{equation}
    for all $n\geq 1$. The statement is clear for $n = 1$. Assume that the statement holds for $n = k-1$. Then, writing $\id$ for $\id_\sigma$ and using the fact that
    \begin{equation*}
        U_{n+1} = \id\otimes c_{\sigma,\sigma}^{\otimes n}U_n\otimes\id
    \end{equation*}
    for all $n\geq 1$ by definition, we get that
    \begin{align*}
        U_{k+1} &= \id\otimes c_{\sigma,\sigma}^{\otimes k}U_k\otimes\id\\
        &= \id\otimes \bigg(c_{\sigma,\sigma}^{\otimes k}(U_{k-1}\otimes\id^{\otimes 2})(\id^{\otimes (k-1)}\otimes c_{\sigma,\sigma^{k-1}}\otimes\id) \bigg)\otimes\id\\
        &= \id\otimes \bigg(\big((c_{\sigma,\sigma}^{\otimes (k-1)}U_{k-1})\otimes\id^{\otimes 2}\big)\\
        &\qquad\qquad\qquad\circ(\id^{\otimes(2k-2)}\otimes c_{\sigma,\sigma})(\id^{\otimes (k-1)}\otimes c_{\sigma,\sigma^{k-1}}\otimes\id)\bigg)\otimes\id\\
        &=\id\otimes \bigg(\big((c_{\sigma,\sigma}^{\otimes (k-1)}U_{k-1})\otimes\id^{\otimes 2}\big)\big(\id^{\otimes(k-1)}\otimes c_{\sigma,\sigma^k}\big)\bigg)\otimes\id\\
        &= ((\id\otimes c_{\sigma,\sigma}^{\otimes (k-1)}U_{k-1}\otimes\id)\otimes\id^{\otimes 2})(\id^{\otimes k}\otimes c_{\sigma,\sigma^k}\otimes\id)\\
        &= (U_k\otimes\id^{\otimes 2})(\id^{\otimes k}\otimes c_{\sigma,\sigma^k}\otimes\id),
    \end{align*}
    which is exactly the statement for $n = k$.

    We are now ready to prove that
    $$
        \theta_{n+1}\circ (\iota_n\otimes\iota_n) = \iota_{2n+1}\circ\iota_{2n}\circ \theta_n
    $$
    for all $n\geq 1$. Setting $\gamma = \big(\theta_{n+1}\circ (\iota_n\otimes\iota_n)\big)(T_1\otimes T_2)$), equation (\ref{equation:braid}) and the naturality of the braiding imply that
    \begin{align*}
        \gamma &= U_{n+1}(T_1\otimes\id\otimes T_2\otimes\id)U_{n+1}^{-1}\\
        &= (U_n\otimes\id^{\otimes 2})(T_1\otimes (c_{\sigma,\sigma^n}(\id\otimes T_2)c_{\sigma,\sigma^n}^{-1})\otimes\id)(U_n^{-1}\otimes\id^{\otimes 2})\\
        &= (U_n\otimes\id^{\otimes 2})(T_1\otimes (T_2\otimes\id)\otimes\id)(U_n^{-1}\otimes\id^{\otimes 2})\\
        &= (\iota_{2n+1}\circ\iota_{2n}\circ \theta_n)(T_1\otimes T_2),
    \end{align*}
    which proves what we wanted.
\end{proof}

\begin{example}\label{example:rep_homom}
    Let $\mathcal{C}$ be the category $\mathbf{Rep}(G)$, where $G$ is a compact group. Then, as we saw in Example \ref{example:product_type}, $A(\mathbf{Rep}(G),\pi)$ is $*$-isomorphic to a fixed point algebra $M_{N^\infty}^G$, where $N$ is the dimension of $\pi$. Under this identification (recalling the unitary braiding on $\mathbf{Rep}(G)$ from Example \ref{example:rep_braiding}), the above $*$-homomorphism $\theta\colon A(\mathcal{C},\pi)\otimes A(\mathcal{C},\pi)\to A(\mathcal{C},\pi)$ corresponds to the restriction $M_{N^\infty}^G\otimes M_{N^\infty}^G\to M_{N^\infty}^G$ of the $*$-isomorphism $M_{N^\infty}\otimes M_{N^\infty}\to M_{N^\infty}$ that interlaces the tensor factors.
\end{example}

\subsubsection{SU(2) and Temperley--Lieb--Jones 
algebras}\label{subsubsection:TL}

We will next explain how the $*$-homomorphism $\theta$ is defined when $\mathcal{C}\! =\! \mathbf{Rep}_k(\mathrm{SU}(2))$ and $\pi \!=\! \pi_1$.
We recall first the notion of an \emph{$(m,n)$-Temperley--Lieb diagram} (for 
$m,n\in\mathbb{N}_0$
of equal parity), which first appeared in the work of Kauffman \cite{Kau}. Such 
a diagram consists of $(n+m)/2$ non-crossing smooth strands inside a rectangle 
with $m$ marked points on the upper edge and $n$ marked points on the lower 
edge, each marked point being connected to a unique strand. For example, the 
following is a $(4,6)$-Temperley--Lieb diagram.
    \begin{center}
\resizebox{.35\textwidth}{!}
{
\begin{tikzpicture}
    \draw (0,0) rectangle (7,3);
    \draw[ultra thick,color=blue] (1,0) to [out=90, in=195] (2.67,1.5);
    \draw[ultra thick,color=blue] (2.67,1.5) to [out=15,in=-90] (4.33,3);
    \draw[ultra thick,color=blue] (1,3) to [out=270,in=180] (1.83,2.2);
        \draw[ultra thick,color=blue] (1.83,2.2) to [out=0,in=270] (2.67,3);
    \draw[ultra thick,color=blue] (6,0) -- (6,3);
    \draw[ultra thick,color=blue] (2,0) to [out=90,in=180] (3.5,1);
    	\draw[ultra thick,color=blue] (3.5,1) to [out=0,in=90] (5,0);
    \draw[ultra thick,color=blue] (3,0) to [out=90,in=180] (3.5,0.5);
        	\draw[ultra thick,color=blue] (3.5,0.5) to [out=0,in=90] (4,0);
\end{tikzpicture}
}
\end{center}
We next use Temperley-Lieb diagrams to define some (complex) algebras. As a 
vector space, the \emph{$n$'th Temperley--Lieb algebra with parameter 
$\delta\in\mathbb{C}$} (for $n\in\mathbb{N}_0$) is the formal (complex) linear 
span of all $(n,n)$-Temperley--Lieb diagrams, i.e.,
    $$
        \mathrm{TL}_{n}(\delta) = 
        \mathrm{span}_{\mathbb{C}}\{\text{$(n,n)$-Temperley--Lieb diagrams}\}.
    $$
    We define the product of two diagrams by stacking them, aligning marked 
    points, smoothing strands, and removing all closed loops at the cost of 
    multiplying by $\delta^N$, where $N$ is the total number of closed loops. 
    The following picture illustrates the product of two 
    $(4,4)$-Temperley--Lieb diagrams in the algebra $\mathrm{TL}_{4}(\delta)$.
    \begin{center}
    \resizebox{.55\textwidth}{!}
    {
    \begin{tikzpicture}
    \draw (0,0) rectangle (5,4);
    \draw[ultra thick,color=blue] (1,0) -- (1,2);
	\draw[ultra thick,color=blue] (3,0) to [out=90,in=180] (3.5,0.5);
	\draw[ultra thick,color=blue] (3.5,0.5) to [out=0,in=90] (4,0);
    \draw[ultra thick,color=blue] (2.5,2) circle [radius=0.5];
    \draw[ultra thick,color=blue] (1,2) to [out=90,in=90] (4,2);
    \draw[ultra thick,color=blue] (1,4) to [out=270,in=180] (1.5,3.5);
    \draw[ultra thick,color=blue] (1.5,3.5) to [out=0,in=270] (2,4);
    \draw[ultra thick,color=blue] (3,4) to [out=270,in=180] (3.5,3.5);
    \draw[ultra thick,color=blue] (3.5,3.5) to [out=0,in=270] (4,4);
    \draw[ultra thick,color=blue] (2,0) to [out=90, in=195] (3,1);
    \draw[ultra thick,color=blue] (3,1) to [out=15,in=-90] (4,2);
    \draw[dashed] (0,2) -- (5,2);
    \node() at (6.5,2) {\huge $=\,\,\delta$};
    \draw (7.5,1) rectangle (12.5,3);
    \draw[ultra thick,color=blue] (8.5,3) to [out=270,in=180] (9,2.5);
    \draw[ultra thick,color=blue] (9,2.5) to [out=0,in=270] (9.5,3);
    \draw[ultra thick,color=blue] (10.5,3) to [out=270,in=180] (11,2.5);
    \draw[ultra thick,color=blue] (11,2.5) to [out=0,in=270] (11.5,3);
    \draw[ultra thick,color=blue] (8.5,1) to [out=90,in=180] (9,1.5);
    \draw[ultra thick,color=blue] (9,1.5) to [out=0,in=90] (9.5,1);
    \draw[ultra thick,color=blue] (10.5,1) to [out=90,in=180] (11,1.5);
    \draw[ultra thick,color=blue] (11,1.5) to [out=0,in=90] (11.5,1);
    \end{tikzpicture}
    }
    \end{center}
The \emph{$n$'th Temperley--Lieb--Jones algebra with parameter $\delta$} is 
defined as
    $$
        \widetilde{\mathrm{TL}}_{n}(\delta) = \mathrm{TL}_{n}(\delta)/\{x\,:\,\mathrm{Tr}(xy) = 0\text{ for all }y\},
    $$
    where the trace $\mathrm{Tr}$ (often called a \emph{Markov trace}) is defined on diagrams by
\begin{center}
\resizebox{.3\textwidth}{!}{%
\includegraphics{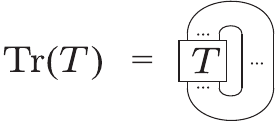}
}
\end{center}
Note that $\widetilde{\mathrm{TL}}_{n}(\delta)$ is a $*$-algebra under the $*$-operation that reflects diagrams about a horizontal axis and that the map $i_n\colon \widetilde{\mathrm{TL}}_{n}(\delta)\to \widetilde{\mathrm{TL}}_{n+1}(\delta)$, defined on diagrams by placing a vertical strand on the right side, is an injective $*$-homomorphism.

We believe that the following result may be ascribed to Jones \cite{J}, 
Kauffman \cite{Kau} and Goodman--de la Harpe--Jones \cite{GHJ}.

\begin{theorem}[Jones, Kauffman, Goodman--de la Harpe--Jones] Let 
$k\in\mathbb{N}$ be given and put $\delta = 2\cos(\pi/(k+2))$. Then 
$\widetilde{\mathrm{TL}}_{n}(\delta)$ is a C*-algebra for all $n$, and there is 
a $*$-isomorphism
    $$
        A(\mathbf{Rep}_k(\mathrm{SU}(2)),\pi_1)\cong \text{$\mathrm{ind}$-$\mathrm{lim}$}_n (\widetilde{\mathrm{TL}}_{n}(\delta),i_n).
    $$
\end{theorem}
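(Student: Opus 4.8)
The plan is to identify the inductive system that defines $A(\mathbf{Rep}_k(\mathrm{SU}(2)),\pi_1)$ with a cofinal subsystem of the Temperley--Lieb--Jones tower, by way of the classical realisation of $\mathbf{Rep}_k(\mathrm{SU}(2))$ through the Temperley--Lieb category.

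As a first step, since the fundamental representation is self-conjugate we have $\bar\pi_1\cong\pi_1$, so we may fix a unitary isomorphism $w\colon\sigma=\bar\pi_1\otimes\pi_1\to\pi_1\otimes\pi_1$. Conjugation by $w^{\otimes n}$ then gives $*$-isomorphisms $\mathrm{End}(\sigma^{\otimes n})\to\mathrm{End}(\pi_1^{\otimes 2n})$, and, because $w^{\otimes(n+1)}=w^{\otimes n}\otimes w$, these intertwine $\iota_n$ with the map $T\mapsto T\otimes\mathrm{id}_{\pi_1^{\otimes 2}}$. Hence $A(\mathbf{Rep}_k(\mathrm{SU}(2)),\pi_1)$ is the inductive limit of $\bigl(\mathrm{End}(\pi_1^{\otimes 2n}),\,T\mapsto T\otimes\mathrm{id}_{\pi_1^{\otimes 2}}\bigr)_{n\ge 0}$, which is the restriction to the cofinal index set $\{0,2,4,\dots\}$ of the tower $\bigl(\mathrm{End}(\pi_1^{\otimes m}),\,T\mapsto T\otimes\mathrm{id}_{\pi_1}\bigr)_{m\ge 0}$; so the two have the same inductive limit, and it suffices to identify this last tower with $(\widetilde{\mathrm{TL}}_m(\delta),i_m)_m$.

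To do that, for each $m$ I would construct a $*$-homomorphism $\Phi_m\colon\mathrm{TL}_m(\delta)\to\mathrm{End}(\pi_1^{\otimes m})$ sending an $(m,m)$-Temperley--Lieb diagram to the corresponding composite of ``cup'' and ``cap'' morphisms built from a standard solution of the conjugate equations for $\pi_1$, normalised so that a closed loop evaluates to $d(\pi_1)=\delta$; this is well defined because those cup and cap morphisms satisfy the defining relations of $\mathrm{TL}_m(\delta)$, and it is $*$-preserving because reflecting a diagram interchanges cups and caps, which are mutually adjoint. Equivalently, $\Phi_m$ sends the normalised Jones projections to the minimal projections onto the copies of $\boldsymbol{1}$ inside consecutive pairs of tensor factors of $\pi_1^{\otimes m}$. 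Two facts are needed: (i) $\Phi_m$ is surjective, i.e.\ $\mathrm{End}_{\mathbf{Rep}_k(\mathrm{SU}(2))}(\pi_1^{\otimes m})$ is spanned by images of Temperley--Lieb diagrams (equivalently, generated by the Jones projections); and (ii) $\mathrm{Tr}_{\pi_1^{\otimes m}}\circ\Phi_m$ equals the Markov trace on $\mathrm{TL}_m(\delta)$ --- indeed, in the graphical calculus for $\mathbf{Rep}_k(\mathrm{SU}(2))$, closing up a diagram computes the categorical trace $\mathrm{Tr}_{\pi_1^{\otimes m}}$, with each resulting loop contributing the factor $\delta$ by our normalisation, so the graphically defined Markov trace is literally $\mathrm{Tr}_{\pi_1^{\otimes m}}\circ\Phi_m$. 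Granting (i) and (ii): $\mathrm{Tr}_{\pi_1^{\otimes m}}$ is a faithful positive trace on the finite-dimensional C*-algebra $\mathrm{End}(\pi_1^{\otimes m})$, so for $x\in\mathrm{TL}_m(\delta)$ one has $\mathrm{Tr}(xy)=0$ for all $y$ iff $\mathrm{Tr}_{\pi_1^{\otimes m}}(\Phi_m(x)z)=0$ for all $z$ (using surjectivity of $\Phi_m$) iff $\Phi_m(x)=0$. Thus $\ker\Phi_m$ is precisely the ideal of negligible elements, $\Phi_m$ descends to a $*$-algebra isomorphism $\widetilde{\mathrm{TL}}_m(\delta)\xrightarrow{\ \sim\ }\mathrm{End}(\pi_1^{\otimes m})$, and transporting the unique C*-norm from the right-hand side shows that $\widetilde{\mathrm{TL}}_m(\delta)$ is a C*-algebra --- the first assertion of the theorem.

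Finally, the inclusion $i_m$ (adding a vertical strand on the right) corresponds under $\Phi$ to $T\mapsto T\otimes\mathrm{id}_{\pi_1}$, so $(\Phi_m)_m$ is an isomorphism of inductive systems $(\widetilde{\mathrm{TL}}_m(\delta),i_m)_m\cong\bigl(\mathrm{End}(\pi_1^{\otimes m}),\,T\mapsto T\otimes\mathrm{id}_{\pi_1}\bigr)_m$; combining this with the first step yields the desired $*$-isomorphism $A(\mathbf{Rep}_k(\mathrm{SU}(2)),\pi_1)\cong\text{$\mathrm{ind}$-$\mathrm{lim}$}_n(\widetilde{\mathrm{TL}}_n(\delta),i_n)$. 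The substantive input --- and the step I expect to be the main obstacle --- is (i), the surjectivity of $\Phi_m$, that is, that $\mathbf{Rep}_k(\mathrm{SU}(2))$ is ``Temperley--Lieb generated''; this is classical, being essentially how this category (or the associated Jones tower) is constructed, and I would quote it from \cite{J,Kau,GHJ} (see also \cite{We1,We2}) rather than reprove it. A secondary point needing care is fixing the normalisation of the cup and cap morphisms in (ii) so that the loop value is exactly $\delta$, which one pins down on the empty and identity diagrams.
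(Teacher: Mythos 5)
The paper offers no proof of this statement --- it is quoted as a known result of Jones, Kauffman and Goodman--de la Harpe--Jones --- so there is nothing in-paper to compare against; your argument is the standard one that those references supply, and it is essentially correct. The cofinality reduction from $\sigma^{\otimes n}=(\bar\pi_1\otimes\pi_1)^{\otimes n}$ to the full tower $\mathrm{End}(\pi_1^{\otimes m})$, the identification of $\ker\Phi_m$ with the negligible ideal via faithfulness of the categorical trace together with surjectivity, and the compatibility of $i_m$ with $-\otimes\mathrm{id}_{\pi_1}$ are all fine. The one point where your phrasing glosses over a genuine subtlety is the claim that $\Phi_m$ is ``well defined because those cup and cap morphisms satisfy the defining relations'': since $\pi_1$ is quaternionic (Frobenius--Schur indicator $-1$), any standard solution of the conjugate equations has $\bar R=-R$, so the zigzag move is \emph{not} sign-free and a naive diagram-by-diagram assignment of cups and caps is not isotopy invariant. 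The correct route --- which your parenthetical about Jones projections already points to --- is to define $\Phi_m$ on the generators $e_i=\delta^{-1}RR^{*}$ and invoke Kauffman's isomorphism between the diagram algebra and the abstract presentation of $\mathrm{TL}_m(\delta)$; the relation $e_ie_{i\pm1}e_i=\delta^{-2}e_i$ then holds without sign because the zigzag scalar $\epsilon=\pm1$ occurs together with its adjoint, contributing $|\epsilon|^2=1$. With that reading, and with surjectivity quoted from the literature as you propose, the proof is complete.
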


\begin{remark}
    Put $\mathscr{J} = \{2\cos(\pi/(k+2))\,:\,k\in\mathbb{N}\}\cup[2,\infty)$. A version of Jones' Index Rigidity Theorem from \cite{J} states that, given a parameter $\delta$, $\mathrm{Tr}$ is a positive linear functional on $\mathrm{TL}_n(\delta)$ for all $n$ if and only if $\delta\in \mathscr{J}$. In particular, $\widetilde{\mathrm{TL}}_{n}(\delta)$ is a C*-algebra for all $n$ whenever $\delta\in \mathscr{J}$.
\end{remark}

    Under the identification in the preceding theorem, $\theta$ is given by superposition of diagrams. For instance, if we define the $\theta_n$ and $U_n$ in terms of the self-conjugate object $\pi = \pi_1$ rather than $\sigma$ then  $\theta_3$ may be depicted in the following way.
    \begin{center}
\resizebox{.8\textwidth}{!}{%
\begin{tikzpicture}[scale = .9]
	   \begin{scope}[ultra thick]
	       \draw[thin, dashed] (-.2,0) rectangle (1.6,2);
	       \draw[thin, dashed] (2.4,0) rectangle (4.2,2);
	       \draw[thin, dashed] (5,0) rectangle (8.6,2);

		   \draw[color=blue] (0,2) .. controls (0,1) and (1.4,1) .. (1.4,0);
				\draw[color=blue] (0,0) .. controls (.45,.8) .. (.9,0);
				\draw[color=blue] (.5,2) .. controls (.95,1) .. (1.4,2);

				\node() at (2,1) {$\bigotimes$};

				\draw[color=red] (2.7,0) -- (2.7,2);
				\draw[color=red] (3,0) .. controls (3.45,1) .. (3.9,0);
				\draw[color=red] (3,2) .. controls (3.45,1) .. (3.9,2);

				\node() at (4.6,1) {$\mapsto$};

				\draw[dashed, color=blue] (5.3,2) .. controls (5.3,1) and (7.9,1) .. (7.9,0);
				\draw[dashed, color=blue] (5.3,0) .. controls (5.95,1) .. (6.6,0);
				\draw[dashed, color=blue] (6.6,2) .. controls (7.35,1) .. (7.9,2);
				\draw[color=red] (5.95,0) -- (5.95,2);
				\draw[color=red] (7.05,0) .. controls (7.7,1) .. (8.35,0);
				\draw[color=red] (7.05,2) .. controls (7.7,1) .. (8.35,2);
				\node() at (9,1) {$=$};

				\draw[thin, dashed] (9.4,2) rectangle (13,4);
				\draw[thin, dashed] (9.4,0) rectangle (13,2);
				\draw[thin, dashed] (9.4,-2) rectangle (13,0);

				\draw[color=blue, dashed] (9.7,2) .. controls (9.7,1) and (11,1) .. (11,0);
				\draw[color=blue, dashed] (9.7,0) .. controls (10.15,.8) .. (10.6,0);
				\draw[color=blue, dashed] (10.1,2) .. controls (10.55,1) .. (11,2);
				\draw[color=red] (11.4,0) -- (11.4,2);
				\draw[color=red] (11.8,0) .. controls (12.25,1) .. (12.7,0);
				\draw[color=red] (11.8,2) .. controls (12.25,1) .. (12.7,2);

                \draw[color=blue, dashed] (9.7,2) -- (9.7,4);
                \draw[color=blue, dashed] (10.1,2) .. controls (10.1,2.8) and (11,3) .. (11,4);
                \draw[color=blue, dashed] (11,2) .. controls (11,2.2) and (11.8,3) .. (11.8,4);
                \draw[color=red] (11.4,2) .. controls (11.4,3) and (10.1,3) .. (10.1,4);
                \draw[color=red] (11.8,2) .. controls (11.6,2.5) and (11.4,3.5) .. (11.4,4);
                \draw[color=red] (12.7,2) -- (12.7,4);

                \draw[color=blue, dashed] (9.7,0) -- (9.7,-2);
                \draw[color=blue, dashed] (10.6,0) .. controls (11,-1) .. (11,-2);
                \draw[color=blue, dashed] (11,0) .. controls (11,-.5) and (11.8,-1) .. (11.8,-2);
                \draw[color=red] (11.4,0) .. controls (11.4,-1) and (10.1,-1) .. (10.1,-2);
                \draw[color=red] (11.8,0) .. controls (11.6,-.5) and (11.4,-1.5) .. (11.4,-2);
                \draw[color=red] (12.7,0) -- (12.7,-2);
	   \end{scope}		
	\end{tikzpicture}	
}
\end{center}
Here, the diagram in the top-most rectangle on the right represents the unitary element $U_3$ and the crossings are to be interpreted as follows.
\begin{center}
    \resizebox{.75\textwidth}{!}{%
\begin{tikzpicture}
      \draw[dashed] (0,1) -- (1,0);
      \draw (0,0) -- (1,1);
      \node() at (1.8,.5) {$\,\,= ie^{\frac{\pi i}{2(k+2)}}$};
      \draw (2.5,0) .. controls (3,.5) .. (2.5,1);
      \draw (3.5,0) .. controls (3,.5) .. (3.5,1);
      \node() at (4.3,.5) {$\,-\,ie^{-\frac{\pi i}{2(k+2)}}$};
      \draw (5.08,1) .. controls (5.58,.5) .. (6.08,1);
      \draw (5.08,0) .. controls (5.58,.5) .. (6.08,0);
\end{tikzpicture}
}
\end{center}
This formula for a crossing arose from Kauffman's work on knot invariants \cite{Kau}.

\begin{remark}\label{remark:history}
     The Temperley--Lieb algebras first appeared in the work of Temperley and 
     Lieb \cite{TL}
    on Potts and ice-type models
    in statistical mechanics, in which they were defined in terms of generators and relations.
    These relations reappeared in the work of Jones \cite{J}, in which the 
    Temperley--Lieb--Jones algebras manifested as subalgebras of higher 
    relative commutants of subfactors (see also \cite{GHJ}).
    Soon, in a paper \cite{Kau} about a knot invariant introduced by Jones 
    \cite{J85}, Kauffman described Temperley--Lieb algebras in terms of 
    Temperley--Lieb diagrams (see also \cite{JR}).

    Later, it was realized that a diagrammatic description could also be given 
    for standard invariants of subfactors (cf.\ Jones' introduction of 
    subfactor planar algebras \cite{J2} based on work of Popa \cite{Po}) and 
    tensor categories (cf.\ \cite{Tur} as well as e.g.\ \cite{EW}, \cite{BHP}). 
    For example, the category $\mathbf{Rep}_k(\mathrm{SU}(2))$ was described in 
    terms of Temperley--Lieb diagrams (cf.\ \cite{Tur} as well as e.g.\ 
    \cite{Yam}, \cite{Coo}, \cite{EP12}) while the category 
    $\mathbf{Rep}_k(\mathrm{SU}(3))$ was described in terms of so-called 
    A$_2$-Temperley--Lieb diagrams (cf.\ \cite{Ku}, in which 
    A$_2$-Temperley--Lieb diagrams were first introduced, as well as e.g.\ 
    \cite{We1}, \cite{Coo}, \cite{EP0}, \cite{EP12}). In fact, it was observed 
    that rigid C*-tensor categories are in some sense the same as so-called 
    factor planar algebras (cf.\ \cite{MPS}, \cite{BHP}).
\end{remark}

    \subsection{Induced (ordered) ring structure in $K$-theory}\label{subsection:ring_structure}

    Using the ``external product'' $K_0(A(\mathcal{C},\pi))\otimes_{\mathbb{Z}}K_0(A(\mathcal{C},\pi))\to K_0(A(\mathcal{C},\pi)\otimes A(\mathcal{C},\pi))$ in $K$-theory (cf.\ e.g.\ section 4.7 of \cite{HiR}), which is in fact a group isomorphism in this case, the $*$-homomorphism $\theta$ induces a binary operation
    $$
        K_0(\theta)\colon K_0(A(\mathcal{C},\pi))\otimes_{\mathbb{Z}}K_0(A(\mathcal{C},\pi))\longrightarrow K_0(A(\mathcal{C},\pi)),
    $$
    which will turn out to be an associative product on $K_0(A(\mathcal{C},\pi))$ that endows it with the structure of a unital ring. We will prove this by identifying $K_0(A(\mathcal{C},\pi))$ with a subgroup of the localization $F_{\mathcal{C}}[\sigma^{-1}]$ in such a way that $K_0(\theta)$ corresponds to the product on $F_{\mathcal{C}}[\sigma^{-1}]$. (Recall that $F_{\mathcal{C}}[\sigma^{-1}]$ consists of equivalence classes of formal fractions $x/\sigma^n$ with $x\in F_{\mathcal{C}}$ and $n\in\mathbb{N}_0$ under the equivalence relation $\sim$ defined as follows: $x/\sigma^n\sim y/\sigma^m$ if $\sigma^N(\sigma^mx-\sigma^ny) = 0$ in $F_{\mathcal{C}}$ for some $N\in\mathbb{N}_0$.)

    Identifying $K_0(A(\mathcal{C},\pi)_n)$ with 
    $\bigoplus_{\mu\prec\sigma^{\otimes n}}\mathbb{Z}\mu$, where 
    $\mu\prec\sigma^{\otimes n}$ is to be read as\linebreak ``$\mu$ occurs as a 
    direct summand of $\sigma^{\otimes n}$'', we have that 
    $K_0(A(\mathcal{C},\pi))$ is isomorphic (as an ordered group) to the limit 
    of the inductive sequence
    \begin{equation}\label{equation:system}
        \cdots \longrightarrow \bigoplus_{\mu\prec\sigma^{\otimes n}}\mathbb{Z}\mu\overset{M_\sigma}{\longrightarrow}\bigoplus_{\mu\prec\sigma^{\otimes(n+1)}}\mathbb{Z}\mu\longrightarrow\cdots,
    \end{equation}
    where $M_\sigma$ is defined on basis vectors by
    $$
        M_\sigma(\mu) = \sum_{\nu\in\Lambda}N_{\sigma,\mu}^{\nu}\nu
    $$
    and the positive cone in $\bigoplus_{\mu\prec\sigma^{\otimes n}}\mathbb{Z}\mu$ is $\{\sum_{\mu\prec\sigma^{\otimes n}}v_\mu\mu\,:\,v_\mu\geq 0\text{ for all }\mu\}$. Note that if we view $\bigoplus_{\mu\prec\sigma^{\otimes n}}\mathbb{Z}\mu$ as a subset of $F_{\mathcal{C}}$ then $M_{\sigma}$ just multiplies by $\sigma$.
    Define, for each $n\in\mathbb{N}_0$, a group homomorphism
    $$
        \phi_n\colon \bigoplus_{\mu\prec\sigma^{\otimes n}}\mathbb{Z}\mu\longrightarrow F_{\mathcal{C}}[\sigma^{-1}]
    $$
    by $\phi_n(\mu) = \mu/\sigma^n$ for $\mu\prec\sigma^{\otimes n}$. Then clearly $\phi_{n+1}\circ M_\sigma = \phi_n$ for all $n$, whereby we get an induced group homomorphism
    $$
        \phi\colon K_0(A(\mathcal{C},\pi))\longrightarrow F_{\mathcal{C}}[\sigma^{-1}].
    $$
    It is straightforward to verify that $\phi$ is injective. We claim that $m\circ (\phi_n\otimes\phi_n) = \phi_{2n}\circ K_0(\theta_n)$, i.e., that the following diagram commutes, for all $n$, where $m$ is the product on $F_{\mathcal{C}}[\sigma^{-1}]$.
     $$
    	\begin{CD}
    	            K_0(\mathrm{End}(\sigma^{\otimes n}))\otimes_{\mathbb{Z}}
    	            K_0(\mathrm{End}(\sigma^{\otimes n}))
    	            @>\phi_n\otimes\phi_n>> F_{\mathcal{C}}[\sigma^{-1}]
    	            \otimes_{\mathbb{Z}} F_{\mathcal{C}}[\sigma^{-1}] \\
    	            @VK_0(\theta_n) VV @VVm V\\
    	            K_0(\mathrm{End}(\sigma^{\otimes 2n}))
    	            @>\phi_{2n}>>F_{\mathcal{C}}[\sigma^{-1}]
    	        \end{CD}
    $$
    By Remark \ref{remark:projections}, a projection $p\in \mathrm{End}(\sigma^{\otimes n})$ belongs to the (unitary equivalence) class corresponding to the direct summand $\mu$ of $\sigma^{\otimes n}$ if and only if there exists an
    element $v\in\mathrm{Mor}(\mu,\sigma^{\otimes n})$ such that $v^*v = \id_\mu$ and $vv^* = p$. It follows that if $p\in \mathrm{End}(\sigma^{\otimes n})$ belongs to the class corresponding to $\mu$ and $q\in \mathrm{End}(\sigma^{\otimes n})$ to the one corresponding to $\nu$ then $\theta_n(p\otimes q) = U_n(p\otimes q)U_n^*\in \mathrm{End}(\sigma^{\otimes 2n})$ belongs to the class corresponding to the direct summand $\mu\otimes\nu$ of $\sigma^{\otimes 2n}$. This proves the claim. We conclude that $m\circ (\phi\otimes\phi) = \phi\circ K_0(\theta)$, whereby $K_0(\theta)$ has the properties that we asserted at the beginning of this section.

    \begin{remark}\label{remark:subcategory}
        Denote by $\mathcal{C}_1$ the full C*-tensor subcategory of $\mathcal{C}$ generated by the simple objects that occur as direct summands in tensor powers of $\sigma$. Then $\phi$ maps into the subring $\mathrm{F}_{\mathcal{C}_1}[\sigma^{-1}]$ of $\mathrm{F}_{\mathcal{C}}[\sigma^{-1}]$.
    \end{remark}

    By an \emph{ordered ring}\label{page:ordered_ring}, we shall mean a commutative ring $R$ with identity equipped with a positive cone $R_+$, i.e., a subset $R_+\subset R$ satisfying $R_++R_+\subset R_+$, $R_+\cap (-R_+) = \{0\}$ and $R = R_+-R_+$, for which $R_+\cdot R_+\subset R_+$ and in which the identity element $1$ is an order unit for the ordered group $(R,R_+)$ under addition (cf.\ Remark \ref{remark:SU3}).
    In particular, we have the following. Define a positive cone in $F_{\mathcal{C}}[\sigma^{-1}]$ by
    $$
        \mathrm{F}_{\mathcal{C}}[\sigma^{-1}]_+ = \{x\,:\,d(x)>0\}\cup\{0\},
    $$
    where $d$ denotes the ring homomorphism $d\colon \mathrm{F}_{\mathcal{C}}[\sigma^{-1}]\to\mathbb{R}$ defined by $d(\mu/\sigma^n) = d(\mu)/d(\sigma)^n$ for $\mu\in\Lambda$ and $n\in\mathbb{N}_0$. (On the right hand side, $d$ denotes the quantum dimension in $\mathcal{C}$.) Then $K_0(A(\mathcal{C},\pi))$ and $\mathrm{F}_{\mathcal{C}}[\sigma^{-1}]$ are ordered rings and $\phi$ is a positive map, i.e., $\phi(K_0(A(\mathcal{C},\pi))_+)\subset \mathrm{F}_{\mathcal{C}}[\sigma^{-1}]_+$.

    \begin{remark}\label{remark:ordered_ring_iso}
        Assume now that $\mathcal{C}$ has only finitely many simple objects. We claim that $\phi$ is an ordered ring isomorphism onto the ordered subring $\mathrm{F}_{\mathcal{C}_1}[\sigma^{-1}]$ of $\mathrm{F}_{\mathcal{C}}[\sigma^{-1}]$. This follows from the fact that, in this case, the directed system in equation (\ref{equation:system}) is \emph{stationary} in the sense of Chapter 6 in \cite{E}, and is closely related to the fact that, by Theorem 6.1 in \cite{E}, $A(\mathcal{C},\pi)$ has a unique (normalized) faithful positive trace. (Note also that $A(\mathcal{C},\pi)$ is a simple C*-algebra.)

        Let us provide some further details. It is easy to show that $\phi$ is a ring isomorphism onto $\mathrm{F}_{\mathcal{C}_1}[\sigma^{-1}]$. (In particular, $K_0(A(\mathcal{C},\pi))$ is a finitely generated ring.) We will prove that it maps the positive cone of $K_0(A(\mathcal{C},\pi))$ onto that of $\mathrm{F}_{\mathcal{C}_1}[\sigma^{-1}]$. Denote by $\Lambda'$ the set of simple objects that occur in some tensor power of $\sigma$. Then the aforementioned stationarity refers to the fact that,
        since $1\prec \sigma$ (cf.\ the proof of Lemma \ref{lemma:complete} below), $K_0(A(\mathcal{C},\pi))$ is isomorphic (as an ordered group) to the limit of the inductive sequence
        $$
            G_1\overset{M_{\sigma}}{\longrightarrow} G_2\overset{M_{\sigma}}{\longrightarrow} G_3\overset{M_{\sigma}}{\longrightarrow} \cdots,
        $$
        where $G_n = \mathbb{Z}^{(\Lambda')} = 
        \bigoplus_{\mu\in\Lambda'}\mathbb{Z}\mu$ for all $n$ and $M_{\sigma}$ 
        is viewed as a matrix in $M_{\Lambda'}(\mathbb{Z})$. Consider now an 
        arbitrary element of the limit of this inductive sequence. It can be 
        viewed as the equivalence class $[v,n]$ of an element $v\in G_n$ (for 
        some $n$). It is well-known that, in this stationary situation, $[v,n]$ 
        is a non-zero positive element in the limit if and only if $\langle 
        v,w\rangle > 0$, where $w$ is the Perron--Frobenius eigenvector of 
        $M_\sigma$, which is equal to $\sum_{\mu\in\Lambda'}d(\mu)\mu$ by 
        Remark \ref{remark:PF} and Frobenius Reciprocity. Thus, $[v,n]>0$ if 
        and only if $d(\phi([v,n])) > 0$. The claim follows.
    \end{remark}

    \section{Explicit computations}\label{section:explicit_computations}

    We will next, in a variety of cases, explicitly describe the ring $K_0(A(\mathcal{C},\pi))$ in terms of generators and relations, i.e., as a quotient
    $$
        \mathbb{Z}[t_1,\ldots,t_r]/\langle P_1(t_1,\ldots,t_r),\ldots,P_m(t_1,\ldots,t_r)\rangle.
    $$
    The computations for SU(3), Sp(4) and G$_2$ have, as far as we are aware, 
    not appeared in the literature before, except that the computation for 
    $\mathbf{Rep}(\mathrm{SU}(3))$ essentially appeared in the work of 
    Handelman--Rossmann \cite{HR1,HR2} and Handelman \cite{H1,H2}. The 
    SU(2)-computations were done by Wassermann \cite{Wa} in the case of 
    $\mathbf{Rep}(\mathrm{SU}(2))$ and by Evans--Gould \cite{EG} in the case of 
    $\mathbf{Rep}_k(\mathrm{SU}(2))$. The only novelty in these cases is a 
    slight clarification of the ring structure.

    \begin{example}
    As a warm-up, we give an explicit computation for the irreducible 
    $2$-dimensional representation in $\mathcal{C} = \mathbf{Rep}(S_3)$, where 
    $S_3$ is the symmetric group on three letters.
In this case $\Lambda = \{1,s,\pi\}$, where $1$ is the trivial representation, $s$ is the sign representation, and $\pi$ is the $2$-dimensional representation $\pi\colon S_3\to \mathcal{U}(\mathbb{C}^3\ominus \{(1,1,1)\})$ that permutes the coordinates. They satisfy the fusion rules $s\otimes s \cong 1$, $s\otimes \pi \cong \pi$ and $\pi\otimes\pi \cong 1\oplus s\oplus\pi$ (and $1$ is the tensor unit).

    We claim that
    $$
        K_0(A(\mathcal{C},\pi))\cong\mathbb{Z}[t]/\langle 1-t-2t^2\rangle
    $$
    as ordered rings, where the positive cone on the right hand side is $\{[p(t)]\,:\,p(1/2)>0\}\cup\{[0]\}$ (and $[p(t)]$ denotes the coset of $p(t)\in\mathbb{Z}[t]$). Note that, as we saw in Example \ref{example:product_type}, $K_0(A(\mathcal{C},\pi)) \cong K_0(M_{2^\infty}^{S_3})$ and that, since $1/2$ is a root of $1-t-2t^2$, it makes sense to evaluate a coset in $\mathbb{Z}[t]/\langle 1-t-2t^2\rangle$ at $1/2$.

    Since $\pi$ is self-conjugate and every simple summand occurs in $\pi^{\otimes 2}$, the map $\phi\colon K_0(A(\mathcal{C},\pi))\to F_{\mathcal{C}}[\pi^{-1}]$ from section \ref{subsection:ring_structure} is an isomorphism of ordered rings, where the positive cone on the right hand side is $\{x\,:\,d(x)>0\}\cup\{0\}$. Here, the fusion ring $F_{\mathcal{C}}$ is the representation ring $R(S_3)$ of $S_3$ and the quantum dimension $d$ in $\mathcal{C}$ is just the vector space dimension. (Note that $F_{\mathcal{C}}[\pi^{-1}] = F_{\mathcal{C}}[(\bar{\pi}\pi)^{-1}]$.)

    Define a map $\Phi\colon R(S_3)\to \mathbb{Z}[t]/\langle 1-t-2t^2\rangle$ by $\Phi(1) = \Phi(s) = [1]$ and $\Phi(\pi) = [2t+1]$. This is a ring homomorphism, since $\Phi(s^2) = \Phi(s)^2 = [1]$, $\Phi(s\cdot\pi) = \Phi(s)\cdot\Phi(\pi) = \Phi(\pi)$ and
    \begin{gather*}
        \Phi(\pi^2) = \Phi(1+s+\pi) = [2t+3] = [2t+1]^2 = \Phi(\pi)^2.
    \end{gather*}
    As $\Phi(\pi) = [2t+1]$ is an invertible element in $\mathbb{Z}[t]/\langle 1-t-2t^2\rangle$ (with $[2t+1]^{-1} = [t]$), we get an induced ring homomorphism $R(S_3)[\pi^{-1}]\to \mathbb{Z}[t]/\langle 1-t-2t^2\rangle$ (also denoted $\Phi$), which is surjective because $\Phi(1) = [1]$ and $\Phi(1/\pi) = [t]$. It is easy to verify that, when $\mathbb{Z}[t]/\langle 1-t-2t^2\rangle$ is equipped with the aforementioned positive cone, $\Phi$ is in fact an isomorphism of ordered rings. The claim follows.\end{example}

    \subsection{SU(2)$_k$}

    Evans and Gould proved the following result in \cite{EG}.

    \begin{theorem}[Evans--Gould]\label{theorem:SU2k} As ordered groups,
    $$
        K_0(A(\mathbf{Rep}_k(\mathrm{SU}(2)),\pi_1))\cong \mathbb{Z}[t]/I_k,
    $$
    where $I_k = \langle P_{k+1}^{\mathrm{SU}(2)}(t)\rangle$.

    Here, the polynomial $P_\lambda^{\mathrm{SU}(2)}(t)$ (for $\lambda\in\mathbb{N}_0$) is obtained from the Laurent polynomial $x^{-\lambda}Q_\lambda^{\mathrm{SU}(2)}(x)$ by performing the change of variables $t = 1/x^{2}$, and the positive cone on the right hand side is
    $$
        \big(\mathbb{Z}[t]/I_k\big)_+ = \{[p(t)]\,:\,p(\alpha_k)>0\}\cup\{[0]\},
    $$
    where $\alpha_k = d(\pi_1)^{-2} = [4\cos^2(\pi/(k+2))]^{-1}$.
    \end{theorem}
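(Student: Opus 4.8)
The plan is to deduce this from the general framework of section \ref{subsection:ring_structure} together with Gepner's explicit description of $J_k(\mathrm{SU}(2))$ (quoted after Theorem \ref{theorem:Gepner}). Since $\pi_1$ is self-conjugate and every simple object of $\mathbf{Rep}_k(\mathrm{SU}(2))$ occurs in some tensor power of $\pi_1$, Remark \ref{remark:ordered_ring_iso} applies verbatim: the map $\phi$ is an isomorphism of ordered rings
$$
    K_0(A(\mathbf{Rep}_k(\mathrm{SU}(2)),\pi_1))\;\xrightarrow{\ \cong\ }\;F_{\mathcal{C}}[\sigma^{-1}] = \mathrm{Ver}_k(\mathrm{SU}(2))[\pi_1^{-2}],
$$
where $\sigma = \bar\pi_1\otimes\pi_1 = \pi_1\otimes\pi_1$, so $\sigma^{-1}$ amounts to inverting $\pi_1^{2}$. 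So the first step is to reduce the theorem to producing an ordered-ring isomorphism $\mathrm{Ver}_k(\mathrm{SU}(2))[\pi_1^{-2}]\cong \mathbb{Z}[t]/I_k$ sending the quantum dimension functional $d$ to evaluation at $\alpha_k$.

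Next I would unwind the polynomial algebra. By Theorem \ref{theorem:Gepner} and Gepner's relation, $\mathrm{Ver}_k(\mathrm{SU}(2))\cong \mathbb{Z}[x]/\langle Q^{\mathrm{SU}(2)}_{k+1}(x)\rangle$ with $\pi_j\mapsto [Q^{\mathrm{SU}(2)}_j(x)]$ and $\pi_1\mapsto[x]$. Here the $Q^{\mathrm{SU}(2)}_j$ are the Chebyshev-like polynomials $Q^{\mathrm{SU}(2)}_j(x) = \prod$-free with $Q^{\mathrm{SU}(2)}_0=1$, $Q^{\mathrm{SU}(2)}_1=x$ and $Q^{\mathrm{SU}(2)}_{j+1} = xQ^{\mathrm{SU}(2)}_j - Q^{\mathrm{SU}(2)}_{j-1}$ (characters of $\mathrm{SU}(2)$). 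Localizing at $x^2$, I introduce $t = x^{-2}$; the point is that $x^{-\lambda}Q^{\mathrm{SU}(2)}_\lambda(x)$ is an \emph{even} Laurent polynomial in $x$ (since $Q^{\mathrm{SU}(2)}_\lambda$ has the same parity as $\lambda$), hence a genuine polynomial $P^{\mathrm{SU}(2)}_\lambda(t)$ in $t=x^{-2}$, which is exactly the definition in the statement. Moreover $P^{\mathrm{SU}(2)}_\lambda$ has degree $\lambda$ in $t$ and constant term $1$ (the top coefficient of $Q_\lambda$), so $[P^{\mathrm{SU}(2)}_1(t)]=[1+\cdots]$... more importantly $x^{-1} = x\cdot x^{-2} = x\cdot t$, so inverting $x^2$ and adjoining $t$ gives $\mathbb{Z}[x,x^{-1},t]/(x^2t-1)$; one checks the localized quotient $\mathrm{Ver}_k(\mathrm{SU}(2))[x^{-2}]$ is generated over $\mathbb{Z}$ by $t$ alone (because $x = x^{-1}\cdot x^2 = (xt)^{-1}$ is a unit, and in fact $x^{-1}$ together with the relation $x^2 t =1$ lets one eliminate $x$ once one knows the image is integral over $\mathbb{Z}[t]$, which holds because $x^{-\lambda}Q_\lambda(x)=P_\lambda(t)$ expresses the relevant element integrally). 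This identifies $\mathrm{Ver}_k(\mathrm{SU}(2))[\pi_1^{-2}]$ with $\mathbb{Z}[t]/\langle P^{\mathrm{SU}(2)}_{k+1}(t)\rangle = \mathbb{Z}[t]/I_k$, sending $\pi_1 \mapsto [x] = [t^{-1}]$ (a unit, inverse $[t]$) and $\pi_j\mapsto[x^{-j}\cdot x^j Q_j(x)]$; one should record the cleaner formula $\pi_j/\pi_1^{j}\mapsto [P^{\mathrm{SU}(2)}_j(t)]$.

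Finally, the order structure. Under $\phi$ the positive cone of $K_0$ corresponds to $\{y : d(y)>0\}\cup\{0\}$ on $\mathrm{Ver}_k(\mathrm{SU}(2))[\pi_1^{-2}]$ by Remark \ref{remark:ordered_ring_iso}. The quantum dimension on $\mathrm{Ver}_k(\mathrm{SU}(2))$ is the ring homomorphism $x\mapsto d(\pi_1) = 2\cos(\pi/(k+2))$ (this is the Perron--Frobenius eigenvalue, Remark \ref{remark:PF}), so on the localization $d$ sends $t = x^{-2}$ to $d(\pi_1)^{-2} = [4\cos^2(\pi/(k+2))]^{-1} = \alpha_k$. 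Hence under the composite isomorphism $K_0(A(\mathbf{Rep}_k(\mathrm{SU}(2)),\pi_1))\cong \mathbb{Z}[t]/I_k$ the positive cone goes to $\{[p(t)] : p(\alpha_k)>0\}\cup\{[0]\}$, which is well defined because $P^{\mathrm{SU}(2)}_{k+1}(\alpha_k) = d(\pi_1)^{-(k+1)}Q^{\mathrm{SU}(2)}_{k+1}(d(\pi_1)) = 0$ since $d(\pi_1) = 2\cos(\pi/(k+2))$ is precisely the largest root of $Q^{\mathrm{SU}(2)}_{k+1}$. This is exactly the asserted ordered-group (indeed ordered-ring) isomorphism.

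The main obstacle, and the step deserving the most care, is the algebraic identification of the localization in the middle paragraph: verifying cleanly that $\mathrm{Ver}_k(\mathrm{SU}(2))[\pi_1^{-2}]$ is generated by $t$ over $\mathbb{Z}$ and that the kernel of $\mathbb{Z}[t]\to \mathrm{Ver}_k(\mathrm{SU}(2))[\pi_1^{-2}]$ is exactly $\langle P^{\mathrm{SU}(2)}_{k+1}(t)\rangle$ — one must rule out that localizing enlarges the ideal. The quickest route is a degree/rank count: $\mathrm{Ver}_k(\mathrm{SU}(2))$ is free of rank $k+1$ over $\mathbb{Z}$ with basis $\pi_0,\dots,\pi_k$, inverting the unit $\pi_1^2$ does not change this, the $\mathbb{Z}$-module map $\mathbb{Z}[t]/\langle P^{\mathrm{SU}(2)}_{k+1}\rangle\to\mathrm{Ver}_k(\mathrm{SU}(2))[\pi_1^{-2}]$ is also rank $k+1$ (as $P^{\mathrm{SU}(2)}_{k+1}$ is monic of degree $k+1$ after a harmless normalization), and it is visibly surjective with the basis $1,t,\dots,t^k$ mapping to $1,\pi_1^{-2},\dots$ — so it is an isomorphism by a rank/torsion-freeness argument. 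The remaining verifications (parity of $Q_\lambda$, the recursion for $P_\lambda$, and the root computation $Q_{k+1}(2\cos(\pi/(k+2)))=0$) are standard Chebyshev facts.
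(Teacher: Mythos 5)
There is a genuine gap, and it sits at the very first step. The paper itself only cites \cite{EG} for this theorem, but the intended method (per the remark following it) is the SU(3) argument of section \ref{subsection:SU3k} specialized to one variable; your overall strategy is in that spirit, yet your central identification is wrong. Remark \ref{remark:ordered_ring_iso} gives an ordered-ring isomorphism of $K_0(A(\mathcal{C},\pi))$ onto $F_{\mathcal{C}_1}[\sigma^{-1}]$, where $\mathcal{C}_1$ is generated by the simple summands of tensor powers of $\sigma=\bar\pi_1\otimes\pi_1=\pi_1^{\otimes 2}$ --- and $\sigma^{\otimes n}=\pi_1^{\otimes 2n}$ contains only the \emph{even} objects $\pi_0,\pi_2,\dots$. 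So $\phi$ lands in the even subring $\bigoplus_{j\,\mathrm{even}}\mathbb{Z}\pi_j$ localized at $\sigma$, not in $\mathrm{Ver}_k(\mathrm{SU}(2))[\pi_1^{-2}]$. The fact that every simple object occurs in a power of $\pi_1$ is irrelevant here; what matters is powers of $\sigma$. The intermediate claim $\mathrm{Ver}_k(\mathrm{SU}(2))[\pi_1^{-2}]\cong\mathbb{Z}[t]/I_k$ is in fact false: for $k=1$ the left side is $\mathbb{Z}[x]/(x^2-1)$ (rank $2$, since $\pi_1$ is already invertible) while $\mathbb{Z}[t]/\langle 1-t\rangle\cong\mathbb{Z}$ has rank $1$. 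Indeed, section \ref{section:conclusion} of the paper identifies $\mathrm{Ver}_k(\mathrm{SU}(2))[\pi_1^{-2}]$ with $\mathbb{Z}[x^{\pm1}]/\tilde J_k\cong K_0(B(\mathbf{Rep}_k(\mathrm{SU}(2)),\pi_1))$, which is (for $k$ odd) $K_0(A(\mathbf{Rep}_k(\mathrm{SU}(2)),\pi_1))^{\oplus 2}$ --- twice as big as what you want.

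The closing rank count compounds the error. Since $Q^{\mathrm{SU}(2)}_{k+1}$ has parity $(-1)^{k+1}$, the polynomial $P^{\mathrm{SU}(2)}_{k+1}(t)$ has $t$-degree $\lfloor(k+1)/2\rfloor$, not $k+1$, and for $k$ even its leading coefficient is $\pm(k+2)/2$, so no ``harmless normalization'' makes it monic: e.g.\ $k=2$ gives $P_3(t)=1-2t$ and $\mathbb{Z}[t]/I_2\cong\mathbb{Z}[1/2]$, which is not even a finitely generated $\mathbb{Z}$-module. Likewise ``inverting the unit $\pi_1^2$'' fails for $k$ even, where $\det N_{\pi_1}=0$ and $\pi_1$ is a zero divisor, so the localization map is not injective and does change the rank. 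The repair is to run the argument of section \ref{subsection:SU3k} on the even subring: send the basis element $\pi_{2j}\prec\sigma^{\otimes n}$ to $[t^{\,n-j}P^{\mathrm{SU}(2)}_{2j}(t)]$ (the one-variable analogue of $m_{\vec\lambda,n}P^{\mathrm{SU}(3)}_{\vec\lambda}$), check compatibility with $M_\sigma$ using $tP_{2j}=P_{2j}-P_{2j+2}+$(lower terms) type recursions, prove injectivity by clearing denominators back into $J_k(\mathrm{SU}(2))=\langle Q^{\mathrm{SU}(2)}_{k+1}\rangle$, prove surjectivity by exhibiting $1,t,\dots$ in the image, and get positivity from the Perron--Frobenius/extremal-state argument; your computation that $d$ sends $t$ to $\alpha_k$ and that $P_{k+1}(\alpha_k)=0$ is the one part that survives intact.
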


    \noindent Note that the definition of the positive cone makes sense, since $\alpha_k$ is a common zero of the polynomials in $I_k$ by the discussion in section \ref{subsection:fusion_rings}.

    \begin{remark}
        The polynomials $P_\lambda^{\mathrm{SU}(2)}(t)$ first appeared in \cite{J}, where their positivity properties were used to prove Jones' Index Rigidity Theorem.
    \end{remark}

    \begin{remark}
    It is easy to prove that the above identification is in fact one of ordered 
    rings. The proof employs similar ideas to the ones used in section 
    \ref{subsection:SU3k}, but is quite a bit simpler. Also, as mentioned in 
    section \ref{subsubsection:TL}, $A(\mathbf{Rep}_k(\mathrm{SU}(2)),\pi_1)$ 
    is an inductive limit of Temperley--Lieb--Jones algebras, and the 
    $*$-homomorphism $\theta$ that induces the product in $K$-theory has a 
    particularly nice diagrammatic description in terms of superimposed 
    Temperley--Lieb diagrams.
    \end{remark}

    \subsection{SU(2)}

    The preceding computation is closely related to the following earlier result from Wassermann's thesis \cite{Wa}.

    \begin{theorem}[Wassermann]\label{theorem:SU2} As ordered rings,
    $$
        K_0(A(\mathbf{Rep}(\mathrm{SU}(2)),\pi_1))\cong \mathbb{Z}[t],
    $$
    where the positive cone on the right hand side is
    $$
        \mathbb{Z}[t]_+ = \{p(t)\,:\,p>0\text{ on }(0,1/4]\}\cup\{0\}
    $$
    and the product on $K_0(A(\mathbf{Rep}(\mathrm{SU}(2)),\pi_1))\cong K_0(M_{2^\infty}^{\mathrm{SU}(2)})$ is induced by the $*$-homomorphism $M_{2^\infty}\otimes M_{2^\infty}\to M_{2^\infty}$ that interlaces the tensor factors.
    \end{theorem}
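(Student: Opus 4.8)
The plan is to specialise the machinery of section \ref{subsection:ring_structure} to $\mathcal{C}=\mathbf{Rep}(\mathrm{SU}(2))$, $\pi=\pi_1$, identify the image of the embedding $\phi$ explicitly (this pins down the ring), and then compute the positive cone by hand. Since $\pi_1$ is self-conjugate, $\sigma=\bar\pi_1\otimes\pi_1\cong\pi_1^{\otimes 2}$, so in $F_{\mathcal{C}}=R(\mathrm{SU}(2))\cong\mathbb{Z}[x]$ (with $x=[\pi_1]$) we have $[\sigma]=x^2$ and $F_{\mathcal{C}}[\sigma^{-1}]=\mathbb{Z}[x,x^{-2}]$. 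Put $t=x^{-2}=[\sigma]^{-1}$. Using that the character $\pi_{2j}(x)=U_{2j}(x/2)$ is an even polynomial of degree $2j$ with leading term $x^{2j}$, write $\pi_{2j}(x)=x^{2j}q_j(x^{-2})$ with $q_j\in\mathbb{Z}[t]$ of degree $j$ and $q_j(0)=1$. Then $\phi_n(\pi_{2j})=t^{n-j}q_j(t)$ for $0\le j\le n$, and since these $n+1$ polynomials have $t$-valuations $n,n-1,\dots,0$ with valuation-coefficient $1$, they form a $\mathbb{Z}$-basis of $\{p\in\mathbb{Z}[t]:\deg p\le n\}$. Hence $\mathrm{im}(\phi_n)$ is exactly the polynomials of degree $\le n$, so $\mathrm{im}(\phi)=\mathbb{Z}[t]$; as $\phi$ is injective and $\mathbb{Z}[t]=\mathbb{Z}[x^{-2}]$ is a subring of $\mathbb{Z}[x,x^{-2}]$ on which the localisation product is ordinary multiplication, $\phi$ is a ring isomorphism $K_0(A(\mathbf{Rep}(\mathrm{SU}(2)),\pi_1))\cong\mathbb{Z}[t]$ carrying the $\theta$-induced product to the polynomial product, and by Example \ref{example:rep_homom} that product is the one induced by the interlacing $*$-homomorphism on $K_0(M_{2^\infty}^{\mathrm{SU}(2)})$. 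This settles everything except the positive cone.

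By section \ref{subsection:ring_structure}, $p\in\mathbb{Z}[t]$ lies in $K_0(A)_+$ precisely when, for some $n\ge\deg p$, the polynomial $x^{2n}p(x^{-2})$ has non-negative coefficients $c^{(n)}_j$ in the basis $\{\pi_{2j}(x)\}$ --- equivalently, when $x^{2n}p(x^{-2})$ is the character of an honest $\mathrm{SU}(2)$-representation. The necessity of ``$p>0$ on $(0,1/4]$'' is easy: for $s\in(0,1/4]$ the evaluation $\mathrm{ev}_s$ is a positive functional on $K_0(A)$, because it sends the class $t^{n-j}q_j(t)$ of a minimal projection in the $\pi_{2j}$-summand of $\sigma^{\otimes n}$ to $s^{n-j}q_j(s)=s^{n}\,U_{2j}(1/(2\sqrt s))>0$ (using $1/(2\sqrt s)\ge 1$ and positivity of $U_m$ on $[1,\infty)$); hence a nonzero positive class has $\mathrm{ev}_s$-value $>0$, so any nonzero $p\in K_0(A)_+$ satisfies $p(s)>0$ on $(0,1/4]$. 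For sufficiency I would first reduce to $p(0)>0$: if $p>0$ on $(0,1/4]$ and $p(0)=0$, write $p=t^{i_0}\tilde p$ with $\tilde p(0)>0$; since $t=[\sigma]^{-1}$ is positive (its level-$n$ representative $x^{2n-2}$ is a genuine character) and $K_0(A)_+$ is multiplicatively closed, it suffices to treat $\tilde p$, which is then bounded below by some $\mu>0$ on the closed interval $[0,1/4]$.

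The crux is then to show: if $p(t)=\sum_{i=0}^d\beta_i t^i$ with $\beta_0=p(0)>0$ and $p\ge\mu>0$ on $[0,1/4]$, then $c^{(n)}_j=\sum_{i=0}^d\beta_i\,a_{n-i,j}\ge 0$ for all $j$ once $n$ is large, where $a_{k,j}=\dim\mathrm{Mor}(\pi_{2j},\pi_1^{\otimes 2k})=\binom{2k}{k-j}-\binom{2k}{k-j-1}$ is a ballot number. The key identity is
\[
    \frac{a_{n-i,j}}{a_{n,j}}=4^{-i}\prod_{l=0}^{i-1}\frac{\big((n-l)-j\big)\big((n-l)+j+1\big)}{(n-l)\big(n-l-\tfrac12\big)},
\]
which yields two estimates: (i) each factor is $\le 1$, and $\le 4(n-j)/n$ once $n$ is large, so $a_{n-i,j}/a_{n,j}\le(1-j/n)^i$; and (ii) uniformly for $j\le(1-\delta)n$ the product equals $4^{-i}\big(1-(j/n)^2\big)^i(1+O_\delta(1/n))$. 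Now split at $\delta_0=\beta_0\big/\big(2\sum_{i\ge1}|\beta_i|\big)$: for $j>(1-\delta_0)n$, estimate (i) gives $c^{(n)}_j\ge a_{n,j}\big(\beta_0-(1-j/n)\sum_{i\ge1}|\beta_i|\big)\ge a_{n,j}\beta_0/2>0$, which disposes of the edge and the transition zone simultaneously; for $j\le(1-\delta_0)n$, estimate (ii) gives $c^{(n)}_j=a_{n,j}\big(p(\tfrac{1-(j/n)^2}{4})+O(1/n)\big)\ge a_{n,j}\mu/2>0$ once $n$ is large, using $\tfrac{1-(j/n)^2}{4}\in[0,1/4]$. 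This is exactly where the interval $(0,1/4]$ enters: a bulk index $j\approx cn$ probes $p$ at the value $(1-c^2)/4$, which sweeps out $(0,1/4]$ as $c$ ranges over $[0,1)$. Thus for $n$ large all $c^{(n)}_j>0$, so $p\in K_0(A)_+$, and the proof is complete.

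I expect the main obstacle to be the uniform asymptotic (ii): one must bound the ballot-number ratios uniformly over the whole bulk $j\le(1-\delta_0)n$ --- where $n-j$ can be as small as $\delta_0 n$ yet still tends to infinity --- and then dovetail this with the crude bound (i) so that the two ranges of $j$ cover everything, with all constants depending only on $p$. The ring identification and the necessity direction are routine, and the ``clarification of the ring structure'' promised in the introduction is obtained for free, since the product transported through $\phi$ is manifestly the polynomial product on $\mathbb{Z}[x^{-2}]$.
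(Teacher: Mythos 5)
Your proposal is correct, and it takes a genuinely different route from the one the paper relies on. The paper does not prove Theorem \ref{theorem:SU2} at all: it cites Wassermann's thesis, and the only argument it sketches for results of this type is the ``elementary'' proof outlined in Remark \ref{remark:SU3} for the SU(3) analogue, whose hard step (identifying the positive cone) is delegated to two pieces of general machinery --- that an element of an unperforated ordered group with order unit is positive as soon as every extremal state is strictly positive on it (Effros--Handelman--Shen), and that extremal states on an ordered ring are ring homomorphisms (Kerov--Vershik, Maserick, Voiculescu, Wassermann) --- so that one only has to classify the multiplicative positive functionals, which for SU(2) are the evaluations at $x\in[2,\infty)$, i.e.\ at $t\in(0,1/4]$. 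You instead prove the sufficiency direction bare-handed, by showing that for large $n$ every coefficient $c^{(n)}_j=\sum_i\beta_i a_{n-i,j}$ is positive via the exact ratio $a_{k-1,j}/a_{k,j}=(k-j)(k+j+1)/\bigl(4k(k-\tfrac12)\bigr)$ and the resulting dichotomy between the bulk $j\le(1-\delta_0)n$, where the ratio asymptotics make $c^{(n)}_j/a_{n,j}$ converge uniformly to $p\bigl((1-(j/n)^2)/4\bigr)\ge\mu$, and the edge $j>(1-\delta_0)n$, where the crude bound $(1-j/n)^i$ lets the constant term $\beta_0=p(0)>0$ dominate. I checked the ratio identity, both estimates, the reduction to $p(0)>0$ via multiplicativity of the cone and positivity of $t=[\sigma]^{-1}$, the identification $\mathrm{im}(\phi_n)=\{p:\deg p\le n\}$ via the unitriangular change of basis $t^{n-j}q_j(t)\leftrightarrow t^{n-j}$, and the necessity direction via the evaluations $\mathrm{ev}_s$; all are sound (the degenerate cases $\sum_{i\ge1}|\beta_i|=0$ or $\delta_0\ge1$ are trivial). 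What your approach buys is a self-contained, quantitative proof avoiding the EHS interpolation/extremal-state theory entirely; what it costs is that, unlike the state-space argument, it does not generalize painlessly to the rank-two cases, where the multiplicities are no longer single ballot numbers and the paper's Remark \ref{remark:SU3} route becomes the practical one.
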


    \begin{remark}
    By the same work as that cited in section \ref{subsubsection:TL}, 
    $A(\mathbf{Rep}(\mathrm{SU}(2)),\pi_1)$ is also an inductive limit of 
    Temperley--Lieb--Jones algebras (with parameter $\delta = 2$), and the 
    $*$-homomorphism $\theta$ has the same description in terms of superimposed 
    Temperley--Lieb diagrams.
    \end{remark}

    \subsection{SU(3)$_k$}\label{subsection:SU3k}

    We will next prove the following result.

    \begin{theorem}\label{theorem:SU3k}
    As ordered rings,
    $$
        K_0(A(\mathbf{Rep}_k(\mathrm{SU}(3)),\pi_{(1,0)}))\cong \mathbb{Z}[s,t]/I_k,
    $$
    where $I_k = \langle P_{(k+1,0)}^{\mathrm{SU}(3)}(s,t),P_{(k+2,0)}^{\mathrm{SU}(3)}(s,t)\rangle$.

    Here, the polynomial $P_{\vec\lambda}^{\mathrm{SU}(3)}(s,t)$ (for
    $\vec\lambda=(\lambda_1,\lambda_2)\in \mathbb{N}_0^{\times 2}$) is obtained from the Laurent polynomial
    ${x}^{-\lambda_1}{y}^{-\lambda_2}Q_{\vec\lambda}^{\mathrm{SU}(3)}(x,y)$
    by performing the change of variables $(s,t) = (x/y^{2},y/x^{2})$, and the
    positive cone on the right hand side is
    $$
        \big(\mathbb{Z}[s,t]/I_k\big)_+ = \{[p(s,t)]\,:\,p(\beta_k,\beta_k)>0\}\cup\{[0]\},
    $$
    where $\beta_k = d(\pi_{(1,0)})^{-1} = [1+2\cos(2\pi/(k+3))]^{-1}$.
    \end{theorem}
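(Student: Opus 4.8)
The plan is to leverage the identification $\phi$ from section \ref{subsection:ring_structure}, which (by Remark \ref{remark:ordered_ring_iso}, since $\mathbf{Rep}_k(\mathrm{SU}(3))$ has finitely many simple objects) is an ordered ring isomorphism of $K_0(A(\mathbf{Rep}_k(\mathrm{SU}(3)),\pi_{(1,0)}))$ onto the ordered subring $F_{\mathcal{C}_1}[\sigma^{-1}]$ of $F_{\mathcal{C}}[\sigma^{-1}]$, where $\sigma = \bar\pi_{(1,0)}\otimes\pi_{(1,0)} = \pi_{(0,1)}\otimes\pi_{(1,0)}$. First I would check that every simple object of $\mathbf{Rep}_k(\mathrm{SU}(3))$ occurs in some tensor power of $\sigma$; this follows because the fusion graph with respect to $\pi_{(1,0)}$ is connected (together with its conjugate $\pi_{(0,1)}$), so $\mathcal{C}_1 = \mathcal{C}$ and $F_{\mathcal{C}_1}[\sigma^{-1}] = F_{\mathcal{C}}[\sigma^{-1}] = \mathrm{Ver}_k(\mathrm{SU}(3))[\sigma^{-1}]$. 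Thus the ring-theoretic task reduces to describing this localization of the Verlinde ring explicitly, and the order-theoretic task reduces to matching the cone $\{x : d(x)>0\}\cup\{0\}$ with the stated one.

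For the ring structure, I would combine Theorem \ref{theorem:Gepner} (the Gepner--Fuchs description $\mathrm{Ver}_k(\mathrm{SU}(3)) \cong \mathbb{Z}[x,y]/J_k(\mathrm{SU}(3))$ with $J_k(\mathrm{SU}(3)) = \langle Q^{\mathrm{SU}(3)}_{(k+1,0)}, Q^{\mathrm{SU}(3)}_{(k+2,0)}\rangle$) with the fact that $\pi_{(1,0)}$ maps to $x$ and $\pi_{(0,1)} = \bar\pi_{(1,0)}$ maps to $y$, so that $\sigma \mapsto xy$. Hence $F_{\mathcal{C}}[\sigma^{-1}] \cong (\mathbb{Z}[x,y]/J_k(\mathrm{SU}(3)))[(xy)^{-1}]$. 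Now I claim $x$ (equivalently $y$) is already invertible in $\mathbb{Z}[x,y]/J_k(\mathrm{SU}(3))$: indeed $xy$ is invertible after localization, and one can write down a polynomial relation expressing $x^{-1}$ as a polynomial in $x,y$ modulo $J_k(\mathrm{SU}(3))$ — concretely, the relation coming from $Q^{\mathrm{SU}(3)}_{(0,k+1)} = \overline{Q^{\mathrm{SU}(3)}_{(k+1,0)}}$ (the conjugate, obtained by swapping $x\leftrightarrow y$) lies in $J_k$ as well and, together with the original generators, allows one to invert $x$ and $y$ separately. Granting this, the localization map is an isomorphism, and I perform the change of variables: set $s = x/y^2$, $t = y/x^2$, which is invertible over the localization with inverse $x = s^{-1}t^{-2}\cdot(\text{monomial})$ — more precisely $x = t^{-1}s^{-2}$ is wrong dimensionally, so I would carefully solve $s = xy^{-2}$, $t = yx^{-2}$ to get $st = x^{-1}y^{-1}$, hence $xy = (st)^{-1}$, and $s/t = x^3 y^{-3}$, giving $x^3 = (s/t)(st)^{-1}\cdot(\ldots)$; the upshot is that $\mathbb{Z}[x^{\pm1},y^{\pm1}] = \mathbb{Z}[s^{\pm1},t^{\pm1}]$ after adjoining a cube root, but since $x,y$ themselves are the generators one checks directly that $\mathbb{Z}[s,t] \hookrightarrow \mathbb{Z}[x^{\pm1},y^{\pm1}]$ and that, modulo $I_k$, $s$ and $t$ become invertible with $[s]^{-1},[t]^{-1}$ polynomial, so $\mathbb{Z}[s,t]/I_k$ is already its own localization and the change of variables carries $J_k(\mathrm{SU}(3))$ (after clearing denominators via the monomial factors $x^{-\lambda_1}y^{-\lambda_2}$, which is exactly how $P^{\mathrm{SU}(3)}_{\vec\lambda}$ is defined) onto $I_k$. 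This gives $F_{\mathcal{C}}[\sigma^{-1}] \cong \mathbb{Z}[s,t]/I_k$ as rings, sending $\pi_{(1,0)} = x$ to the appropriate unit.

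For the order structure, I would trace through the positive cone. Under $\phi$, the cone on $K_0$ corresponds to $\{x \in F_{\mathcal{C}}[\sigma^{-1}] : d(x) > 0\}\cup\{0\}$, where $d$ is the quantum-dimension homomorphism. On generators, $d(x) = d(\pi_{(1,0)}) = 1 + 2\cos(2\pi/(k+3))$ (the quantum dimension of the fundamental representation, computable from the Kac--Peterson formula) and $d(y) = d(\bar\pi_{(1,0)}) = d(\pi_{(1,0)})$ as well. Hence $d(s) = d(x)/d(y)^2 = d(\pi_{(1,0)})^{-1} = \beta_k$ and likewise $d(t) = \beta_k$. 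Therefore the ring homomorphism $d$ becomes, on $\mathbb{Z}[s,t]/I_k$, evaluation $[p(s,t)]\mapsto p(\beta_k,\beta_k)$ — which is well-defined precisely because $(\beta_k,\beta_k)$ is the point of the fusion variety corresponding to $\vec 0$, so it kills $J_k$ and hence $I_k$. This exhibits the stated cone as the pullback of $\mathbb{R}_{>0}\cup\{0\}$, matching the definition, and the order unit $1$ is the class of the trivial representation, with $1$ an order unit since all $d(\pi_\mu) > 0$. So $\phi$ composed with the change-of-variables isomorphism is an ordered ring isomorphism, as desired.

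The main obstacle I anticipate is the bookkeeping around the change of variables $(s,t) = (x/y^2, y/x^2)$: one must verify carefully that this substitution, after the monomial normalization built into the definition of $P^{\mathrm{SU}(3)}_{\vec\lambda}$, carries the two ideal generators $Q^{\mathrm{SU}(3)}_{(k+1,0)}, Q^{\mathrm{SU}(3)}_{(k+2,0)}$ exactly onto generators of $I_k$ and induces a ring isomorphism between the localized Verlinde ring and $\mathbb{Z}[s,t]/I_k$ (in particular that no extra cube-root ambiguity creeps in and that $\mathbb{Z}[s,t]/I_k$ is genuinely isomorphic to the localization rather than merely a quotient or extension of it). Establishing that $x$ and $y$ are \emph{separately} invertible modulo $J_k(\mathrm{SU}(3))$ — equivalently that the fusion variety $V$ avoids the coordinate hyperplanes $\{x=0\}$ and $\{y=0\}$, which holds because $x^{(j)}_{\vec\lambda} = S_{\vec e_j,\vec\lambda}/S_{\vec 0,\vec\lambda}$ is a ratio of nonzero $S$-matrix entries — is the key algebraic input that makes the localization harmless. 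Once that is in hand, everything else is a routine, if slightly tedious, verification.
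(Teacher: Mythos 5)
There are two genuine gaps here, each of which breaks the argument. First, your claim that $\mathcal{C}_1 = \mathcal{C}$, i.e.\ that every simple object of $\mathbf{Rep}_k(\mathrm{SU}(3))$ occurs in some tensor power of $\sigma = \pi_{(0,1)}\otimes\pi_{(1,0)}$, is false: the fusion rules respect the $\mathbb{Z}/3\mathbb{Z}$-grading coming from the centre of $\mathrm{SU}(3)$, the object $\sigma$ has degree $0$, and so $\sigma^{\otimes n}$ only ever contains simples $\pi_{\vec\lambda}$ with $\lambda_1\equiv\lambda_2 \pmod 3$ --- this is exactly the content of the sets $B_0(n)$ in the paper's proof. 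Connectedness of the fusion graph with respect to $\pi_{(1,0)}$ does not give connectedness with respect to $\sigma$. Consequently $\phi$ lands in the proper subring $F_{\mathcal{C}_1}[\sigma^{-1}]\subsetneq \mathrm{Ver}_k(\mathrm{SU}(3))[\sigma^{-1}]$, and the ring your route computes, $\mathbb{Z}[x^{\pm 1},y^{\pm 1}]/\tilde{J}_k$, is the $K_0$ of a different and larger AF-algebra (the algebra $B(\mathbf{Rep}_k(\mathrm{SU}(3)),\pi_{(1,0)})$ discussed in the concluding section), not of $A(\mathbf{Rep}_k(\mathrm{SU}(3)),\pi_{(1,0)})$. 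Under the substitution $(s,t)=(x/y^2,y/x^2)$ the ring $\mathbb{Z}[s,t]$ corresponds precisely to the span of the degree-zero monomials $x^ay^b$ with $a\equiv b\pmod 3$, $a+2b\le 0$ and $2a+b\le 0$ (Lemma \ref{lem:6b}); the ``cube root'' unease in your write-up is this index-three discrepancy surfacing.

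Second, the claim that $x$ and $y$ are separately invertible in $\mathbb{Z}[x,y]/J_k(\mathrm{SU}(3))$, and the supporting assertion that the fusion variety avoids the coordinate hyperplanes, are both wrong. Only the entries $S_{\vec 0,\vec\lambda}$ are guaranteed nonzero (being proportional to quantum dimensions); the numerators $S_{\vec e_j,\vec\lambda}$ can and do vanish, and indeed for $k\in 3\mathbb{Z}$ the fusion matrix $N_{\pi_{(1,0)}}$ is singular over $\mathbb{Q}$ (cf.\ the exact sequence (\ref{equation:SES_SU3})), so $x$ is a zero divisor and localizing at $\sigma=xy$ genuinely changes the ring. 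Even when $N_{\pi_{(1,0)}}$ is nonsingular, nonvanishing on $V$ only yields invertibility over $\mathbb{C}$, whereas invertibility in the integral fusion ring requires $\det N_{\pi_{(1,0)}}=\pm 1$ (Lemma \ref{lemma:invertible}). Finally, even with the correct target $F_{\mathcal{C}_1}[\sigma^{-1}]$ identified, the step you defer as routine --- that $\mathbb{Z}[s,t]/I_k$, with $I_k$ generated by just the two polynomials $P^{\mathrm{SU}(3)}_{(k+1,0)}$ and $P^{\mathrm{SU}(3)}_{(k+2,0)}$, maps isomorphically onto it --- is the actual substance of the proof: the paper builds $\psi$ explicitly on the Bratteli diagram and needs Lemma \ref{lemma:generation} (these two polynomials generate all $P^{\mathrm{SU}(3)}_{\vec\lambda}$ with $|\vec\lambda|=k+1$) together with the degree estimates of Lemma \ref{lem:9} to establish surjectivity and injectivity. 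None of that is automatic from the commutative diagram alone.
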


    \noindent Again the definition of the positive cone makes sense because $(\beta_k,\beta_k)$ is a common zero of the polynomials in $I_k$ by the discussion in section \ref{subsection:fusion_rings}.

    \begin{remark}\label{remark:homom_A2}
        One can show that the AF-algebra 
        $A(\mathbf{Rep}_k(\mathrm{SU}(3)),\pi_{(1,0)})$ is an inductive limit 
        of so-called A$_2$-Temperley--Lieb--Jones algebras (cf.\ the references 
        in Remark \ref{remark:history}) and that the $*$-homomorphism $\theta$ 
        has a description in terms of superimposed A$_2$-Temperley--Lieb 
        diagrams.
    \end{remark}

    As the proof of Theorem \ref{theorem:SU3k} is rather lengthy, we will spend a moment describing the overall strategy. We will first define a group homomorphism $$\psi\colon K_0(A(\mathbf{Rep}_k(\mathrm{SU}(3)),\pi_{(1,0)}))\to \mathbb{Z}[s,t]/I_k$$
    and show that it is an isomorphism of ordered groups. Next, we will show that we have a commutative diagram
    $$
        \begin{CD}
            K_0(A(\mathbf{Rep}_k(\mathrm{SU}(3)),\pi_{(1,0)})) @>\psi >> \mathbb{Z}[s,t]/I_k\\
            @V\phi VV @V\psi_{\mathrm{cv}} VV\\
            \mathrm{Ver}_k(\mathrm{SU}(3))[\sigma^{-1}] @>\psi_{\mathrm{GF}}>> \mathbb{Z}[x^{\pm 1},y^{\pm 1}]/\tilde{J}_k
        \end{CD}
    $$
    where $\phi$ is the injective ring homomorphism from section \ref{subsection:ring_structure}, $\psi_{\mathrm{cv}}$ is given by the aforementioned change of variables, $\tilde{J}_k$ is the image of the fusion ideal $J_k(\mathrm{SU}(3))\subset\mathbb{Z}[x,y]$ in the localization $(\mathbb{Z}[x,y])[(xy)^{-1}] = \mathbb{Z}[x^{\pm 1},y^{\pm 1}]$, and $\psi_{\mathrm{GF}}$ is induced by the ring isomorphism from Theorem \ref{theorem:Gepner}. Finally, we conclude that $\psi = \psi_{\mathrm{cv}}^{-1}\circ\psi_{\mathrm{GF}}\circ\phi$ is in fact an isomorphism of ordered rings.

    \begin{remark}
One can deduce from the above argument that $I_k =
        \{p(s,t)\in\mathbb{Z}[s,t]\,:\,p(x/y^2,y/x^2) = 0\text{ for all
        }(x,y)\in V\setminus\{(0,0)\}\}$, where $V$ is the fusion variety
        associated to $\mathbf{Rep}_k(\mathrm{SU}(3))$, cf.\ section \ref{subsection:fusion_rings}. Moreover, if we
        \emph{define} $I_k$ by this formula then we can right away
        \emph{define} $\psi$ as the composition
        $\psi_{\mathrm{cv}}^{-1}\circ\psi_{\mathrm{GF}}\circ\phi$, and then the
        surjectivity of $\psi$, which takes up most of the following proof, is
        essentially automatic, since we know from the outset that $\psi$ is
        a ring homomorphism. However, in this approach it is not clear how to
        prove that $I_k$ has the stated generating set.
    \end{remark}

    Before we move on to the proof of the theorem, we must go over some preliminaries. We will use the notation $|\vec\lambda| = \lambda_1+\lambda_2$ for $\vec\lambda = (\lambda_1,\lambda_2)\in\mathbb{N}_0^{\times 2}$. Then $|\vec\lambda|$ is precisely the ``level'' of $\vec\lambda$ that we denoted by $\ell(\vec\lambda)$ in section \ref{subsection:fusion_rings}.

\subsubsection{On the polynomials $Q_{\vec\lambda}^{\mathrm{SU}(3)}(x,y)$ and $P_{\vec\lambda}^{\mathrm{SU}(3)}(s,t)$}\label{subsubsection:polys}

    By the well-known fusion rules of $\mathbf{Rep}(\mathrm{SU}(3))$, cf.\ Example \ref{example:SU3_fusion_rules}, the polynomials
    $Q_{\vec\lambda}^{\mathrm{SU}(3)}(x,y)$ are uniquely determined by the
    conditions $Q^{\mathrm{SU}(3)}_{\vec0}(x,y) = 1$,
    $Q^{\mathrm{SU}(3)}_{(1,0)}(x,y) = x$,
$$
    xQ^{\mathrm{SU}(3)}_{\vec\lambda}(x,y) = Q^{\mathrm{SU}(3)}_{\vec\lambda+(1,0)}(x,y)+Q^{\mathrm{SU}(3)}_{\vec\lambda+(0,-1)}(x,y)+Q^{\mathrm{SU}(3)}_{\vec\lambda+(-1,1)}(x,y)
$$
(with the convention that $Q^{\mathrm{SU}(3)}_{\vec\lambda}(x,y) = 0$ if $\vec\lambda\notin\mathbb{N}_0^{\times 2}$), and $$Q^{\mathrm{SU}(3)}_{(\lambda_1,\lambda_2)}(y,x) = Q^{\mathrm{SU}(3)}_{(\lambda_2,\lambda_1)}(x,y).$$ The first few of these polynomials are given in Table \ref{tab:1}.

\begin{table}
\caption{The polynomials $Q^{\mathrm{SU}(3)}_{\vec\lambda}(x,y)$ with $\lambda_1\geq\lambda_2$ and $|\vec\lambda|\leq 6$}
\label{tab:1}
\centering
\begin{tabular}{cl}
\hline\noalign{\smallskip}
$\vec\lambda$ & $Q^{\mathrm{SU}(3)}_{\vec\lambda}(x,y)$  \\
\noalign{\smallskip}\hline\noalign{\smallskip}
$(0,0)$ & 1 \\
\noalign{\smallskip}
$(1,0)$ & $x$ \\
\noalign{\smallskip}
$(2,0)$ & $x^2-y$ \\
$(1,1)$ & $xy-1$ \\
\noalign{\smallskip}
$(3,0)$ & $x^3-2xy+1$ \\
$(2,1)$ & $x^2y-y^2-x$ \\
\noalign{\smallskip}
$(4,0)$ & $x^4-3x^2y+y^2+2x$ \\
$(3,1)$ & $x^3y-2xy^2-x^2+2y$ \\
$(2,2)$ & $x^2y^2-x^3-y^3$ \\
\noalign{\smallskip}
$(5,0)$ & $x^5-4x^3y+3xy^2+3x^2-2y$ \\
$(4,1)$ & $x^4y-3x^2y^2+y^3-x^3+4xy-1$ \\
$(3,2)$ & $x^3y^2-2xy^3-x^4+x^2y+2y^2-x$ \\
\noalign{\smallskip}
$(6,0)$ & $x^6-5x^4y+6x^2y^2-y^3+4x^3-6xy+1$ \\
$(5,1)$ & $x^5y-4x^3y^2-x^4+3xy^3+6x^2y-3y^2-2x$ \\
$(4,2)$ & $x^4y^2-3x^2y^3-x^5+y^4+2x^3y+3xy^2-2x^2-y$ \\
$(3,3)$ & $x^3y^3-2x^4y-2xy^4+3x^2y^2+2x^3+2y^3-5xy+1$ \\
\noalign{\smallskip}\hline
\end{tabular}
\end{table}

We will next show that the expression $x^{-\lambda_1}y^{-\lambda_2}Q^{\mathrm{SU}(3)}_{\vec\lambda}(x,y)$ is a polynomial in $s = x/y^2$ and $t = y/x^2$. This follows immediately from the next two lemmas.

\begin{lemma}\label{lem:6} Let $\vec{\lambda}\in \mathbb{N}_0^{\times 2}$ be given. Then there exist integers $\gamma_{\vec{\nu}}$ such that
$$
    Q^{\mathrm{SU}(3)}_{\vec{\lambda}}(x,y) = x^{\lambda_1}y^{\lambda_2}+\sum_{\vec\nu\in\mathbb{N}_0^{\times 2}}\gamma_{\vec{\nu}}x^{\nu_1}y^{\nu_2},
$$
where $\gamma_{\vec\nu} = 0$ unless the following conditions are satisfied: $\nu_1+2\nu_2\leq \lambda_1+2\lambda_2$, $2\nu_1+\nu_2\leq 2\lambda_1+\lambda_2$, $|\vec\nu|<|\vec\lambda|$ and $\nu_1-\nu_2\equiv\lambda_1-\lambda_2$ (mod 3).
\end{lemma}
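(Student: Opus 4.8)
\noindent The plan is to prove the lemma by strong induction on the level $L := |\vec\lambda| = \lambda_1 + \lambda_2$, the main input being the recursion of section \ref{subsubsection:polys} solved for its ``top'' term: whenever $\lambda_1 \geq 1$,
\begin{equation*}
Q^{\mathrm{SU}(3)}_{\vec\lambda} = x\,Q^{\mathrm{SU}(3)}_{(\lambda_1-1,\lambda_2)} - Q^{\mathrm{SU}(3)}_{(\lambda_1-1,\lambda_2-1)} - Q^{\mathrm{SU}(3)}_{(\lambda_1-2,\lambda_2+1)},
\end{equation*}
where, following the convention of section \ref{subsubsection:polys}, a $Q^{\mathrm{SU}(3)}$ with an index outside $\mathbb{N}_0^{\times 2}$ is read as $0$. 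The three indices on the right have levels $L-1$, $L-2$ and $L-1$, so the inductive hypothesis applies to each non-vanishing term. The base cases $L = 0$ (where $Q^{\mathrm{SU}(3)}_{\vec0} = 1$) and $L = 1$ (where $Q^{\mathrm{SU}(3)}_{(1,0)} = x$ and $Q^{\mathrm{SU}(3)}_{(0,1)} = y$) are immediate.

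First I would record a symmetry reduction. The four conditions imposed on $\vec\nu$ are invariant under the simultaneous substitution $(\lambda_1,\lambda_2)\mapsto(\lambda_2,\lambda_1)$, $(\nu_1,\nu_2)\mapsto(\nu_2,\nu_1)$: the two linear inequalities are interchanged, the condition $|\vec\nu| < |\vec\lambda|$ is symmetric, and the congruence $\nu_1 - \nu_2 \equiv \lambda_1 - \lambda_2 \pmod{3}$ is merely negated, hence unchanged modulo $3$. Since $Q^{\mathrm{SU}(3)}_{(\lambda_1,\lambda_2)}(x,y) = Q^{\mathrm{SU}(3)}_{(\lambda_2,\lambda_1)}(y,x)$, the lemma for $\vec\lambda$ is therefore equivalent to the lemma for $(\lambda_2,\lambda_1)$. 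Consequently, in the inductive step at a given level $L \geq 2$ it suffices to treat $\vec\lambda$ with $\lambda_1 \geq 1$; the remaining case $\vec\lambda = (0,\lambda_2)$ then follows from the case $(\lambda_2,0)$, which has first coordinate $\geq 1$ and whose treatment (via the displayed recursion) invokes the inductive hypothesis only at levels $< L$.

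So assume $\lambda_1 \geq 1$ and $L \geq 2$, and substitute into the displayed recursion the three expansions supplied by the inductive hypothesis. Two verifications then finish the proof. (i) The monomial $x^{\lambda_1}y^{\lambda_2}$ occurs on the right with coefficient exactly $1$ and survives any cancellation: it arises only as $x$ times the leading term $x^{\lambda_1-1}y^{\lambda_2}$ of $Q^{\mathrm{SU}(3)}_{(\lambda_1-1,\lambda_2)}$, every other monomial in sight having total degree $\leq L-1$ (the remaining monomials of $Q^{\mathrm{SU}(3)}_{(\lambda_1-1,\lambda_2)}$ have degree $< L-1$, those of $Q^{\mathrm{SU}(3)}_{(\lambda_1-1,\lambda_2-1)}$ degree $\leq L-2$, and those of $Q^{\mathrm{SU}(3)}_{(\lambda_1-2,\lambda_2+1)}$ degree $\leq L-1$). (ii) Every other monomial $x^{\nu_1}y^{\nu_2}$ occurring on the right satisfies the four conditions relative to $\vec\lambda$, which is checked term by term: for a monomial of $Q^{\mathrm{SU}(3)}_{(\lambda_1-1,\lambda_2)}$, multiplication by $x$ raises $\nu_1+2\nu_2$, $2\nu_1+\nu_2$, $\nu_1-\nu_2$ and $|\vec\nu|$ by $1$, $2$, $1$ and $1$ respectively, exactly matching the increments that $\lambda_1+2\lambda_2$, $2\lambda_1+\lambda_2$, $\lambda_1-\lambda_2$ and $|\vec\lambda|$ undergo when one passes from $(\lambda_1-1,\lambda_2)$ to $\vec\lambda$, so the conditions carry over verbatim; for a monomial of $Q^{\mathrm{SU}(3)}_{(\lambda_1-1,\lambda_2-1)}$ one has $\nu_1+2\nu_2 \leq \lambda_1+2\lambda_2-3$, $2\nu_1+\nu_2 \leq 2\lambda_1+\lambda_2-3$, $|\vec\nu| \leq L-2$ and $\nu_1-\nu_2 \equiv \lambda_1-\lambda_2 \pmod{3}$; and for a monomial of $Q^{\mathrm{SU}(3)}_{(\lambda_1-2,\lambda_2+1)}$ one has $\nu_1+2\nu_2 \leq \lambda_1+2\lambda_2$, $2\nu_1+\nu_2 \leq 2\lambda_1+\lambda_2-3$, $|\vec\nu| \leq L-1$ and $\nu_1-\nu_2 \equiv \lambda_1-\lambda_2-3 \equiv \lambda_1-\lambda_2 \pmod{3}$. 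In every case the four conditions hold, which completes the induction. I expect the only genuine work here to be this last, essentially elementary, per-term bookkeeping of inequalities and parities; the one subtlety is that the second and third terms of the recursion may vanish (when $\lambda_2 = 0$, resp.\ when $\lambda_1 = 1$), but since discarding a term only relaxes the inequalities to be checked, this causes no difficulty.
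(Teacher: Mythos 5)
Your proof is correct and is precisely the induction on $|\vec\lambda|$ that the paper invokes in one line, with the details (solving the defining recursion for its top term, the symmetry reduction to $\lambda_1\geq 1$, and the per-term verification of the four conditions) filled in accurately. No gaps.
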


\begin{proof}
This follows by induction on $|\vec\lambda|$.
\end{proof}

\begin{lemma}\label{lem:6b}
Let $a,b\in\mathbb{Z}$ be given. Then $x^ay^b$ is a polynomial in $s = x/y^2$ and $t = y/x^2$ if and only if $a\equiv b$ (mod 3), $a+2b\leq 0$ and $2a+b\leq 0$.
\end{lemma}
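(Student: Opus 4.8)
The plan is to work inside the Laurent polynomial ring $\mathbb{Z}[x^{\pm1},y^{\pm1}]$, in which the Laurent monomials $x^cy^d$ form a $\mathbb{Z}$-basis, and to reduce everything to a single linear-algebra observation. Note first that for $i,j\in\mathbb{N}_0$ one has $s^it^j = (xy^{-2})^i(x^{-2}y)^j = x^{i-2j}y^{j-2i}$, so the exponent map $(i,j)\mapsto(i-2j,\,j-2i)$ controls which Laurent monomials arise. This map is given by the matrix $\left(\begin{smallmatrix}1&-2\\-2&1\end{smallmatrix}\right)$, whose determinant is $-3\neq0$; hence it is injective, and distinct monomials $s^it^j$ map to distinct (and in particular linearly independent) Laurent monomials in $x,y$. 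In passing, this already shows $s$ and $t$ are algebraically independent, so ``polynomial in $s$ and $t$'' carries no hidden ambiguity.

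For the ``if'' direction I would solve the system explicitly. Given $a\equiv b\pmod 3$ with $a+2b\leq0$ and $2a+b\leq0$, put $i=-(a+2b)/3$ and $j=-(2a+b)/3$. The congruence (which is equivalent to $3\mid a+2b$ and $3\mid 2a+b$) makes $i,j$ integers, and the two inequalities make them nonnegative; a one-line substitution confirms $i-2j=a$ and $j-2i=b$, so $x^ay^b=s^it^j$ is a monomial, hence a polynomial, in $s$ and $t$.

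For the ``only if'' direction, suppose $x^ay^b=P(s,t)$ with $P\in\mathbb{Z}[s,t]$. Writing $P=\sum_{i,j}c_{ij}s^it^j$ and substituting $s=xy^{-2}$, $t=x^{-2}y$ expresses $x^ay^b$ as $\sum_{i,j}c_{ij}x^{i-2j}y^{j-2i}$. By the injectivity recorded above the Laurent monomials appearing on the right are pairwise distinct, so comparing coefficients forces a unique $(i,j)\in\mathbb{N}_0^{\times 2}$ with $c_{ij}=1$ and $(i-2j,\,j-2i)=(a,b)$. From $a=i-2j$ and $b=j-2i$ with $i,j\geq0$ we then read off $a+2b=-3i\leq0$, $2a+b=-3j\leq0$ and $a-b=3(i-j)\equiv0\pmod3$, which are the claimed conditions.

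I do not anticipate a genuine obstacle; the proof is essentially bookkeeping around the non-degenerate integer matrix $\left(\begin{smallmatrix}1&-2\\-2&1\end{smallmatrix}\right)$. The only point deserving care is to invoke linear independence of Laurent monomials (rather than of ordinary monomials), so that the ``only if'' argument applies to arbitrary polynomials $P$ and not merely monomials; this is immediate once one passes to $\mathbb{Z}[x^{\pm1},y^{\pm1}]$.
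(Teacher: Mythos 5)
Your proof is correct and follows essentially the same route as the paper, whose entire proof is the one-line observation that $x^ay^b = s^kt^l$ with $k = -(a+2b)/3$ and $l = -(2a+b)/3$. You merely make explicit the ``only if'' direction (via linear independence of Laurent monomials in $\mathbb{Z}[x^{\pm1},y^{\pm1}]$), which the paper leaves to the reader.
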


\begin{proof}
Note that $x^ay^b = s^k t^l$ with $k = -(a+2b)/3$ and $l = -(2a+b)/3$.
\end{proof}

\noindent Denote by $P^{\mathrm{SU}(3)}_{\vec\lambda}(s,t)$ the polynomial obtained by performing the change of variables $s = x/y^2$ and $t = y/x^2$ in the expression $x^{-\lambda_1}y^{-\lambda_2}Q^{\mathrm{SU}(3)}_{\vec\lambda}(x,y)$. The first few of these polynomials are given in Table \ref{tab:2}. Note that $P^{\mathrm{SU}(3)}_{(\lambda_1,\lambda_2)}(t,s) = P^{\mathrm{SU}(3)}_{(\lambda_2,\lambda_1)}(s,t)$ for all $\vec\lambda\in\mathbb{N}_0^{\times 2}$, and that\label{page:recursion} we have the recursion formulae
$$
    P^{\mathrm{SU}(3)}_{\vec{\lambda}}(s,t) = \begin{cases} P^{\mathrm{SU}(3)}_{\vec{\lambda}+(-1,0)}(s,t)-stP^{\mathrm{SU}(3)}_{\vec{\lambda}+(-1,-1)}(s,t)-tP^{\mathrm{SU}(3)}_{\vec{\lambda}+(-2,1)}(s,t)&\!\!\!\text{if }\lambda_1\geq 1,\\
    P^{\mathrm{SU}(3)}_{\vec{\lambda}+(0,-1)}(s,t)-stP^{\mathrm{SU}(3)}_{\vec{\lambda}+(-1,-1)}(s,t)-sP^{\mathrm{SU}(3)}_{\vec{\lambda}+(1,-2)}(s,t)&\!\!\!\text{if }\lambda_2\geq 1
    \end{cases}
$$
(with the convention that $P^{\mathrm{SU}(3)}_{\vec{\lambda}}(s,t) = 0$ if $\vec\lambda\notin\mathbb{N}_0^{\times 2}$). In particular,
$$
    P^{\mathrm{SU}(3)}_{\vec{\lambda}}(0,t) = \begin{cases}P^{\mathrm{SU}(3)}_{\vec{\lambda}+(-1,0)}(0,t)-tP^{\mathrm{SU}(3)}_{\vec{\lambda}+(-2,1)}(0,t) &\text{if }\lambda_1\geq 1,\\
    P^{\mathrm{SU}(3)}_{\vec{\lambda}+(0,-1)}(0,t)&\text{if }\lambda_2\geq 1.\end{cases}
$$
We deduce from these identities that $P^{\mathrm{SU}(3)}_{\vec\lambda}(0,t)$ is independent of $\lambda_2$ and (by using that $P_0^{\mathrm{SU}(2)}(t) = P_1^{\mathrm{SU}(2)}(t)=1$ and $P_{\lambda+1}^{\mathrm{SU}(2)} = P_\lambda^{\mathrm{SU}(2)}(t)-tP_{\lambda-1}^{\mathrm{SU}(2)}(t)$ for $\lambda\in\mathbb{N}$) that we in fact have the following lemma.

\begin{lemma}\label{lem:N}
    For every $\vec\lambda\in\mathbb{N}_0^{\times 2}$, we have the identities $P^{\mathrm{SU}(3)}_{\vec\lambda}(0,t) = P^{\mathrm{SU}(2)}_{\lambda_1}(t)$ and $P^{\mathrm{SU}(3)}_{\vec\lambda}(s,0) = P^{\mathrm{SU}(2)}_{\lambda_2}(s)$.
\end{lemma}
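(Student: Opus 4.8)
The plan is to leverage the two recursions for $P^{\mathrm{SU}(3)}_{\vec\lambda}(0,t)$ displayed just above the lemma and reduce the claim to the defining recursion of the polynomials $P^{\mathrm{SU}(2)}_{\lambda}(t)$. No genuinely hard step is involved; the proof is essentially bookkeeping of indices together with a check of base cases.

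First I would establish that $P^{\mathrm{SU}(3)}_{(\lambda_1,\lambda_2)}(0,t)$ does not depend on $\lambda_2$: the case $\lambda_2\geq 1$ of the displayed recursion reads $P^{\mathrm{SU}(3)}_{(\lambda_1,\lambda_2)}(0,t) = P^{\mathrm{SU}(3)}_{(\lambda_1,\lambda_2-1)}(0,t)$, and iterating this down to $\lambda_2=0$ gives $P^{\mathrm{SU}(3)}_{(\lambda_1,\lambda_2)}(0,t) = P^{\mathrm{SU}(3)}_{(\lambda_1,0)}(0,t)$. Set $f_a(t) \df P^{\mathrm{SU}(3)}_{(a,0)}(0,t)$ for $a\in\mathbb{N}_0$.

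Next I would identify $f_a$ with $P^{\mathrm{SU}(2)}_a$ by induction on $a$. Since $Q^{\mathrm{SU}(3)}_{(0,0)}(x,y)=1$ and $Q^{\mathrm{SU}(3)}_{(1,0)}(x,y)=x$, the change of variables gives $P^{\mathrm{SU}(3)}_{(0,0)}\equiv 1$ and $P^{\mathrm{SU}(3)}_{(1,0)}\equiv 1$, so $f_0=f_1=1$, matching $P^{\mathrm{SU}(2)}_0(t)=P^{\mathrm{SU}(2)}_1(t)=1$. For $a\geq 2$, the case $\lambda_1\geq 1$ of the displayed recursion applied to $\vec\lambda=(a,0)$ gives $f_a(t)=P^{\mathrm{SU}(3)}_{(a-1,0)}(0,t)-t\,P^{\mathrm{SU}(3)}_{(a-2,1)}(0,t)$, and the $\lambda_2$-independence from the previous step rewrites $P^{\mathrm{SU}(3)}_{(a-2,1)}(0,t)=P^{\mathrm{SU}(3)}_{(a-2,0)}(0,t)=f_{a-2}(t)$, so $f_a(t)=f_{a-1}(t)-t\,f_{a-2}(t)$. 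This is precisely the recursion $P^{\mathrm{SU}(2)}_a=P^{\mathrm{SU}(2)}_{a-1}-tP^{\mathrm{SU}(2)}_{a-2}$ recorded above, so since the initial terms agree we conclude $f_a=P^{\mathrm{SU}(2)}_a$ for all $a$, i.e.\ $P^{\mathrm{SU}(3)}_{\vec\lambda}(0,t)=P^{\mathrm{SU}(2)}_{\lambda_1}(t)$.

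Finally I would deduce the second identity from the symmetry relation $P^{\mathrm{SU}(3)}_{(\lambda_1,\lambda_2)}(t,s)=P^{\mathrm{SU}(3)}_{(\lambda_2,\lambda_1)}(s,t)$ recorded above: reading this as a polynomial identity in two indeterminates and substituting $t=s$, $s=0$ yields $P^{\mathrm{SU}(3)}_{(\lambda_1,\lambda_2)}(s,0)=P^{\mathrm{SU}(3)}_{(\lambda_2,\lambda_1)}(0,s)=P^{\mathrm{SU}(2)}_{\lambda_2}(s)$ by the first part. The only points that need care are confirming that the step replacing $(a-2,1)$ by $(a-2,0)$ stays inside $\mathbb{N}_0^{\times 2}$ (true since $a\geq 2$) and that both base values $P^{\mathrm{SU}(2)}_0$ and $P^{\mathrm{SU}(2)}_1$ equal $1$, so that no separate treatment of $f_1$ is required.
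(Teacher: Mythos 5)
Your proof is correct and follows exactly the route the paper takes (the paper only sketches it in the sentence preceding the lemma): set $s=0$ in the two displayed recursion formulae to get $\lambda_2$-independence and the Chebyshev-type recursion $f_a=f_{a-1}-tf_{a-2}$, match base cases with $P^{\mathrm{SU}(2)}_0=P^{\mathrm{SU}(2)}_1=1$, and obtain the second identity from the symmetry $P^{\mathrm{SU}(3)}_{(\lambda_1,\lambda_2)}(t,s)=P^{\mathrm{SU}(3)}_{(\lambda_2,\lambda_1)}(s,t)$. Your attention to the boundary cases (the vanishing convention for $(a-2,1)$ when $a<2$) is a welcome addition to the paper's terser treatment.
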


\begin{table}
\caption{The polynomials $P^{\mathrm{SU}(3)}_{\vec\lambda}(s,t)$ with $\lambda_1\geq\lambda_2$ and $|\vec\lambda|\leq 6$}
\label{tab:2}
\centering
\begin{tabular}{cl}
\hline\noalign{\smallskip}
$\vec\lambda$ & $P^{\mathrm{SU}(3)}_{\vec\lambda}(s,t)$  \\
\noalign{\smallskip}\hline\noalign{\smallskip}
$(0,0)$ & 1 \\
\noalign{\smallskip}
$(1,0)$ & $1$ \\
\noalign{\smallskip}
$(2,0)$ & $1-t$ \\
$(1,1)$ & $1-st$ \\
\noalign{\smallskip}
$(3,0)$ & $1-2t+st^2$ \\
$(2,1)$ & $1-t-st$ \\
\noalign{\smallskip}
$(4,0)$ & $1-3t+t^2+2st^2$ \\
$(3,1)$ & $1-2t-st+2st^2$ \\
$(2,2)$ & $1-s-t$ \\
\noalign{\smallskip}
$(5,0)$ & $1-4t+3t^2+3st^2-2st^3$ \\
$(4,1)$ & $1-3t+t^2-st+4st^2-s^2t^3$ \\
$(3,2)$ & $1-2t-s+st+2st^2-s^2t^2$ \\
\noalign{\smallskip}
$(6,0)$ & $1-5t+6t^2-t^3+4st^2-6st^3+s^2t^4$ \\
$(5,1)$ & $1-4t-st+3t^2+6st^2-3st^3-2s^2t^3$ \\
$(4,2)$ & $1-3t-s+t^2+2st+3st^2-2s^2t^2-s^2t^3$ \\
$(3,3)$ & $1-2s-2t+3st+2s^2t+2st^2-5s^2t^2+s^3t^3$ \\
\noalign{\smallskip}\hline
\end{tabular}
\end{table}

\noindent The next lemma is analogous to the following special case of the results in \cite{G}: $$J_k(\mathrm{SU}(3)) = \langle Q_{(k+1,0)}^{\mathrm{SU}(3)}(x,y),Q_{(k+2,0)}^{\mathrm{SU}(3)}(x,y) \rangle = \langle Q_{\vec\lambda}^{\mathrm{SU}(3)}(x,y)\,:\,|\vec\lambda| = k+1\rangle.$$

\begin{lemma}\label{lemma:generation}
    For every $k\in\mathbb{N}$,
    $$
        I_k \df \langle P_{(k+1,0)}^{\mathrm{SU}(3)}(s,t),P_{(k+2,0)}^{\mathrm{SU}(3)}(s,t) \rangle = \langle P_{\vec\lambda}^{\mathrm{SU}(3)}(s,t)\,:\,|\vec\lambda| = k+1\rangle.
    $$
\end{lemma}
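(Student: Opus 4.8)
The plan is to show that the two ideals in question coincide by showing each contains the other. The inclusion $\langle P_{(k+1,0)}^{\mathrm{SU}(3)}, P_{(k+2,0)}^{\mathrm{SU}(3)}\rangle \subseteq \langle P_{\vec\lambda}^{\mathrm{SU}(3)}\,:\,|\vec\lambda|=k+1\rangle$ is almost trivial: $(k+1,0)$ itself satisfies $|\vec\lambda|=k+1$, so $P_{(k+1,0)}^{\mathrm{SU}(3)}$ is one of the generators on the right, and for $P_{(k+2,0)}^{\mathrm{SU}(3)}$ I would use the recursion formulae on page~\pageref{page:recursion} to express it in terms of $P_{\vec\lambda}^{\mathrm{SU}(3)}$ with $|\vec\lambda|=k+1$ modulo the ideal generated by those same polynomials; concretely, applying the $\lambda_1\geq 1$ recursion to $(k+2,0)$ writes it as $P_{(k+1,0)}^{\mathrm{SU}(3)} - st\,P_{(k+1,-1)}^{\mathrm{SU}(3)} - t\,P_{(k,1)}^{\mathrm{SU}(3)} = P_{(k+1,0)}^{\mathrm{SU}(3)} - t\,P_{(k,1)}^{\mathrm{SU}(3)}$ (the middle term vanishing by the convention), and both $P_{(k+1,0)}^{\mathrm{SU}(3)}$ and $P_{(k,1)}^{\mathrm{SU}(3)}$ lie in the right-hand ideal.

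For the reverse inclusion I would mimic Gepner's argument for the $Q$-polynomials, which is precisely the statement displayed just before the lemma. The key observation is that the change of variables $(s,t)=(x/y^2,y/x^2)$, or rather the passage $Q_{\vec\lambda}^{\mathrm{SU}(3)}(x,y)\mapsto x^{-\lambda_1}y^{-\lambda_2}Q_{\vec\lambda}^{\mathrm{SU}(3)}(x,y) = P_{\vec\lambda}^{\mathrm{SU}(3)}(s,t)$, is compatible with the recursion in a controlled way. In $\mathbf{Rep}(\mathrm{SU}(3))$ one has, for any $\vec\mu$ with $|\vec\mu|=m$, that $Q_{(1,0)}^{\mathrm{SU}(3)}\cdot Q_{\vec\mu}^{\mathrm{SU}(3)}$ expands into a sum of $Q_{\vec\nu}^{\mathrm{SU}(3)}$ with $|\vec\nu|\in\{m-1,m+1\}$ according to the fusion graph in Figure~\ref{fig:4}. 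Starting from $\vec\lambda$ with $|\vec\lambda|=k+1$ and multiplying repeatedly by $x$ (equivalently, applying $M_\sigma$-type steps), one generates $Q_{\vec\nu}^{\mathrm{SU}(3)}$ for all $\vec\nu$ with $|\vec\nu|>k$ inside the ideal $\langle Q_{\vec\lambda}^{\mathrm{SU}(3)}\,:\,|\vec\lambda|=k+1\rangle$ — this is exactly the content of Gepner's identity. Translating through the substitution (and dividing by the appropriate monomial $x^{\nu_1}y^{\nu_2}$, which is legitimate since we work in $\mathbb{Z}[x^{\pm1},y^{\pm1}]$ where these monomials are units), one obtains that every $P_{(k+1,0)}^{\mathrm{SU}(3)}$ and $P_{(k+2,0)}^{\mathrm{SU}(3)}$ generate all the $P_{\vec\lambda}^{\mathrm{SU}(3)}$ with $|\vec\lambda|=k+1$ — but that is backwards; rather, one argues directly that each $P_{\vec\lambda}^{\mathrm{SU}(3)}$ with $|\vec\lambda|=k+1$ lies in $I_k$.

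More carefully, the strategy for $\langle P_{\vec\lambda}^{\mathrm{SU}(3)}\,:\,|\vec\lambda|=k+1\rangle \subseteq I_k$ is a downward induction on $\lambda_2$ (equivalently, on $|\lambda_1-\lambda_2|$) among the indices with $|\vec\lambda|=k+1$. The base case $\lambda_2=0$, i.e.\ $\vec\lambda=(k+1,0)$, is one of the two explicit generators of $I_k$; by the symmetry $P_{(\lambda_1,\lambda_2)}^{\mathrm{SU}(3)}(s,t)=P_{(\lambda_2,\lambda_1)}^{\mathrm{SU}(3)}(t,s)$ one should first check that $I_k$ is symmetric under $s\leftrightarrow t$, which reduces to showing $P_{(0,k+1)}^{\mathrm{SU}(3)}(s,t)\in I_k$ and $P_{(0,k+2)}^{\mathrm{SU}(3)}(s,t)\in I_k$; the former is $P_{(k+1,0)}^{\mathrm{SU}(3)}(t,s)$ and one handles it by the same recursion argument as the inductive step. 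For the inductive step, given $\vec\lambda$ with $|\vec\lambda|=k+1$ and $1\leq\lambda_2$, I would use the recursion formula to write $P_{\vec\lambda}^{\mathrm{SU}(3)}$ in terms of $P_{\vec\mu}^{\mathrm{SU}(3)}$ with smaller $\lambda_2$-coordinate but possibly larger total degree; the point is that the terms of total degree $k+2$ that appear are exactly combinations of the known generator $P_{(k+2,0)}^{\mathrm{SU}(3)}$ (and its $s\leftrightarrow t$ partner) up to lower-level terms, so everything stays inside $I_k$.

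The main obstacle I anticipate is bookkeeping: making the downward induction precise requires carefully tracking which $P_{\vec\nu}^{\mathrm{SU}(3)}$ of total degree $k+2$ arise and confirming they are all $\mathbb{Z}[s,t]$-combinations of $P_{(k+2,0)}^{\mathrm{SU}(3)}$, $P_{(0,k+2)}^{\mathrm{SU}(3)}$, and $P_{\vec\lambda}^{\mathrm{SU}(3)}$ with $|\vec\lambda|=k+1$. The cleanest way to sidestep this may be to transport the whole statement back to the $Q$-polynomials: multiply the identity $\langle P_{\vec\lambda}^{\mathrm{SU}(3)}\,:\,|\vec\lambda|=k+1\rangle = I_k$ through by a suitable power of $xy$ to get a statement in $\mathbb{Z}[x^{\pm1},y^{\pm1}]$, and observe that it follows from Gepner's identity $\langle Q_{(k+1,0)}^{\mathrm{SU}(3)}, Q_{(k+2,0)}^{\mathrm{SU}(3)}\rangle = \langle Q_{\vec\lambda}^{\mathrm{SU}(3)}\,:\,|\vec\lambda|=k+1\rangle$ after localizing at $xy$ — since the substitution $(s,t)=(x/y^2,y/x^2)$ together with the monomial normalization is exactly an isomorphism between the relevant localized polynomial rings. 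I would present the proof in this second form, as it reduces the lemma to a quotation of \cite{G} plus the elementary observation that localization and the monomial change of variables commute with forming ideals.
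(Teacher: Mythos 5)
Your first inclusion is exactly the paper's: the recursion gives $P_{(k+2,0)}^{\mathrm{SU}(3)} = P_{(k+1,0)}^{\mathrm{SU}(3)} - tP_{(k,1)}^{\mathrm{SU}(3)}$, so $I_k$ sits inside the right-hand ideal. The problem is the reverse inclusion, where the argument you commit to (transporting Gepner's identity through $\mathbb{Z}[x^{\pm 1},y^{\pm 1}]$) has a genuine gap. The substitution $p(s,t)\mapsto p(x/y^2,y/x^2)$ is an injective ring map but is far from an isomorphism onto $\mathbb{Z}[x^{\pm 1},y^{\pm 1}]$; even after inverting $st=1/(xy)$ its image is only the span of the monomials $x^ay^b$ with $a\equiv b$ (mod $3$). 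Granting that the two ideals have the same extension to $\mathbb{Z}[x^{\pm 1},y^{\pm 1}]$ (which does follow from Gepner, since the normalizing monomials are units there), contracting back to $\mathbb{Z}[s,t]$ does not return $I_k$: writing $Q_{\vec\lambda}^{\mathrm{SU}(3)} = aQ_{(k+1,0)}^{\mathrm{SU}(3)}+bQ_{(k+2,0)}^{\mathrm{SU}(3)}$ and renormalizing, the coefficients $x^{k+1-\lambda_1}y^{-\lambda_2}a$ and $x^{k+2-\lambda_1}y^{-\lambda_2}b$ have exponents that in general violate the inequalities of Lemma \ref{lem:6b}, so they are not polynomials in $s,t$; clearing this by multiplying through by a power of $st$ yields only $(st)^NP_{\vec\lambda}^{\mathrm{SU}(3)}\in I_k$. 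To strip off $(st)^N$ you would need $I_k$ to be saturated with respect to $st$ (equivalently, $st$ a non-zero-divisor on $\mathbb{Z}[s,t]/I_k$), which is in effect the assertion that $I_k$ equals the contraction of $\tilde{J}_k$ --- a fact the paper records only as a consequence of Theorem \ref{theorem:SU3k}, whose proof uses the present lemma. So the shortcut is circular, and extension--contraction of ideals along a non-surjective map simply does not preserve "generated by these elements".

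Your fallback sketch --- a descending induction among the indices of level $k+1$ --- is the correct idea and is what the paper actually does, but you leave precisely the decisive bookkeeping open, and your guess about which level-$(k+2)$ polynomials intervene is off. The paper introduces the chain $I^{(j)}=\langle\{P_{(i,k+1-i)}^{\mathrm{SU}(3)}:i=j+1,\ldots,k+1\}\cup\{P_{(k+2,0)}^{\mathrm{SU}(3)}\}\rangle$ and shows $P_{(j,k+1-j)}^{\mathrm{SU}(3)}\in I^{(j)}$ for all $j$ by induction, using only the two identities $P_{(0,k+1)}^{\mathrm{SU}(3)} = P_{(1,k)}^{\mathrm{SU}(3)}-sP_{(2,k-1)}^{\mathrm{SU}(3)}$ and $tP_{(j,k+1-j)}^{\mathrm{SU}(3)} = P_{(j+1,k-j)}^{\mathrm{SU}(3)}-P_{(j+2,k-j-1)}^{\mathrm{SU}(3)}+sP_{(j+3,k-j-2)}^{\mathrm{SU}(3)}$; in particular no appeal to $P_{(0,k+2)}^{\mathrm{SU}(3)}$ or to symmetry of $I_k$ under $s\leftrightarrow t$ is needed (that symmetry is a consequence of the lemma, not an available input). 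Without these identities, or some substitute for them, the inclusion $\langle P_{\vec\lambda}^{\mathrm{SU}(3)}:|\vec\lambda|=k+1\rangle\subseteq I_k$ is not established.
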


\begin{proof}
    Fix $k\in\mathbb{N}$. Put $I = \langle P_{\vec\lambda}^{\mathrm{SU}(3)}(s,t)\,:\,|\vec\lambda| = k+1\rangle$. Since
    $
        P^{\mathrm{SU}(3)}_{(k+2,0)}(s,t) = P^{\mathrm{SU}(3)}_{(k+1,0)}(s,t)-tP^{\mathrm{SU}(3)}_{(k,1)}(s,t)\in I,
    $
    we have that $I_k\subset I$.

    Put $I^{(j)}=\langle\{P^{\mathrm{SU}(3)}_{(i,k+1-i)}(s,t)\,:\,i=j+1,\ldots,k+1\}\cup\{P^{\mathrm{SU}(3)}_{(k+2,0)}(s,t)\}\rangle$ for $j=0,\ldots,k$ (so that $I^{(0)}\supset I^{(1)}\supset\cdots\supset I^{(k)} = I_k$). We claim that
    $$
        P^{\mathrm{SU}(3)}_{(j,k+1-j)}(s,t)\in I^{(j)}
    $$
    for all $j = 0,\ldots,k$. The claim implies first that $P^{\mathrm{SU}(3)}_{(k,1)}(s,t)\in I^{(k)}$, whereby $I^{(k-1)} = I^{(k)}$, next that $P^{\mathrm{SU}(3)}_{(k-1,2)}(s,t)\in I^{(k-1)}$, whereby $I^{(k-2)} = I^{(k)}$, etc. Continuing in this way, the claim implies that $I\subset I^{(k)} = I_k$.

    We skip the easy proof of the claim, which proceeds by induction on $j$ using the identity
$P^{\mathrm{SU}(3)}_{(0,k+1)}(s,t) = P^{\mathrm{SU}(3)}_{(1,k)}(s,t)-sP^{\mathrm{SU}(3)}_{(2,k-1)}(s,t)$, which implies that $P^{\mathrm{SU}(3)}_{(0,k+1)}(s,t)\in I^{(0)}$, and the identity
    $$
        tP^{\mathrm{SU}(3)}_{(j,k+1-j)}(s,t) = P^{\mathrm{SU}(3)}_{(j+1,k-j)}(s,t)-P^{\mathrm{SU}(3)}_{(j+2,k-j-1)}(s,t)+sP^{\mathrm{SU}(3)}_{(j+3,k-j-2)}(s,t),
    $$
    which is valid for $j = 0,\ldots,k-1$.
\end{proof}

In the statement of the following lemma, we use the floor function $\lfloor\cdot\rfloor$ defined by $\lfloor x\rfloor = \max\{n\in\mathbb{Z}\,:\,n\leq x\}$ for $x\in\mathbb{R}$.

\begin{lemma}\label{lem:9}
Given $\vec{\lambda}\in\mathbb{N}_0^{\times 2}$, we can write
\begin{equation}
    P^{\mathrm{SU}(3)}_{\vec{\lambda}}(s,t) = \sum_{j=0}^{n} (st)^j\big(A_j(s)+B_j(t)\big)
\end{equation}
for some $n\in\mathbb{N}_0$, $A_j(s)\in\mathbb{Z}[s]$ with $\deg A_j(s)\leq \lfloor \lambda_2/2\rfloor$, and $B_j(t)\in\mathbb{Z}[t]$ with $\deg B_j(t)\leq \lfloor \lambda_1/2\rfloor$. Moreover, one of the following two additional sets of conditions (but not both) may be arranged.
\begin{itemize}
    \item[(i)] If $\lambda_2$ is even then we may arrange that $\deg A_j(s)\leq
    \lfloor\lambda_2/2\rfloor-1$ for $j>0$ and that $B_j(0)=0$ for all $j$.
    \item[(ii)] If $\lambda_1$ is even then we may arrange that $\deg B_j(t)\leq
    \lfloor\lambda_1/2\rfloor-1$ for $j>0$ and that $A_j(0)=0$ for all $j$.
\end{itemize}
\end{lemma}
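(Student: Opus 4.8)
The plan is to expand $P^{\mathrm{SU}(3)}_{\vec\lambda}(s,t)$ into monomials and read the decomposition off directly. Write $P^{\mathrm{SU}(3)}_{\vec\lambda}(s,t)=\sum_{k,l\geq 0}c_{k,l}\,s^kt^l$; this is a genuine polynomial by Lemmas \ref{lem:6} and \ref{lem:6b}. The essential point is to control which exponents occur: a monomial $s^kt^l$ with $c_{k,l}\neq 0$ arises from a monomial $x^{\nu_1}y^{\nu_2}$ of $Q^{\mathrm{SU}(3)}_{\vec\lambda}$ via the substitution used in the proof of Lemma \ref{lem:6b}, namely $x^{\nu_1-\lambda_1}y^{\nu_2-\lambda_2}=s^kt^l$ with $k=-((\nu_1-\lambda_1)+2(\nu_2-\lambda_2))/3$ and $l=-(2(\nu_1-\lambda_1)+(\nu_2-\lambda_2))/3$. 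Inverting this linear change of variables gives $\nu_1=\lambda_1+k-2l$ and $\nu_2=\lambda_2-2k+l$, so the requirement $\vec\nu\in\mathbb{N}_0^{\times 2}$ built into Lemma \ref{lem:6} forces $2l-k\leq\lambda_1$ and $2k-l\leq\lambda_2$ whenever $c_{k,l}\neq 0$ (the remaining conditions in Lemma \ref{lem:6} are exactly what makes $k$ and $l$ non-negative integers in the first place).

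Next I would build the decomposition by grouping the monomials according to $j=\min(k,l)$: a monomial $s^kt^l$ with $k>l$ contributes $c_{k,l}s^{k-l}$ to $A_l(s)$, one with $l>k$ contributes $c_{k,l}t^{l-k}$ to $B_k(t)$, and a ``diagonal'' monomial $s^jt^j$ contributes the constant $c_{j,j}$, which I assign to $A_j(s)$ in order to obtain alternative (i), or to $B_j(t)$ to obtain alternative (ii). Taking $n$ larger than every $j$ that occurs, one has $\sum_{j=0}^n(st)^j(A_j(s)+B_j(t))=P^{\mathrm{SU}(3)}_{\vec\lambda}(s,t)$ by construction. The degree bounds are then immediate from the first step: putting $l=j$ in $2k-l\leq\lambda_2$ gives $k-j\leq(\lambda_2-j)/2$, so $\deg A_j(s)\leq\lfloor(\lambda_2-j)/2\rfloor\leq\lfloor\lambda_2/2\rfloor$, and for $j\geq 1$ with $\lambda_2$ even this sharpens to $\lfloor(\lambda_2-1)/2\rfloor=\lfloor\lambda_2/2\rfloor-1$ (a diagonal constant, if assigned to $A_j$, has degree $0$ and, since $c_{j,j}\neq 0$ forces $j\leq\lambda_2$, does not spoil this). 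Symmetrically, putting $k=j$ in $2l-k\leq\lambda_1$ bounds $\deg B_j(t)$ by $\lfloor\lambda_1/2\rfloor$, sharpening to $\lfloor\lambda_1/2\rfloor-1$ for $j\geq 1$ when $\lambda_1$ is even.

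Finally, in the decomposition in which every diagonal term is assigned to $A$, each $B_j(t)$ is a sum of strictly positive powers of $t$, so $B_j(0)=0$ for all $j$; combined with the previous paragraph this is precisely alternative (i). The mirror-image choice gives $A_j(0)=0$ for all $j$, which is alternative (ii). A single diagonal term cannot be placed into both $A_j$ and $B_j$, and not all diagonal coefficients vanish (indeed $P^{\mathrm{SU}(3)}_{\vec\lambda}(0,0)=P^{\mathrm{SU}(2)}_{\lambda_1}(0)=1$ by Lemma \ref{lem:N}), so (i) and (ii) cannot be arranged simultaneously, matching the parenthetical remark in the statement.

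I expect the only genuine work to be the bookkeeping in the first step — establishing the two inequalities $2l-k\leq\lambda_1$ and $2k-l\leq\lambda_2$ on the exponents that occur, via the change of variables and its inverse; everything afterwards is formal sorting and arithmetic. One could instead try to induct on $|\vec\lambda|$ using the recursion formulae for $P^{\mathrm{SU}(3)}_{\vec\lambda}(s,t)$ recorded just before Lemma \ref{lem:N}, but multiplying a decomposition by $t$ or by $s$ (as those formulae require) turns some off-diagonal monomials into diagonal ones and conversely, which makes the inductive hypothesis awkward to propagate; the monomial-counting route via Lemma \ref{lem:6} seems cleaner.
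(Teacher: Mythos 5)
Your proof is correct, and it takes a genuinely different route from the paper. The paper proves the lemma by induction on $|\vec\lambda|$ via the recursion formulae recorded before Lemma \ref{lem:N}, decomposing each of the three polynomials $P^{\mathrm{SU}(3)}_{\vec\lambda+(0,-1)}$, $P^{\mathrm{SU}(3)}_{\vec\lambda+(-1,-1)}$, $P^{\mathrm{SU}(3)}_{\vec\lambda+(1,-2)}$ by the inductive hypothesis and then reassembling the pieces (with some care needed to split off constant terms such as $H_{j-1}(0)$ and to shuffle constants between $A_j$ and $B_j$ to force $A_j(0)=0$ or $B_j(0)=0$) --- exactly the bookkeeping you anticipated and chose to avoid. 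You instead read everything off from the support description in Lemma \ref{lem:6}: inverting the exponent substitution gives $\nu_1=\lambda_1+k-2l$ and $\nu_2=\lambda_2+l-2k$, so the positivity constraints $\nu_1,\nu_2\geq 0$ become $2l-k\leq\lambda_1$ and $2k-l\leq\lambda_2$ on the exponents of $P^{\mathrm{SU}(3)}_{\vec\lambda}(s,t)$, and sorting monomials by $j=\min(k,l)$ then yields the decomposition with the degree bounds $\deg A_j\leq\lfloor(\lambda_2-j)/2\rfloor$ and $\deg B_j\leq\lfloor(\lambda_1-j)/2\rfloor$, which are in fact slightly sharper than those stated. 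Your treatment of the edge cases is also sound: a diagonal constant assigned to $A_j$ with $j>0$ forces $j\leq\lambda_2$, so $\lambda_2\geq 2$ when $\lambda_2$ is even and the bound $\lfloor\lambda_2/2\rfloor-1\geq 0$ accommodates it; and $c_{0,0}=P^{\mathrm{SU}(3)}_{\vec\lambda}(0,0)=1$ rules out arranging (i) and (ii) simultaneously. What your approach buys is transparency --- it isolates the one piece of genuine content (the two linear inequalities on the support) and makes clear that the only freedom in the decomposition is the assignment of the diagonal terms; the cost is that it leans on the full strength of Lemma \ref{lem:6}, whose own inductive proof the paper leaves to the reader, whereas the paper's argument is self-contained modulo the recursion formulae.
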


\begin{proof}
We proceed by induction on $|\vec\lambda|$. The basis for the induction follows by inspection of Table~2. Let next $\vec{\lambda}\in\mathbb{N}_0^{\times 2}$ be given and assume that the assertion is true at all ``levels'' $<|\vec\lambda|$.

Suppose first that $\lambda_2\geq 1$. Then, by induction hypothesis, we can write
\begin{align*}
    \notag P^{\mathrm{SU}(3)}_{\vec{\lambda}+(0,-1)}(s,t) &= \sum_{j\geq 0} (st)^j\big(C_j(s)+D_j(t)\big),\\
    \notag P^{\mathrm{SU}(3)}_{\vec{\lambda}+(-1,-1)}(s,t) &= \sum_{j\geq 0} (st)^j\big(E_j(s)+F_j(t)\big),\\
    P^{\mathrm{SU}(3)}_{\vec{\lambda}+(1,-2)}(s,t) &= \sum_{j\geq 0} (st)^j\big(G_j(s)+H_j(t)\big)
\end{align*}
with polynomials $C_j(s)$, $D_j(t)$, $\ldots$ that satisfy
\begin{align*}
    \notag &\deg C_j(s)\leq \left\lfloor\frac{\lambda_2-1}{2}\right\rfloor\!\!,\,
    \deg E_j(s)\leq \left\lfloor\frac{\lambda_2-1}{2}\right\rfloor\!\!,\,
    \deg G_j(s)\leq \left\lfloor\frac{\lambda_2-2}{2}\right\rfloor\!\!,\\
    &\qquad\qquad\deg D_j(t)\leq \left\lfloor\frac{\lambda_1}{2}\right\rfloor\!\!,\,
    \deg F_j(t)\leq \left\lfloor\frac{\lambda_1-1}{2}\right\rfloor\!\!,\,
    \deg H_j(t)\leq \left\lfloor\frac{\lambda_1+1}{2}\right\rfloor
\end{align*}
for all $j$. If $\lambda_1 = 0$ then $P^{\mathrm{SU}(3)}_{\vec{\lambda}+(-1,-1)}(s,t) = 0$, in which case we may assume that $E_j(s) = F_j(t) = 0$ for all $j$. On the other hand, if $\lambda_2 = 1$ then $P^{\mathrm{SU}(3)}_{\vec{\lambda}+(1,-2)}(s,t) = 0$, in which case we may assume that $G_j(s) = H_j(t) = 0$ for all $j$.

Setting $E_{-1}(s) = F_{-1}(t) = H_{-1}(t)=0$, put $A_j(s) = C_j(s)-E_{j-1}(s)-sG_j(s)-sH_j(0)\in\mathbb{Z}[s]$ and $B_j(t) = D_j(t)-F_{j-1}(t)-t^{-1}(H_{j-1}(t)-H_{j-1}(0))\in\mathbb{Z}[t]$ for all $j$. It follows from the recursion formulae on page \pageref{page:recursion} that
\begin{equation*}
    P^{\mathrm{SU}(3)}_{\vec{\lambda}}(s,t) = \sum_{j\geq 0} (st)^j\big(A_j(s)+B_j(t)\big).
\end{equation*}
Here, if $\lambda_2\geq 2$ then
\begin{equation*}
    \deg A_j(s)\leq \max\left\{
    \left\lfloor\frac{\lambda_2-1}{2}\right\rfloor,
    \left\lfloor\frac{\lambda_2-2}{2}\right\rfloor+1,1\right\}
    =\left\lfloor\frac{\lambda_2}{2}\right\rfloor
\end{equation*}
for all $j$, while if $\lambda_2 = 1$ then
\begin{equation*}
	\deg A_j(s)\leq \left\lfloor\frac{\lambda_2-1}{2}\right\rfloor
	= 0 = \left\lfloor \frac{\lambda_2}{2}\right\rfloor
\end{equation*}
for all $j$. Also,
\begin{equation*}
    \deg B_j(t)\leq
    \max\left\{\left\lfloor\frac{\lambda_1}{2}\right\rfloor,
    \left\lfloor\frac{\lambda_1+1}{2}\right\rfloor-1\right\}=\left\lfloor\frac{\lambda_1}{2}\right\rfloor
\end{equation*}
for all $j$.

Assume now that $\lambda_2$ is even. Then $\lambda_2-2$ is even and we may arrange that $\deg G_j(s)\leq \lfloor(\lambda_2-2)/2\rfloor-1$ for $j>0$ and $H_j(0) = 0$ for all $j$. Thus,
\begin{equation*}
    \deg A_j(s)\leq \max\left\{\left\lfloor\frac{\lambda_2-1}{2}\right\rfloor,
    \left\lfloor\frac{\lambda_2-2}{2}\right\rfloor\right\} =
    \left\lfloor\frac{\lambda_2}{2}\right\rfloor-1
\end{equation*}
for $j>0$. As $\left\lfloor\lambda_2/2\right\rfloor-1\geq 0$, we may replace $B_j(t)$ with $B_j(t)-B_j(0)$ and $A_j(s)$ with $A_j(s)+B_j(0)$ to ensure that $B_j(0)=0$ for all $j$ without altering the estimates on the degrees.

Assume next that $\lambda_1$ is even. Then we may arrange that $\deg D_j(t)\leq \lfloor\lambda_1/2\rfloor-1$ for $j>0$ and $C_j(0) = 0$ for all $j$. Thus,
\begin{equation*}
    \deg B_j(t) \leq
    \max\left\{\left\lfloor\frac{\lambda_1}{2}\right\rfloor-1,\left\lfloor\frac{\lambda_1-1}{2}\right\rfloor,\left\lfloor\frac{\lambda_1+1}{2}\right\rfloor-1\right\}
     = \left\lfloor\frac{\lambda_1}{2}\right\rfloor-1
\end{equation*}
for $j>0$. If $\lambda_1 = 0$ then $C_j(0)=E_{j-1}(0)=0$ so that $A_j(0)=0$ for all $j$. If $\lambda_1\geq 2$ then $\left\lfloor\lambda_1/2\right\rfloor-1\geq 0$ so that we may replace $B_j(t)$ with $B_j(t)+A_j(0)$ and $A_j(s)$ with $A_j(s)-A_j(0)$ to ensure that $A_j(0)=0$ for all $j$ without altering the estimates on the degrees.

If $\lambda_1\geq 1$ (but possibly $\lambda_2 = 0$) then one first applies the above arguments to $P_{(\lambda_2,\lambda_1)}^{\mathrm{SU}(3)}(s,t)$ and then interchanges $s$ and $t$.
\end{proof}

\subsubsection{On certain monomials $m_{\vec\lambda,n}(s,t)$}

We will now introduce two families of sets. Their relevance will become clear in the next section. For each $n\in\mathbb{N}_0$, we put
$$
	B_0(n) = \{\vec{\lambda}\in\mathbb{N}_0^{\times
	2}\,:\,\lambda_1\equiv\lambda_2\text{ (mod
	3)},\,\lambda_1+2\lambda_2\leq 3n,\,2\lambda_1+\lambda_2\leq 3n\}
$$
and
\begin{multline*}
	B_1(n) = \{\vec{\lambda}\in\mathbb{N}_0^{\times
	2}\,:\,\lambda_1-1\equiv \lambda_2\text{ (mod
	3)},\\ (\lambda_1-1)+2\lambda_2\leq 3n, 2(\lambda_1-1)+\lambda_2\leq
	3n\}.
\end{multline*}
Given $n\in\mathbb{N}_0$ and $\vec\lambda\in B_0(n)$, we denote by
$m_{\vec\lambda,n}(s,t)$ the monomial in $\mathbb{Z}[s,t]$ obtained by
	performing the change of variables
	$s = x/y^2$ and $t = y/x^2$ in the expression
	$x^{\lambda_1-n}y^{\lambda_2-n}$. The fact that this expression is indeed a
	monomial in $s$ and $t$ follows from Lemma \ref{lem:6b}.
Similarly, given $n\in\mathbb{N}_0$ and $\vec\lambda\in B_1(n)$, we denote
by $m_{\vec\lambda,n}(s,t)$ the monomial in
	$\mathbb{Z}[s,t]$ obtained by
	performing the same change of variables in the expression
	$x^{(\lambda_1-1)-n}y^{\lambda_2-n}$.
By definition, we then have the following identities, in which $s = x/y^2$ and
$t = y/x^2$.
\begin{equation}\label{equation:monomials}
    m_{\vec\lambda,n}(s,t)P_{\vec\lambda}^{\mathrm{SU}(3)}(s,t)
    = \begin{cases}Q_{\vec\lambda}^{\mathrm{SU}(3)}(x,y)/(x^ny^n)&\text{ if
    }\vec\lambda\in B_0(n),\\
    Q_{\vec\lambda}^{\mathrm{SU}(3)}(x,y)/(x^{n+1}y^n)&\text{ if
    }\vec\lambda\in B_1(n).\end{cases}
\end{equation}

\noindent Moreover, we have the following easy lemma.

\begin{lemma}\label{lemma:monomials2}
Let $n\in\mathbb{N}_0$ and $\vec{\lambda}\in B_j(n)$ be given, where $j\in
\{0,1\}$. Then there exist integers $\gamma_{\vec{\nu}}$ such that
\begin{itemize}
    \item[(i)] $x^{\lambda_1}y^{\lambda_2} = \sum_{\vec{\nu}\in \mathbb{N}_0^{\times 2}} \gamma_{\vec{\nu}}Q^{\mathrm{SU}(3)}_{\vec{\nu}}(x,y)$, and
    \item[(ii)] $m_{\vec\lambda,n}(s,t) = \sum_{\vec\nu\in \mathbb{N}_0^{\times 2}} \gamma_{\vec\nu}m_{\vec{\nu},n}(s,t)P^{\mathrm{SU}(3)}_{\vec\nu}(s,t)$,
\end{itemize}
where $\gamma_{\vec\nu} = 0$ unless $|\vec\nu|\leq |\vec\lambda|$ and
$\vec\nu\in B_j(n)$.
\end{lemma}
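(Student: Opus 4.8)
The plan is to prove (i) first as a statement purely about $\mathbb{Z}[x,y]$, with sharp control over which $\vec\nu$ can occur, then to observe that this control places the occurring $\vec\nu$ in $B_j(n)$, and finally to deduce (ii) by transporting (i) across the change-of-variables homomorphism.

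First I would prove the following: for every $\vec\lambda\in\mathbb{N}_0^{\times 2}$ there are integers $\gamma_{\vec\nu}$ with $x^{\lambda_1}y^{\lambda_2} = \sum_{\vec\nu}\gamma_{\vec\nu}Q^{\mathrm{SU}(3)}_{\vec\nu}(x,y)$ and $\gamma_{\vec\nu}=0$ unless $\nu_1+2\nu_2\leq\lambda_1+2\lambda_2$, $2\nu_1+\nu_2\leq 2\lambda_1+\lambda_2$, $|\vec\nu|\leq|\vec\lambda|$ and $\nu_1-\nu_2\equiv\lambda_1-\lambda_2\pmod 3$. (Such $\gamma_{\vec\nu}$ are in fact unique, since the $Q^{\mathrm{SU}(3)}_{\vec\nu}$ form a $\mathbb{Z}$-basis of $\mathbb{Z}[x,y]$, but this will not be needed.) The proof is by induction on $|\vec\lambda|$, with $\vec\lambda=\vec 0$ being clear. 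For the inductive step, Lemma \ref{lem:6} writes $x^{\lambda_1}y^{\lambda_2}=Q^{\mathrm{SU}(3)}_{\vec\lambda}(x,y)-\sum_{\vec\mu}c_{\vec\mu}x^{\mu_1}y^{\mu_2}$ with every occurring $\vec\mu$ satisfying $\mu_1+2\mu_2\leq\lambda_1+2\lambda_2$, $2\mu_1+\mu_2\leq 2\lambda_1+\lambda_2$, $|\vec\mu|<|\vec\lambda|$ and $\mu_1-\mu_2\equiv\lambda_1-\lambda_2\pmod 3$; expanding each $x^{\mu_1}y^{\mu_2}$ via the induction hypothesis and using that all four conditions are transitive (the strict inequality $|\vec\mu|<|\vec\lambda|$ relaxing to $|\vec\nu|\leq|\vec\lambda|$) gives the claim.

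Next I would pass to $B_j(n)$. Suppose $\vec\lambda\in B_j(n)$. If $j=0$, then $\lambda_1\equiv\lambda_2\pmod 3$, $\lambda_1+2\lambda_2\leq 3n$ and $2\lambda_1+\lambda_2\leq 3n$, so for any $\vec\nu$ with $\gamma_{\vec\nu}\neq 0$ one gets $\nu_1\equiv\nu_2\pmod 3$, $\nu_1+2\nu_2\leq 3n$ and $2\nu_1+\nu_2\leq 3n$, i.e.\ $\vec\nu\in B_0(n)$. If $j=1$, then $\lambda_1-1\equiv\lambda_2\pmod 3$, $(\lambda_1-1)+2\lambda_2\leq 3n$ and $2(\lambda_1-1)+\lambda_2\leq 3n$, so using $(\nu_1-1)+2\nu_2\leq(\lambda_1-1)+2\lambda_2$, $2(\nu_1-1)+\nu_2\leq 2(\lambda_1-1)+\lambda_2$ and $\nu_1-1\equiv\nu_2\pmod 3$ one gets $\vec\nu\in B_1(n)$. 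Hence the $\gamma_{\vec\nu}$ above prove (i) and satisfy the asserted support condition; note in particular that these $\vec\nu$ all lie in $B_j(n)$ for the \emph{same} $j$ as $\vec\lambda$, so that $m_{\vec\nu,n}(s,t)$ in (ii) is unambiguously interpreted by the $j$-th of its two definitions.

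Finally, to obtain (ii), let $\Psi\colon\mathbb{Z}[s,t]\to\mathbb{Z}[x^{\pm 1},y^{\pm 1}]$ be the ring homomorphism determined by $\Psi(s)=x/y^2$ and $\Psi(t)=y/x^2$; since $\Psi(s^kt^l)=x^{k-2l}y^{l-2k}$ and the linear map $(k,l)\mapsto(k-2l,l-2k)$ has nonzero determinant $-3$, the homomorphism $\Psi$ is injective. By the definitions of $m_{\vec\nu,n}(s,t)$ and $P^{\mathrm{SU}(3)}_{\vec\nu}(s,t)$ together with equation (\ref{equation:monomials}), for $\vec\nu\in B_j(n)$ one has $\Psi\big(m_{\vec\nu,n}(s,t)P^{\mathrm{SU}(3)}_{\vec\nu}(s,t)\big)$ equal to $Q^{\mathrm{SU}(3)}_{\vec\nu}(x,y)/(x^ny^n)$ when $j=0$ and to $Q^{\mathrm{SU}(3)}_{\vec\nu}(x,y)/(x^{n+1}y^n)$ when $j=1$, while $\Psi\big(m_{\vec\lambda,n}(s,t)\big)$ equals $x^{\lambda_1-n}y^{\lambda_2-n}$ when $j=0$ and $x^{\lambda_1-1-n}y^{\lambda_2-n}$ when $j=1$. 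Applying $\Psi$ to the right-hand side of (ii), pulling out the common factor $x^{-n}y^{-n}$ (resp.\ $x^{-n-1}y^{-n}$), and invoking (i) turns it into $\Psi\big(m_{\vec\lambda,n}(s,t)\big)$, and injectivity of $\Psi$ then yields (ii). The only point requiring any real care is the combinatorial bookkeeping in the inductive proof of the first step; the rest is formal.
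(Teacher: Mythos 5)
Your proposal is correct and follows essentially the same route as the paper: statement (i) by induction on $|\vec\lambda|$ using Lemma \ref{lem:6} (with the support conditions of that lemma propagating transitively and forcing the occurring $\vec\nu$ into $B_j(n)$), and (ii) deduced from (i) via equation (\ref{equation:monomials}) and the injective change of variables. The paper compresses all of this into one sentence, so your write-up is simply a fully detailed version of its intended argument.
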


\begin{proof}
The statement (i) follows by induction on $|\vec\lambda|$ using Lemma \ref{lem:6}, whereafter (ii) follows from (i) and equation (\ref{equation:monomials}).
\end{proof}

\subsubsection{Definition, injectivity and positivity of $\psi$}\label{subsubsection:definition_psi}

Fix $k\in\mathbb{N}$. Recall the fusion rules of $\mathbf{Rep}_k(\mathrm{SU}(3))$ from Example \ref{example:SU3_fusion_rules}.

\begin{lemma}
	Consider the objects $\pi = \pi_{(1,0)}$, $\bar{\pi} = \pi_{(0,1)}$ and
	$\sigma = \pi\otimes\bar{\pi}$ in $\mathbf{Rep}_k(\mathrm{SU}(3))$. For
	every $n\in\mathbb{N}_0$, we have that
	\begin{itemize}
	\item[(i)] $\{\mu\in\Lambda\,:\,\mu\prec\sigma^{\otimes n}\} =
	\{\pi_{\vec\lambda}\,:\,|\vec\lambda|\leq k,\,\vec\lambda\in B_0(n)\}$;
	\item[(ii)] $\{\mu\in\Lambda\,:\,\mu\prec\sigma^{\otimes n}\otimes\pi\} =
	\{\pi_{\vec\lambda}\,:\,|\vec\lambda|\leq k,\,\vec\lambda\in B_1(n)\}$.
	\end{itemize}
\end{lemma}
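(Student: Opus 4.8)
The plan is to establish (i) and (ii) simultaneously by induction on $n$, exploiting the identity $\sigma^{\otimes(n+1)} = (\sigma^{\otimes n}\otimes\pi)\otimes\bar\pi$ and the fact that $\bar\pi = \pi_{(0,1)}$. Write $\tilde B_j(n) = \{\vec\lambda\in B_j(n)\,:\,|\vec\lambda|\leq k\}$ for $j = 0,1$. Since fusion multiplicities are non-negative and every simple summand of an object occurs with multiplicity at least $1$, the set $\{\mu\in\Lambda\,:\,\mu\prec\sigma^{\otimes n}\otimes\pi\}$ is precisely the set of out-neighbours, in the fusion graph $\Gamma_{\pi_{(1,0)}}$ of $\mathbf{Rep}_k(\mathrm{SU}(3))$, of the vertices of $\{\mu\in\Lambda\,:\,\mu\prec\sigma^{\otimes n}\}$; likewise $\{\mu\in\Lambda\,:\,\mu\prec\sigma^{\otimes(n+1)}\}$ is the set of out-neighbours, in $\Gamma_{\pi_{(0,1)}}$, of the vertices of $\{\mu\in\Lambda\,:\,\mu\prec\sigma^{\otimes n}\otimes\pi\}$. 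By Example \ref{example:SU3_fusion_rules}, $\Gamma_{\pi_{(1,0)}}$ is obtained from the graph of figure \ref{fig:4} by deleting the vertices of level $>k$ together with their incident edges, and $\Gamma_{\pi_{(0,1)}}$ is obtained from $\Gamma_{\pi_{(1,0)}}$ by reversing all edges; concretely, for $\vec\lambda$ with $|\vec\lambda|\leq k$ the out-neighbours of $\vec\lambda$ in $\Gamma_{\pi_{(1,0)}}$ are the $\vec\mu$ with $\vec\mu-\vec\lambda\in\{(1,0),(0,-1),(-1,1)\}$, $\vec\mu\in\mathbb{N}_0^{\times 2}$ and $|\vec\mu|\leq k$, while those in $\Gamma_{\pi_{(0,1)}}$ are the $\vec\mu$ with $\vec\mu-\vec\lambda\in\{(-1,0),(0,1),(1,-1)\}$, $\vec\mu\in\mathbb{N}_0^{\times 2}$ and $|\vec\mu|\leq k$. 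The base case $n = 0$ is immediate: $\sigma^{\otimes 0} = \boldsymbol{1} = \pi_{(0,0)}$ with $\tilde B_0(0) = \{(0,0)\}$, and $\sigma^{\otimes 0}\otimes\pi = \pi_{(1,0)}$ with $\tilde B_1(0) = \{(1,0)\}$ (using $k\geq 1$).

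The induction step therefore reduces to the two combinatorial identities
\begin{itemize}
\item[(A)] $\tilde B_1(n) = \bigl\{\vec\lambda + e\,:\,\vec\lambda\in\tilde B_0(n),\, e\in\{(1,0),(0,-1),(-1,1)\}\bigr\}\cap\mathbb{N}_0^{\times 2}\cap\{\vec\mu\,:\,|\vec\mu|\leq k\}$;
\item[(B)] $\tilde B_0(n+1) = \bigl\{\vec\lambda + e\,:\,\vec\lambda\in\tilde B_1(n),\, e\in\{(-1,0),(0,1),(1,-1)\}\bigr\}\cap\mathbb{N}_0^{\times 2}\cap\{\vec\mu\,:\,|\vec\mu|\leq k\}$,
\end{itemize}
each to be proved for all $n\in\mathbb{N}_0$ and $k\geq 1$. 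Indeed, assuming (i) holds at stage $n$, identity (A) yields (ii) at stage $n$, and identity (B) then yields (i) at stage $n+1$, so the induction closes and (i) and (ii) hold for all $n$.

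It remains to verify (A) and (B), which I would do by direct manipulation of the defining congruences and the two linear inequalities of $B_0$ and $B_1$. For the inclusion ``$\supseteq$'': if $\vec\lambda$ lies in the source set and a translate $\vec\mu = \vec\lambda + e$ lies in $\mathbb{N}_0^{\times 2}$ with $|\vec\mu|\leq k$, then $\vec\mu$ lies in the target set; the congruence is automatic, since each admissible displacement $e$ changes $\lambda_1-\lambda_2$ by a multiple of $3$ relative to the value required for membership in the target set, and each $e$ preserves or strengthens the two linear inequalities --- except that in (B) the right-hand bound simultaneously relaxes from $3n$ to $3(n+1)$, which exactly compensates the slack lost on the left. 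For the inclusion ``$\subseteq$'': given $\vec\mu$ in the target set, one must exhibit a translate of $\vec\mu$ lying in the source set; the ``level-lowering'' translate always works provided it remains in $\mathbb{N}_0^{\times 2}$, so the only cases needing separate treatment are the boundary ones where that translate has a negative coordinate --- for instance $\mu_1 = 0$ in (A), or $\mu_2 = 0$ in (B). In such a case the target set's congruence constrains the surviving coordinate modulo $3$ (e.g.\ it forces $\mu_2\equiv 2\pmod 3$, hence $\mu_2\geq 2$, when $\mu_1 = 0$ in (A)), and substituting this into the linear inequalities --- together with $k\geq 1$ --- shows that one of the remaining two translates lies in the source set and has level $\leq k$. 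This boundary bookkeeping is the only mildly delicate point of the proof; the rest is mechanical verification.
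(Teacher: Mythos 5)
Your proposal is correct and follows essentially the same route as the paper: both reduce the lemma, via the fusion graph of $\pi_{(1,0)}$ (and its edge-reversal for $\pi_{(0,1)}$), to the two combinatorial identities (A) and (B) relating $B_0(n)$, $B_1(n)$ and $B_0(n+1)$, and both verify these by checking the congruence and linear inequalities for each displacement, with the only delicate point being the boundary case (e.g.\ $\mu_1=0$ in (A), where the congruence forces $\mu_2\geq 2$ and a different translate must be used) --- exactly the case the paper singles out. The final verification is left as ``mechanical bookkeeping,'' but that matches the paper's own level of detail, and the identified mechanisms are the right ones.
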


\begin{proof}
	Due to Frobenius Reciprocity, the proof amounts to showing the following
	two statements.
\begin{itemize}
    \item[(A)] $		
		\{\pi_{\vec\lambda}\,:\,|\vec\lambda|\leq k,\,\vec\lambda\in B_1(n)\}
	$
	is the set of simple objects that occur as a direct summand of
	$\pi_{\vec\lambda}\otimes\pi$ for some $\vec\lambda\in B_0(n)$ with
	$|\vec\lambda|\leq k$.
    \item[(B)] $		
			\{\pi_{\vec\lambda}\,:\,|\vec\lambda|\leq k,\,\vec\lambda\in
			B_0(n+1)\}
		$
		is the set of simple objects that occur as a direct summand of
		$\pi_{\vec\lambda}\otimes\bar{\pi}$ for some $\vec\lambda\in B_1(n)$
		with
		$|\vec\lambda|\leq k$.
\end{itemize}
		We will only prove (A). The proof of (B) is similar. Let
		first $\vec\lambda\in B_1(n)$ with $|\vec\lambda|\leq k$ be given. If
		$\lambda_1\geq 1$ then
		$\pi_{\vec\lambda}$ occurs as a direct summand of
		$\pi_{\vec\lambda+(-1,0)}\otimes\pi$, where clearly $\vec\lambda+(-1,0)\in B_0(n)$, while if $\lambda_1 = 0$ then
		$\lambda_2\geq 2$, $\pi_{\vec\lambda}$ occurs as a
		direct summand of $\pi_{\vec\lambda+(1,-1)}\otimes\pi$, and it is easy to show that $\vec\lambda+(1,-1)\in B_0(n)$.
		
		Assume conversely that $\nu$ is a simple object occurring as a direct
		summand of $\pi_{\vec\lambda}\otimes\pi$ for some $\vec\lambda\in
		B_0(n)$ with
			$|\vec\lambda|\leq k$. Then
			$\nu = \pi_{\vec\mu}$ for some
			$\vec\mu\in\{\vec\lambda+(1,0),\vec\lambda+(0,-1),\vec\lambda+(-1,1)
			 \}\cap\mathbb{N}_0^{\times 2}\subset B_1(n)$.
\end{proof}

Let $\pi$, $\bar{\pi}$ and $\sigma$ be as in the preceding lemma. Then
$K_0(A(\mathbf{Rep}_k(\mathrm{SU}(3)),\pi))$ is isomorphic (as an ordered group)
to the limit of the inductive sequence
\begin{equation}\label{equation:system_SU3}
        \cdots \longrightarrow
        \bigoplus_{\mu\prec\sigma^{\otimes n}}\mathbb{Z}\mu
        \overset{M_\pi}{\longrightarrow}
        \bigoplus_{\mu\prec\sigma^{\otimes n}\otimes\pi}\mathbb{Z}\mu
        \overset{M_{\bar{\pi}}}{\longrightarrow}\bigoplus_{\mu\prec\sigma^{\otimes (n+1)}}\mathbb{Z}\mu\longrightarrow\cdots,
    \end{equation}
where $M_\pi$ (resp.\ $M_{\bar{\pi}}$) is defined by multiplication by $\pi$
(resp.\ $\bar{\pi}$) in the fusion
ring. Given $n\in\mathbb{N}_0$, we define
$$
    \psi_{0,n}\colon \bigoplus_{\mu\prec\sigma^{\otimes
    n}}\mathbb{Z}\mu\longrightarrow \mathbb{Z}[s,t]/I_k
$$
by $\psi_{0,n}(\pi_{\vec\lambda}) =
[m_{\vec\lambda,n}(s,t)P_{\vec\lambda}^{\mathrm{SU}(3)}(s,t)]$ for
$\vec\lambda\in B_0(n)$ with $|\vec\lambda|\leq k$ and
$$
    \psi_{1,n}\colon \bigoplus_{\mu\prec\sigma^{\otimes
    n}\otimes\pi}\mathbb{Z}\mu\longrightarrow \mathbb{Z}[s,t]/I_k
$$
by $\psi_{1,n}(\pi_{\vec\lambda}) =
[m_{\vec\lambda,n}(s,t)P_{\vec\lambda}^{\mathrm{SU}(3)}(s,t)]$ for
$\vec\lambda\in B_1(n)$ with $|\vec\lambda|\leq k$. To verify that these maps
induce a group homomorphism
$$
    \psi\colon
    K_0(A(\mathbf{Rep}_k(\mathrm{SU}(3)),\pi_{(1,0)}))\longrightarrow
    \mathbb{Z}[s,t]/I_k,
$$
we must show that, for each $n\in\mathbb{N}_0$,
\begin{itemize}
\item[(i)] $\psi_{0,n} = \psi_{1,n}\circ M_\pi$, and
\item[(ii)] $\psi_{1,n} = \psi_{0,n+1}\circ M_{\bar{\pi}}$.
\end{itemize}
We proceed to prove (i). Put $\Delta_{\vec\lambda,\vec\nu} =
(N_\pi)_{\pi_{\vec\lambda},\pi_{\vec\nu}}$ for
$\vec\lambda,\vec\nu\in\mathbb{N}_0^{\times 2}$ with
$|\vec\lambda|,|\vec\nu|\leq k$, where $N_\pi$ is the fusion matrix
of
$\pi$ in $\mathbf{Rep}_k(\mathrm{SU}(3))$. Let $\vec\lambda\in B_0(n)$ with
$|\vec\lambda|\leq k$ be given. Then
$$
	xQ_{\vec\lambda}^{\mathrm{SU}(3)}(x,y)-\sum_{|\vec\nu|\leq
	k}\Delta_{\vec\lambda,\vec\nu}Q_{\vec\nu}^{\mathrm{SU}(3)}(x,y)
	= \begin{cases}
		0&\text{ if }|\vec\lambda|<k,\\
		Q_{\vec\lambda+(1,0)}^{\mathrm{SU}(3)}(x,y)&\text{ if }|\vec\lambda| =
		k.
	\end{cases}
$$
By the statement (A) in the proof of the preceding lemma, we get that
$\Delta_{\vec\lambda,\vec\nu}\neq 0$ only if $\vec\nu\in B_1(n)$ (and
$|\vec\nu|\leq k$). Hence, we may divide by $x^{n+1}y^n$ and use equation
(\ref{equation:monomials}) on page \pageref{equation:monomials} to deduce that
\begin{multline*}
	m_{\vec\lambda,n}(s,t)P_{\vec\lambda}^{\mathrm{SU}(3)}(s,t)-\sum_{|\vec\nu|\leq
		k}\Delta_{\vec\lambda,\vec\nu}m_{\vec\nu,n}(s,t)P_{\vec\nu}^{\mathrm{SU}(3)}(s,t)\\
		= \begin{cases} 0&\text{ if }|\vec\lambda|<k,\\
		m_{\vec\lambda+(1,0),n}(s,t)P_{\vec\lambda+(1,0)}^{\mathrm{SU}(3)}(s,t)&\text{ if }|\vec\lambda| =
		k.\end{cases}
\end{multline*}
Since the right hand side belongs to $I_k$ by Lemma \ref{lemma:generation}, we conclude that (i) holds. The
proof of (ii) is similar. Thus, we have an induced group homomorphism $\psi$ as
above.

For the sake of convenience, we will from now on use the notation $$B(k,n) =
\{\vec\lambda\in
B_0(n)\,:\,|\vec\lambda|\leq k\}.$$

\begin{lemma}
	The group homomorphism $\psi$ is injective and positive.
\end{lemma}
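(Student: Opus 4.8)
The plan is to treat the two assertions separately: positivity by evaluating at the distinguished point $(\beta_k,\beta_k)$, and injectivity by transporting the hypothesis $\psi([v,n])=0$ into the localized Verlinde ring, where Gepner's generating set for $J_k(\mathrm{SU}(3))$ is available. Throughout I would use the substitution ring homomorphism $\kappa\colon\mathbb{Z}[s,t]\to\mathbb{Z}[x^{\pm1},y^{\pm1}]$, $s\mapsto x/y^2$, $t\mapsto y/x^2$, which by construction of the $P^{\mathrm{SU}(3)}_{\vec\lambda}$ satisfies $\kappa(P^{\mathrm{SU}(3)}_{\vec\lambda})=x^{-\lambda_1}y^{-\lambda_2}Q^{\mathrm{SU}(3)}_{\vec\lambda}$, and by equation~(\ref{equation:monomials}) satisfies $\kappa(m_{\vec\lambda,n}P^{\mathrm{SU}(3)}_{\vec\lambda})=(x^ny^n)^{-1}Q^{\mathrm{SU}(3)}_{\vec\lambda}$ for $\vec\lambda\in B_0(n)$ (and $(x^{n+1}y^n)^{-1}Q^{\mathrm{SU}(3)}_{\vec\lambda}$ for $\vec\lambda\in B_1(n)$).

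First I would prove positivity. An element of the positive cone of $K_0(A(\mathbf{Rep}_k(\mathrm{SU}(3)),\pi_{(1,0)}))$ is represented, at some full stage $n$, by a vector $v=\sum_{\vec\lambda\in B(k,n)}v_{\vec\lambda}\pi_{\vec\lambda}$ with all $v_{\vec\lambda}\geq 0$ (an element represented at a half-stage is pushed to a full stage via $M_{\bar\pi}$). By the formulae above, $\psi_{0,n}(v)=[q]$ where $q(s,t)=\sum_{\vec\lambda}v_{\vec\lambda}m_{\vec\lambda,n}(s,t)P^{\mathrm{SU}(3)}_{\vec\lambda}(s,t)$ becomes $(x^ny^n)^{-1}\sum_{\vec\lambda}v_{\vec\lambda}Q^{\mathrm{SU}(3)}_{\vec\lambda}(x,y)$ under the substitution. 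Evaluating at $x=y=d(\pi_{(1,0)})=\beta_k^{-1}$, i.e.\ at the point of the fusion variety indexed by $\vec 0$, and using that $Q^{\mathrm{SU}(3)}_{\vec\lambda}(d(\pi_{(1,0)}),d(\pi_{(1,0)}))=d(\pi_{\vec\lambda})\geq 1$ for all $\vec\lambda$ with $|\vec\lambda|\leq k$ — which follows by induction on $|\vec\lambda|$ from Gepner's generating set for $J_k(\mathrm{SU}(3))$ and the fusion recursion for the $Q^{\mathrm{SU}(3)}_{\vec\lambda}$, exactly as in the assertion that $(d(\pi_{\vec e_1}),d(\pi_{\vec e_2}))$ is a common zero of $J_k(\mathrm{SU}(3))$ — one gets $q(\beta_k,\beta_k)=d(\pi_{(1,0)})^{-2n}\sum_{\vec\lambda}v_{\vec\lambda}d(\pi_{\vec\lambda})$. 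This is $>0$ if some $v_{\vec\lambda}>0$ and is $0$ if $v=0$; since $(\beta_k,\beta_k)$ is a common zero of $I_k$, the evaluation descends to $\mathbb{Z}[s,t]/I_k$ and shows $\psi_{0,n}(v)\in(\mathbb{Z}[s,t]/I_k)_+$. The half-stage case is identical with $d(\pi_{(1,0)})^{-2n-1}$ in place of $d(\pi_{(1,0)})^{-2n}$. Hence $\psi$ is positive.

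Next I would prove injectivity. Suppose $\psi([v,n])=0$ with $v=\sum_{\vec\lambda\in B(k,n)}v_{\vec\lambda}\pi_{\vec\lambda}$ at a full stage (reduce a half-stage to a full one via $M_{\bar\pi}$, as before). Then $p:=\sum_{\vec\lambda}v_{\vec\lambda}m_{\vec\lambda,n}P^{\mathrm{SU}(3)}_{\vec\lambda}\in I_k=\langle P^{\mathrm{SU}(3)}_{(k+1,0)},P^{\mathrm{SU}(3)}_{(k+2,0)}\rangle$, so $\kappa(p)$ lies in the ideal of $\mathbb{Z}[x^{\pm1},y^{\pm1}]$ generated by $\kappa(P^{\mathrm{SU}(3)}_{(k+1,0)})=x^{-(k+1)}Q^{\mathrm{SU}(3)}_{(k+1,0)}$ and $\kappa(P^{\mathrm{SU}(3)}_{(k+2,0)})=x^{-(k+2)}Q^{\mathrm{SU}(3)}_{(k+2,0)}$; since $Q^{\mathrm{SU}(3)}_{(k+1,0)},Q^{\mathrm{SU}(3)}_{(k+2,0)}\in J_k(\mathrm{SU}(3))$ by Gepner, this gives $\kappa(p)\in J_k(\mathrm{SU}(3))\cdot\mathbb{Z}[x^{\pm1},y^{\pm1}]=\tilde J_k$. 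On the other hand $\kappa(p)=(x^ny^n)^{-1}\sum_{\vec\lambda}v_{\vec\lambda}Q^{\mathrm{SU}(3)}_{\vec\lambda}(x,y)$. Passing to $\mathbb{Z}[x^{\pm1},y^{\pm1}]/\tilde J_k=(\mathbb{Z}[x,y]/J_k(\mathrm{SU}(3)))[(xy)^{-1}]\cong\mathrm{Ver}_k(\mathrm{SU}(3))[\sigma^{-1}]$ — using Theorem~\ref{theorem:Gepner}, under which $x\leftrightarrow\pi_{(1,0)}$, $y\leftrightarrow\pi_{(0,1)}$, hence $xy\leftrightarrow\sigma$ — the class of $\kappa(p)$ is $\sigma^{-n}\sum_{\vec\lambda}v_{\vec\lambda}\pi_{\vec\lambda}$, which therefore equals $0$. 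By the definition of the localization this means $\sigma^m\cdot\sum_{\vec\lambda}v_{\vec\lambda}\pi_{\vec\lambda}=0$ in $\mathrm{Ver}_k(\mathrm{SU}(3))$ for some $m\in\mathbb{N}_0$, i.e.\ $M_\sigma^m v=0$; hence $v$ represents $0$ in the inductive limit, so $[v,n]=0$ and $\psi$ is injective.

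The step I expect to be the real subtlety is this last one. Because $M_\sigma$ need not be invertible over $\mathbb{Q}$, the canonical maps $\bigoplus_{\mu\prec\sigma^{\otimes n}}\mathbb{Z}\mu\to K_0(A(\mathbf{Rep}_k(\mathrm{SU}(3)),\pi_{(1,0)}))$ are in general \emph{not} injective, so one cannot conclude $v=0$; all that is available — and all that is needed — is that $v$ is annihilated by a power of the connecting map, which is precisely what vanishing in $\mathrm{Ver}_k(\mathrm{SU}(3))[\sigma^{-1}]$ delivers through the equivalence relation defining the localization. Everything else ($\kappa(P^{\mathrm{SU}(3)}_{\vec\lambda})=x^{-\lambda_1}y^{-\lambda_2}Q^{\mathrm{SU}(3)}_{\vec\lambda}$, equation~(\ref{equation:monomials}), and the identification of $\mathbb{Z}[x^{\pm1},y^{\pm1}]/\tilde J_k$ with the localized Verlinde ring via Theorem~\ref{theorem:Gepner}) is routine bookkeeping.
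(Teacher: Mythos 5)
Your proof is correct and follows essentially the same route as the paper's: injectivity by transporting the relation $\psi_{0,n}(v)=[0]$ into $\mathbb{Z}[x,y]$ (equivalently its localization $\mathbb{Z}[x^{\pm1},y^{\pm1}]$), invoking Gepner's generators of $J_k(\mathrm{SU}(3))$, and reading off $M_\sigma^m v=0$ — which, as you rightly stress, is all one can and need conclude since $M_\sigma$ may be singular; and positivity by pairing a non-negative representative with the quantum dimensions via evaluation at $(\beta_k,\beta_k)$, which is exactly the Perron--Frobenius argument the paper cites from Remark \ref{remark:ordered_ring_iso}.
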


\begin{proof}
To show that $\psi$ is injective, it
suffices to prove the following: If
$\psi_{0,n}(v) = [0]$ then there exists $N\in\mathbb{N}_0$ for which
$M_\sigma^N(v) = 0$ in the fusion ring. Write $v = \sum_{\vec\lambda\in B(k,n)}
v_{\vec\lambda}\pi_{\vec\lambda}$ with $v_{\vec\lambda}\in\mathbb{Z}$ for all $\vec\lambda$. Then the assumption $\psi_{0,n}(v) = [0]$
means that
$$
	\sum_{\vec\lambda\in B(k,n)}
	v_{\vec\lambda}m_{\vec\lambda,n}(s,t)P_{\vec\lambda}^{\mathrm{SU}(3)}(s,t)
	=
	p_1(s,t)P_{(k+1,0)}^{\mathrm{SU}(3)}(s,t)+p_2(s,t)P_{(k+2,0)}^{\mathrm{SU}(3)}(s,t)
$$
for some $p_j(s,t)\in\mathbb{Z}[s,t]$ ($j = 1,2$). Performing the change of
variables $s = x/y^2$ and $t = y/x^2$ and multiplying by a sufficiently high
power of $xy$, we deduce that
$$
	(xy)^N\sum_{\vec\lambda\in B(k,n)}
	v_{\vec\lambda}Q_{\vec\lambda}^{\mathrm{SU}(3)}(x,y) =
	q_1(x,y)Q_{(k+1,0)}^{\mathrm{SU}(3)}(x,y)+q_2(x,y)
	Q_{(k+2,0)}^{\mathrm{SU}(3)}(x,y)
$$
for some $N\in\mathbb{N}_0$ and $q_j(x,y)\in\mathbb{Z}[x,y]$ ($j=1,2$). Since
the right hand side belongs to the fusion ideal $J_k(\mathrm{SU}(3))$, this
implies that $M_\sigma^N(v)
= 0$, as desired.

It follows
by the same reasoning as that applied in Remark \ref{remark:ordered_ring_iso}
that $\psi$ maps the positive cone
$K_0(A(\mathbf{Rep}_k(\mathrm{SU}(3)),\pi_{(1,0)}))_+$ into
$\big(\mathbb{Z}[s,t]/I_k\big)_+$.
\end{proof}

\subsubsection{Proof of surjectivity of $\psi$}

According to Lemma \ref{lemma:monomials2}, we have that
$$
    [m_{\vec\lambda,n}(s,t)] = \psi_{0,n}\left(\sum_{\vec\nu\in B(k,n)}\gamma_{\vec\nu}\pi_{\vec\nu}\right)\in\mathrm{im}\,\psi
$$
for $n\in\mathbb{N}_0$ and $\vec\lambda\in B(k,n)$. This has the following consequence.

\begin{lemma}\label{lem:8}
For each $r\in\{0,1,\ldots,\lfloor k/3\rfloor\}$, we have that $[s^r],[t^r]\in\im\psi$.
\end{lemma}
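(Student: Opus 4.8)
The plan is to realize each $s^r$ and $t^r$ directly as one of the monomials $m_{\vec\lambda,n}(s,t)$ with $\vec\lambda\in B(k,n)$, and then quote the displayed consequence of Lemma \ref{lemma:monomials2} recorded just above, which asserts that $[m_{\vec\lambda,n}(s,t)]\in\im\psi$ for every $n\in\mathbb{N}_0$ and $\vec\lambda\in B(k,n)$. So the entire content of the proof is the bookkeeping needed to choose the pair $(n,\vec\lambda)$ appropriately.

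Recall that, for $\vec\lambda\in B_0(n)$, the monomial $m_{\vec\lambda,n}(s,t)$ is by definition obtained from $x^{\lambda_1-n}y^{\lambda_2-n}$ by the substitution $s=x/y^2$, $t=y/x^2$. Since $s^r=(x/y^2)^r=x^r y^{-2r}$, I would look for $\vec\lambda$ and $n$ with $\lambda_1-n=r$ and $\lambda_2-n=-2r$; the simplest choice is $n=2r$ and $\vec\lambda=(3r,0)$. One then checks $(3r,0)\in B_0(2r)$: indeed $3r-0\equiv 0\pmod 3$, $3r+2\cdot 0=3r\leq 6r=3\cdot 2r$, and $2\cdot 3r+0=6r\leq 6r$; moreover $|(3r,0)|=3r\leq k$ is exactly the hypothesis $r\leq\lfloor k/3\rfloor$. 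Hence $(3r,0)\in B(k,2r)$, and Lemma \ref{lem:6b} (applied with $a=r$, $b=-2r$) confirms that the substitution turns $x^r y^{-2r}$ into the monomial $s^r$. Therefore $[s^r]=[m_{(3r,0),2r}(s,t)]\in\im\psi$.

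For $t^r$ I would run the symmetric argument: $t^r=(y/x^2)^r=x^{-2r}y^r$ corresponds to $\vec\lambda=(0,3r)$ and $n=2r$, and the verification $(0,3r)\in B(k,2r)$ is identical to the previous one after interchanging the two coordinates, so $[t^r]=[m_{(0,3r),2r}(s,t)]\in\im\psi$. I do not anticipate any genuine obstacle here; the only points requiring a little care are keeping the indices straight --- in particular noticing that both $s^r$ and $t^r$ arise from $\psi_{0,n}$ (via $B_0(n)$, not $B_1(n)$) --- and confirming via Lemma \ref{lem:6b} that the chosen exponents really do produce a monomial in $s$ and $t$.
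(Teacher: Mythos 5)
Your proof is correct and is exactly the paper's argument: the paper's entire proof is the one-line observation that $m_{(3r,0),2r}(s,t) = s^r$ and $m_{(0,3r),2r}(s,t) = t^r$, combined with the displayed consequence of Lemma \ref{lemma:monomials2}. Your additional verifications that $(3r,0)\in B(k,2r)$ and that Lemma \ref{lem:6b} yields the claimed monomial are the details the paper leaves implicit, and they check out.
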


\begin{proof}
Note that $m_{(3r,0),2r}(s,t) = s^r$ and $m_{(0,3r),2r}(s,t) = t^r$ for $r\geq 0$.
\end{proof}

\noindent We will also need the following remark.

\begin{remark}\label{rem:5}
Let $n_1,n_2\in\mathbb{N}_0$ and $\vec\nu\in B(k,n_1)$ be given. Then
$\vec\nu\in B(k,n_1+n_2)$ and, by performing the change of variables $s =
x/y^2$ and $t = y/x^2$ (under which $(xy)^{-1} = st$) in
the identity $x^{\lambda_1-(n_1+n_2)}y^{\lambda_2-(n_1+n_2)} =
(xy)^{-n_2}x^{\lambda_1-n_1}y^{\lambda_2-n_1}$, we get that
$m_{\vec\nu,n_1+n_2}(s,t)= (st)^{n_2}m_{\vec\nu,n_1}(s,t)$. It follows that if
$[q(s,t)]\in\im\psi$ then
$[(st)^jq(s,t)]\in\im\psi$ for all $j\in\mathbb{N}_0$.
\end{remark}

We are now ready to prove the surjectivity of $\psi$. Let us first prove it for $k=5$. This will guide us to the proof in the general case. By Lemma \ref{lem:8}, (the cosets of) $1$, $s$ and $t$ belong to the image of $\psi$. By Remark \ref{rem:5}, so do monomials obtained from these by multiplying by a power of $st$. Also, Lemma \ref{lemma:generation} implies that, in $\mathbb{Z}[s,t]/I_5$, we have the identity
$$
    0 \equiv P^{\mathrm{SU}(3)}_{(2,4)}(s,t) = s^2-3s+1-t+st(2+3s)+(st)^2(-2-s).
$$
Thus, $s^2$ is also in the image of $\psi$. By multiplying the above identity by $s$ repeatedly, we get that $s^j\in\im\psi$ for all $j\geq 0$. Similarly, the identity
$$
    0 \equiv P^{\mathrm{SU}(3)}_{(4,2)}(s,t) = t^2-3t+1-s+st(2+3t)+(st)^2(-2-t)
$$
shows, by repeated multiplication by $t$, that $t^j\in\im\psi$ for all $j\geq 0$. By using Remark \ref{rem:5} again, we conclude that $\psi$ is surjective for $k=5$.

In general, for $k\geq 6$, Lemma \ref{lem:8} shows that $1,s,\ldots,s^r$ and $1,t,\ldots,t^r$ belong to $\im\psi$, where $r = \lfloor k/3\rfloor$, and Lemmas \ref{lemma:generation} and \ref{lem:9} yield an identity
$$
    0 \equiv P^{\mathrm{SU}(3)}_{(2r,k+1-2r)}(s,t) = \sum_{i=0}^{n} (st)^i[A_i(s)+B_i(t)],
$$
where $A_i(s)\in\mathbb{Z}[s]$ with $\deg A_i(s)\leq \lfloor(k+1-2r)/2\rfloor$ and $B_i(t)\in\mathbb{Z}[t]$ with $\deg B_i(t)\leq r$ for all $i$. Moreover, we may assume that $A_0(0)
= 0$ and that $\deg B_i(t)\leq r-1$ for $i>0$. Since $P^{\mathrm{SU}(3)}_{(2r,k+1-2r)}(0,t) = P^{\mathrm{SU}(2)}_{2r}(t)$ by Lemma \ref{lem:N}, it is easy to verify that $B_0(t) = P^{\mathrm{SU}(2)}_{2r}(t)$ is a polynomial of degree $r$ with leading coefficient $\pm 1$. Also, since
$$
    \left\lfloor \frac{k+1-2r}{2}\right\rfloor = \left\lfloor
    \frac{k+1}{2}\right\rfloor-\left\lfloor\frac{k}{3}\right\rfloor\leq
    \left\lfloor \frac{k}{3}\right\rfloor = r
$$
for $k\geq 6$ (as is shown e.g.\ by splitting up into six cases depending on the residue class of $k$ modulo 6), we get that $\deg A_i(s)\leq r$ for all $i$. Thus, we can show that every power of $t$ belongs to $\mathrm{im}\,\psi$ by multiplying the above identity repeatedly by $t$ (and using Remark \ref{rem:5}).
Since one can similarly show that every power of $s$ belongs to $\mathrm{im}\,\psi$, it follows that $\psi$ is surjective whenever $k\geq 6$.

Finally, one can deal with each $k\in\{1,2,3,4\}$ in a similar fashion to conclude that $\psi$ is surjective for every positive integer $k$.

\subsubsection{Conclusion of the proof of Theorem \ref{theorem:SU3k}}

Consider the diagram
$$
        \begin{CD}
            K_0(A(\mathbf{Rep}_k(\mathrm{SU}(3)),\pi_{(1,0)})) @>\psi >> \mathbb{Z}[s,t]/I_k\\
            @V\phi VV @V\psi_{\mathrm{cv}} VV\\
            \mathrm{Ver}_k(\mathrm{SU}(3))[\sigma^{-1}] @>\psi_{\mathrm{GF}}>> \mathbb{Z}[x^{\pm 1},y^{\pm 1}]/\tilde{J}_k
        \end{CD}
    $$
    where $\phi$ is the injective ring homomorphism from section \ref{subsection:ring_structure}, $\tilde{J}_k$ is the image of the fusion ideal $J_k(\mathrm{SU}(3))\subset\mathbb{Z}[x,y]$ in the localization $(\mathbb{Z}[x,y])[(xy)^{-1}] = \mathbb{Z}[x^{\pm 1},y^{\pm 1}]$, $\psi_{\mathrm{GF}}$ is induced by the ring isomorphism from Theorem \ref{theorem:Gepner}, and $\psi_{\mathrm{cv}}$ is defined by
    $
        \psi_{\mathrm{cv}}([p(s,t)]) = [p(x/y^2,y/x^2)]
    $
    for $p(s,t)\in\mathbb{Z}[s,t]$. (Note that the definition of $I_k$ and Gepner's description of $J_k(\mathrm{SU}(3))$ immediately imply that $\psi_{\mathrm{cv}}$ is well-defined.) By equation (\ref{equation:monomials}) on page \pageref{equation:monomials}, this diagram commutes.

    Since $\psi$ is bijective and as $\phi$ and $\psi_{\mathrm{GF}}$ are injective, it follows that $\psi_{\mathrm{cv}}$ is injective as well. Since $\psi_{\mathrm{cv}}$ is a ring homomorphism, it also follows that $\psi_{\mathrm{cv}}^{-1}$ is a well-defined ring homomorphism $\mathrm{im}(\psi_{\mathrm{GF}}\circ\phi)\to \mathbb{Z}[s,t]/I_k$.

    Finally, we conclude that $\psi = \psi_{\mathrm{cv}}^{-1}\circ\psi_{\mathrm{GF}}\circ\phi$ is an isomorphism of ordered rings. This completes the proof of Theorem \ref{theorem:SU3k}.

    \subsection{SU(3)}

    The following result can be deduced from the work of Handelman--Rossmann 
    \cite{HR1,HR2} and Handelman \cite{H1,H2} (and is also related to the work 
    of Price \cite{P}).
More specifically, it is a corollary of the main result of \cite{H2}.

\begin{theorem}\label{theorem:SU3} As ordered rings,
    $$
        K_0(A(\mathbf{Rep}(\mathrm{SU}(3)),\pi_{(1,0)}))\cong \mathbb{Z}[s,t],
    $$
    where the positive cone on the right hand side is
    $$
        \mathbb{Z}[s,t]_+ = \{\sum_{a,b}s^at^bp_{a,b}(s,t)\,:\,p_{a,b}>0\text{ on }\mathcal{Y}\}\cup\{0\}
    $$
    for a certain compact set $\mathcal{Y}\subset [0,\infty)^{\times 2}$, which is described below, and the product on $K_0(A(\mathbf{Rep}(\mathrm{SU}(3)),\pi_{(1,0)}))\cong K_0(M_{3^\infty}^{\mathrm{SU}(3)})$ is induced by the $*$-homomor\-phism $M_{3^\infty}\otimes M_{3^\infty}\to M_{3^\infty}$ that interlaces the tensor factors.
    \end{theorem}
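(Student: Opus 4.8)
The plan is to reduce to the fixed point algebra $M_{3^\infty}^{\mathrm{SU}(3)}$ and then to split the proof into two parts: first establish the \emph{ring} isomorphism $K_0(A(\mathbf{Rep}(\mathrm{SU}(3)),\pi_{(1,0)}))\cong\mathbb{Z}[s,t]$ by running the argument of section~\ref{subsection:SU3k} in the ``untruncated'' (i.e.\ $k=\infty$) case, and then identify the positive cone by invoking Handelman's determination of the ordered group $K_0(M_{3^\infty}^{\mathrm{SU}(3)})$ in \cite{H2} (building on \cite{HR1,HR2,H1}), of which this theorem is a corollary. By Example~\ref{example:product_type} we have $A(\mathbf{Rep}(\mathrm{SU}(3)),\pi_{(1,0)})\cong M_{3^\infty}^{\mathrm{SU}(3)}$, and by Example~\ref{example:rep_homom} the product that $\theta$ induces on $K_0$ is the one coming from the interlacing $*$-homomorphism $M_{3^\infty}\otimes M_{3^\infty}\to M_{3^\infty}$, so the assertion about the product is automatic once the identification is set up correctly.

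For the ring structure, write $\pi=\pi_{(1,0)}$, $\bar\pi=\pi_{(0,1)}$, $\sigma=\bar\pi\otimes\pi$ and apply the considerations of section~\ref{subsubsection:definition_psi} with the fusion ideal $I_k$ replaced by $\{0\}$ and every constraint ``$|\vec\lambda|\le k$'' deleted. The lemma describing the simple summands of $\sigma^{\otimes n}$ and of $\sigma^{\otimes n}\otimes\pi$ holds verbatim without the level bound (these summands are exactly the $\pi_{\vec\lambda}$ with $\vec\lambda\in B_0(n)$, resp.\ $\vec\lambda\in B_1(n)$), and the assignments $\pi_{\vec\lambda}\mapsto m_{\vec\lambda,n}(s,t)P^{\mathrm{SU}(3)}_{\vec\lambda}(s,t)$ assemble, via equation~\eqref{equation:monomials}, into a group homomorphism $\psi\colon K_0(A(\mathbf{Rep}(\mathrm{SU}(3)),\pi_{(1,0)}))\to\mathbb{Z}[s,t]$; it equals $\phi$ from section~\ref{subsection:ring_structure} followed by the classical identification $R(\mathrm{SU}(3))\cong\mathbb{Z}[x,y]$ and the change of variables $(s,t)=(x/y^2,y/x^2)$ (here $F_{\mathcal{C}}[\sigma^{-1}]=\mathbb{Z}[x^{\pm1},y^{\pm1}]$ and there is no fusion ideal to pass to), so $\psi$ is a ring homomorphism and injectivity is inherited from $\phi$. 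Surjectivity is in fact easier here than in Theorem~\ref{theorem:SU3k}: Lemma~\ref{lemma:monomials2}(ii) already shows that every monomial $m_{\vec\lambda,n}(s,t)$ with $\vec\lambda\in B_0(n)$ lies in $\im\psi$, and for each pair $a,b\in\mathbb{N}_0$ one can solve $\lambda_1+2\lambda_2=3n-3a$, $2\lambda_1+\lambda_2=3n-3b$ with $\vec\lambda\in\mathbb{N}_0^{\times2}$ and $\vec\lambda\in B_0(n)$ by choosing $n$ large, so that $s^at^b=m_{\vec\lambda,n}(s,t)\in\im\psi$; hence $\im\psi=\mathbb{Z}[s,t]$ and $\psi$ is a ring isomorphism.

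It then remains to match the order structures. Under $\psi$ the distinguished order unit $[1]=[\pi_{\vec0}]$ is sent to the constant $1$, and both $[s]$ and $[t]$ are positive elements dominated by an integer multiple of $[1]$ (for instance $\psi^{-1}(s)$ is represented, at the $n=2$ stage, by the positive element $\pi_{(3,0)}\oplus2\pi_{(1,1)}\oplus\pi_{(0,0)}$, since $x^3=Q^{\mathrm{SU}(3)}_{(3,0)}+2Q^{\mathrm{SU}(3)}_{(1,1)}+Q^{\mathrm{SU}(3)}_{(0,0)}$, and a suitable multiple of $\sigma^{\otimes2}$ dominates it); consequently every positive ring homomorphism $\mathbb{Z}[s,t]\to\mathbb{R}$ (with respect to the $K_0$-cone) is evaluation at a point of a bounded, closed — hence compact — set $\mathcal{Y}\subseteq[0,\infty)^{\times2}$, which is the set described in the statement (concretely, the trace-parameter set of $M_{3^\infty}^{\mathrm{SU}(3)}$ read in the $(s,t)$-coordinates). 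The theorem of \cite{H2} then says that the $K_0$-cone is detected by these evaluations in a \emph{monomial-graded} way: $q\in\mathbb{Z}[s,t]$ is positive precisely when, after passing far enough along the inductive system \eqref{equation:system_SU3} — equivalently after multiplying by a suitable power of $st$, cf.\ Remark~\ref{rem:5} — it becomes a non-negative integer combination $\sum_{\vec\mu}c_{\vec\mu}\,m_{\vec\mu,n}(s,t)P^{\mathrm{SU}(3)}_{\vec\mu}(s,t)$; grouping these summands by the monomial $m_{\vec\mu,n}(s,t)=s^at^b$, and using that on $\mathcal{Y}$ each $P^{\mathrm{SU}(3)}_{\vec\mu}$ equals a positive multiple of the corresponding $\mathrm{SU}(3)$-character value and is therefore positive, yields exactly the asserted cone $\{\sum_{a,b}s^at^bp_{a,b}(s,t):p_{a,b}>0\text{ on }\mathcal{Y}\}\cup\{0\}$.

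I expect this third step — the precise description of the positive cone — to be the main obstacle. Unlike in Theorem~\ref{theorem:SU3k}, the inductive system \eqref{equation:system_SU3} is not stationary (since $\mathbf{Rep}(\mathrm{SU}(3))$ has infinitely many simple objects), so the ``evaluate at a single point'' criterion of Remark~\ref{remark:ordered_ring_iso} is unavailable; the correct criterion involves the whole compact family $\mathcal{Y}$, and because $\sigma$ (that is, $st$) degenerates on the part of $\mathcal{Y}$ lying on the coordinate axes, positivity cannot simply be read off from the values of $q$ on $\mathcal{Y}$ but only through the monomial-graded decomposition above. Establishing this is precisely the content of \cite{H2}, so the remaining work is really the reduction to — and careful translation of — that result, together with the identification of $\mathcal{Y}$ in the $(s,t)$-coordinates.
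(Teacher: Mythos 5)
Your reduction to $M_{3^\infty}^{\mathrm{SU}(3)}$ and your construction of the ring isomorphism $\psi$ coincide with the paper's elementary outline in Remark~\ref{remark:SU3}: $\psi$ is $\psi_{\mathrm{cv}}^{-1}\circ\psi_{\mathrm{RR}}\circ\phi$, injectivity is inherited from $\phi$, and surjectivity holds because $1$, $s$ and $t$ lie in $\im\psi$ (your explicit solution of $m_{\vec\lambda,n}(s,t)=s^at^b$ inside $B_0(n)$ is correct but unnecessary once one knows $\psi$ is a ring homomorphism). This half is fine, as is the identification of the product with the interlacing map via Example~\ref{example:rep_homom}.

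The gap is in the determination of the positive cone, which is the only non-trivial step. Your monomial-grouping argument gives at most one inclusion: if $q$ is positive in $K_0$, then some finite stage exhibits $q$ as a non-negative integer combination of the $m_{\vec\mu,n}(s,t)P^{\mathrm{SU}(3)}_{\vec\mu}(s,t)$, and grouping by monomial yields $q=\sum_{a,b}s^at^bp_{a,b}$ with $p_{a,b}\geq 0$ on $\mathcal{Y}$ --- note $\geq$ rather than $>$, so even this direction does not land in the stated cone. The converse inclusion, that every $\sum_{a,b}s^at^bp_{a,b}$ with each $p_{a,b}>0$ on $\mathcal{Y}$ is positive in $K_0$, is the substance of the theorem and is not addressed; the paper's route is (i) to identify $\mathcal{X}$ as the exact common strict-positivity locus of all the $Q^{\mathrm{SU}(3)}_{\vec\lambda}$ via the branching rules for the maximal torus and the explicit Weyl-character-formula identity, and (ii) to combine unperforation of $K_0$ of an AF-algebra with the Effros--Handelman--Shen criterion (strict positivity under every extremal state implies positivity) and the fact that extremal states on an ordered ring are ring homomorphisms. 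You gesture at (ii) through the compactness of the state space but never close the loop, and the difficulty you correctly flag --- that elements such as $s$ vanish on part of $\mathcal{Y}$, so the criterion cannot be ``evaluate $q$ on $\mathcal{Y}$'' --- is precisely where the extra analysis of \cite{H2} enters; asserting that ``establishing this is the content of \cite{H2}'' without extracting the statement leaves the theorem unproved in your write-up. A smaller but genuine error: the $P^{\mathrm{SU}(3)}_{\vec\mu}$ are not positive on $\mathcal{Y}$ because they are ``positive multiples of character values'' --- characters of $\mathrm{SU}(3)$ at group elements are complex-valued in general; their positivity on $T(\mathcal{X})$ comes from evaluating the $Q^{\mathrm{SU}(3)}_{\vec\lambda}$ at points $(z_1+z_2^{-1}+z_1^{-1}z_2,\,z_1^{-1}+z_2+z_1z_2^{-1})$ with $z_1,z_2>0$, i.e.\ at positive diagonal elements of $\mathrm{SL}(3,\mathbb{R})$, which is exactly what the displayed Weyl-character-formula identity in Remark~\ref{remark:SU3} is for.
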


\begin{figure}
\centering
\resizebox{0.40\textwidth}{!}{%
\includegraphics{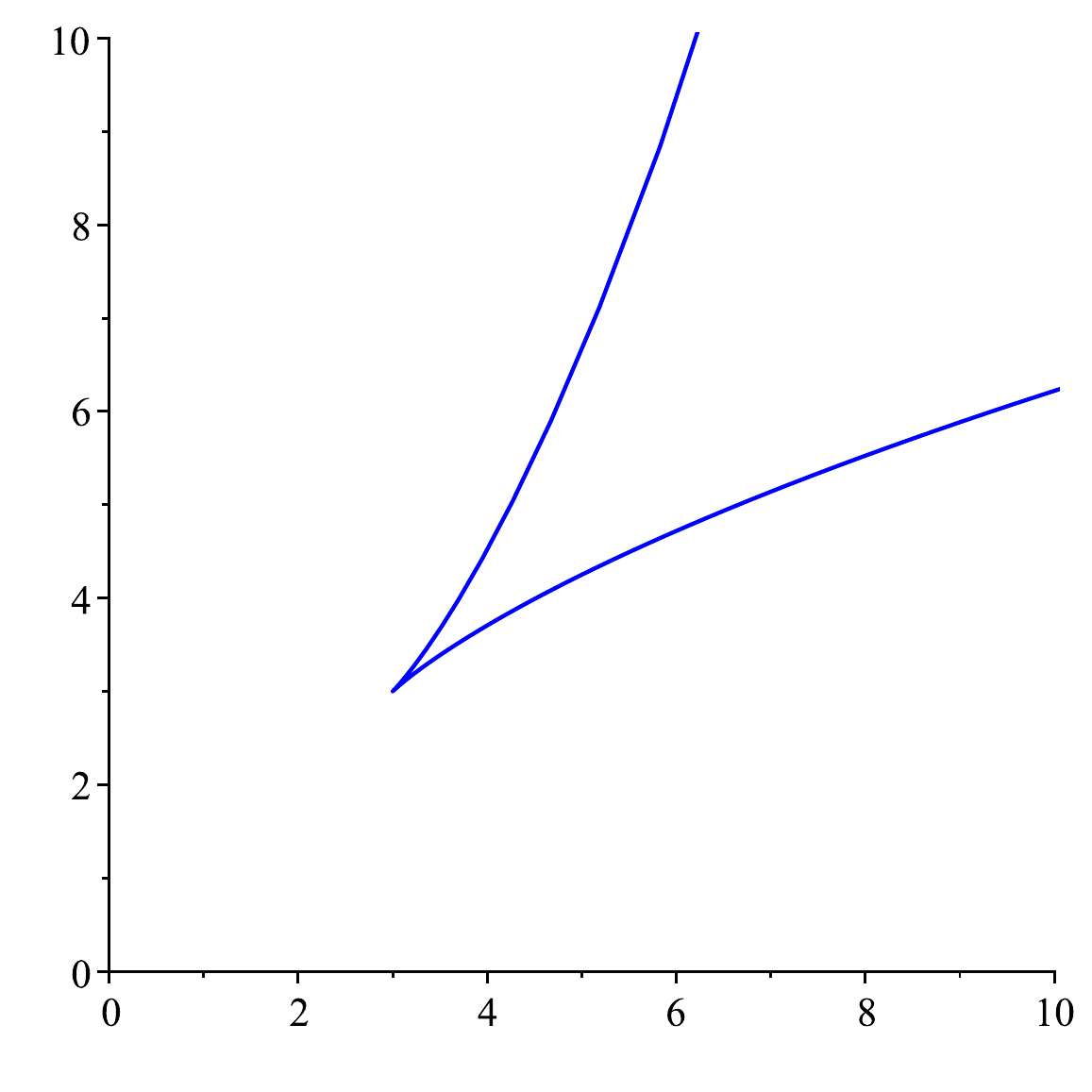}
}
\caption{\footnotesize The curve $(2a+a^{-2},2a^{-1}+a^2)$, $a>0$, which bounds $\mathcal{X}$. (The figure was produced using the computer program \emph{Maple}.)}
\label{fig:X}
\end{figure}

    \begin{remark}
    As in the case of $\mathbf{Rep}_k(\mathrm{SU}(3))$ (cf.\ Remark 
    \ref{remark:homom_A2}), the unital AF-algebra 
    $A(\mathbf{Rep}(\mathrm{SU}(3)),\pi_{(1,0)})$ is an inductive limit of 
    A$_2$-Temper\-ley--Lieb--Jones algebras and the $*$-homomorphism $\theta$ 
    has a description in terms of superimposed A$_2$-Temperley--Lieb diagrams.
    \end{remark}

    \noindent The set $\mathcal{Y}$ can be described as
    $$
    T(\mathcal{X})\cup \big(\{0\}\times [0,1/4]\big)\cup \big([0,1/4]\times\{0\}\big),
$$
where $\mathcal{X} = \{(a+b^{-1}+a^{-1}b,a^{-1}+b+ab^{-1})\,:\,a,b>0\}\subset(0,\infty)^{\times 2}$\label{page:X} and $T\colon (0,\infty)^{\times 2}\to (0,\infty)^{\times 2}$ is defined by $T(x,y) =
(x/y^2,y/x^2)$. The sets $\mathcal{X}$ and $\mathcal{Y}$ are depicted in figures
\ref{fig:X} and \ref{fig:Y}, respectively.

\begin{remark}\label{remark:SU3}
    Let us outline an ``elementary'' proof of Theorem \ref{theorem:SU3}. First of all, the asserted isomorphism $\psi$ is the unique map making the following diagram commute.
    $$
        \begin{CD}
            K_0(A(\mathbf{Rep}(\mathrm{SU}(3)),\pi_{(1,0)})) @>\psi >> \mathbb{Z}[s,t]\\
            @V\phi VV @V\psi_{\mathrm{cv}} VV\\
            F_{\mathbf{Rep}(\mathrm{SU}(3))}[\sigma^{-1}] @>\psi_{\mathrm{RR}}>> \mathbb{Z}[x^{\pm 1},y^{\pm 1}]
        \end{CD}
    $$
    Here, $\phi$ is the injective ring homomorphism from section \ref{subsection:ring_structure}, $\psi_{\mathrm{RR}}$ is induced by the classical ring isomorphism $F_{\mathbf{Rep}(\mathrm{SU}(3))}\to \mathbb{Z}[x,y]$ (cf.\ section \ref{subsection:fusion_rings}), and $\psi_{\mathrm{cv}}$ is defined by
    $
        \psi_{\mathrm{cv}}(p(s,t)) = p(x/y^2,y/x^2).
    $
    It is straightforward to verify that $\psi$ is a well-defined injective ring homomorphism. (Explicitly, $\psi$ is defined by a formula that is similar to that defining its namesake in section \ref{subsubsection:definition_psi}.) Since it is easy to see that $x/y^2$ and $y/x^2$ belong to $\mathrm{im}(\psi_{\mathrm{RR}}\circ \phi)$, hence that $1$, $s$ and $t$ belong to $\mathrm{im}\,\psi$, it follows that $\psi$ is surjective.

\begin{figure}
\centering
\resizebox{0.40\textwidth}{!}{%
\includegraphics{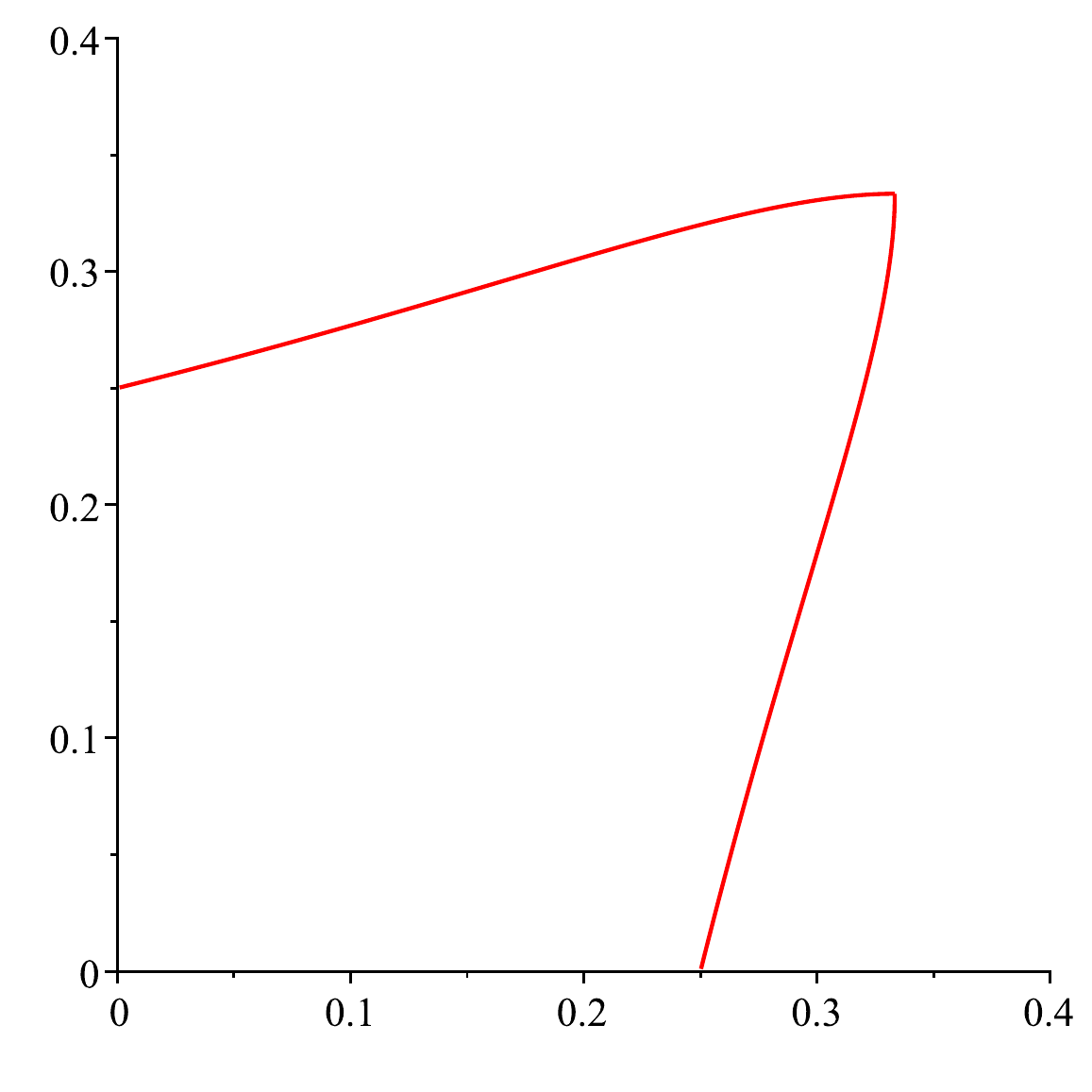}
}
\caption{\footnotesize The curve $\big((1-2u)/(2-3u)^2,u(2-3u)\big)$, $u\in
(0,1/3]$, and its
reflection in the line $s=t$, which together with the coordinate axes bound
$\mathcal{Y}$. (The figure was produced using the computer program
\emph{Maple}.)}
\label{fig:Y}
\end{figure}

    The non-trivial step in the proof is the determination of the positive cone $\mathbb{Z}[s,t]_+$, which can be accomplished by first using the well-known branching rules for the maximal torus in SU(3) as well as the Weyl Character Formula for SU(3), in the form of the formula
        \begin{multline*}\label{eq:2}
    Q^{\mathrm{SU}(3)}_{(n-1,m-1)}(z_1+z_2^{-1}+z_1^{-1}z_2,z_1^{-1}+z_2+z_1z_2^{-1})\\ = \frac{z_1^nz_2^m-z_1^{-m}z_2^{-n}+z_1^mz_2^{-(n+m)}-z_1^{n+m}z_2^{-m}+z_1^{-(n+m)}z_2^n-z_1^{-n}z_2^{n+m}}
    {z_1z_2-z_1^{-1}z_2^{-1}+z_1z_2^{-2}-z_1^{2}z_2^{-1}+z_1^{-2}z_2-z_1^{-1}z_2^2}
\end{multline*}
    (valid for all those $z_1,z_2\in\mathbb{C}\setminus\{0\}$ for which the denominator is non-zero and easily provable by induction), to show that
    \begin{align*}
        \mathcal{X}
         &= \{(x,y)\in\mathbb{R}^2\,:\,Q^{\mathrm{SU}(3)}_{\vec\lambda}(x,y)\geq 0\text{ for all }\vec\lambda\in\mathbb{N}_0^{\times 2}\}\\
         &= \{(x,y)\in\mathbb{R}^2\,:\,Q^{\mathrm{SU}(3)}_{\vec\lambda}(x,y) > 0\text{ for all }\vec\lambda\in\mathbb{N}_0^{\times 2}\}
    \end{align*}
    and then invoking two general facts, which we state below.

    First a bit of terminology. Let $(G,G_+)$ be an ordered group. We say that $(G,G_+)$ is \emph{unperforated} if $ng\in G_+$ implies $g\in G_+$ for any $g\in G$ and $n\in\mathbb{N}$. An \emph{order unit} $u$ in $(G,G_+)$ is an element $u\in G_+$ such that, for every $g\in G$, there exists $n\in\mathbb{N}$ for which $-nu\leq g\leq nu$. If $(G,G_+)$ is an ordered group with a distinguished order unit $u$ then a (normalized) \emph{state} on $(G,G_+)$ is a group homomorphism $G\to \mathbb{R}$ such that $\phi(g)\geq 0$ for all $g\in G_+$ and $\phi(u)=1$. The set of states is clearly convex, and it makes sense to speak of extremal states. By a state on an ordered ring, we shall mean a state on the underlying ordered group with (distinguished) order unit $1$.

    Now, we can state the promised general facts. (i) If an element of an 
    unperforated ordered group with an order unit (such as the ordered 
    $K_0$-group of any unital AF-algebra, cf.\ \cite{EHS}) has strictly 
    positive image under every extremal state then that element is positive (as 
    first proved by Effros--Handelman--Shen \cite{EHS} (item 1.4); see also 
    \cite{H1}, I.1). (ii) Every extremal state on an ordered ring is in fact a 
    ring homomorphism (as observed by Kerov--Vershik \cite{KV}, Maserick 
    \cite{M}, Voiculescu \cite{V}, and Wassermann \cite{Wa}; see also 
    \cite{H1}, I.2).
\end{remark}

As a corollary, we obtain the following result from \cite{Wa} (p.\ 123).

\begin{corollary}[Wassermann]\label{corollary:SO3} Denote by $\pi$ the defining representation of SO(3). Then, as ordered rings,
    $$
        K_0(A(\mathbf{Rep}(\mathrm{SO}(3)),\pi))\cong \mathbb{Z}[t],
    $$
    where the positive cone on the right hand side is
    $$
        \mathbb{Z}[t]_+ = \{p(t)\,:\,p>0\text{ on }(0,1/3]\}\cup\{0\}.
    $$
    Moreover, $A(\mathbf{Rep}(\mathrm{SO}(3)),\pi)\cong M_{3^\infty}^{\mathrm{SO}(3)}$ and, under the identification above and that in Theorem \ref{theorem:SU3}, the map induced by the inclusion $M_{3^\infty}^{\mathrm{SU}(3)}\to M_{3^\infty}^{\mathrm{SO}(3)}$ in $K$-theory is the ring homomorphism $\mathbb{Z}[s,t]\to\mathbb{Z}[t]$ given by $p(s,t)\mapsto p(t,t)$.
\end{corollary}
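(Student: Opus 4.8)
The plan is to derive the corollary from Theorem~\ref{theorem:SU3} together with the embedding of $\mathrm{SO}(3)$ into $\mathrm{SU}(3)$ as the subgroup of real matrices. First I would record that the defining representation $\pi$ of $\mathrm{SO}(3)$ is $3$-dimensional and self-conjugate (it is the complexification of a real representation), so Example~\ref{example:product_type} immediately gives $A(\mathbf{Rep}(\mathrm{SO}(3)),\pi)\cong M_{3^\infty}^{\mathrm{SO}(3)}$ with the product conjugation action. Since $\mathrm{SO}(3)\subset\mathrm{SU}(3)$, restriction gives a unital inclusion $M_{3^\infty}^{\mathrm{SU}(3)}\hookrightarrow M_{3^\infty}^{\mathrm{SO}(3)}$; because the interlacing $*$-isomorphism $M_{3^\infty}\otimes M_{3^\infty}\to M_{3^\infty}$ is equivariant, this inclusion intertwines the maps $\theta$ of Example~\ref{example:rep_homom}, and under the identifications above it is the one induced by the inclusions $\mathrm{End}_{\mathrm{SU}(3)}(\sigma^{\otimes n})\hookrightarrow\mathrm{End}_{\mathrm{SO}(3)}(\sigma^{\otimes n})$ (note that $\pi_{(1,0)}$ and $\pi_{(0,1)}$ both restrict to $\pi$, so $\sigma=\bar\pi_{(1,0)}\otimes\pi_{(1,0)}$ restricts to $\bar\pi\otimes\pi$).

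Next I would pass to $K$-theory. On representation rings, restriction is $R(\mathrm{SU}(3))=\mathbb{Z}[x,y]\to R(\mathrm{SO}(3))=\mathbb{Z}[u]$, $x,y\mapsto u=[\pi]$, and it sits in a commuting square with the maps $\phi$ of section~\ref{subsection:ring_structure} and the $K_0$-map of the inclusion (commutativity because restriction just decomposes a simple summand into its $\mathrm{SO}(3)$-constituents). Since every simple object $V_m$ of $\mathbf{Rep}(\mathrm{SO}(3))$ occurs in a tensor power of $\bar\pi\otimes\pi=\pi^{\otimes 2}$, the image of $\phi$ for $\mathbf{Rep}(\mathrm{SO}(3))$ is exactly $\mathbb{Z}[u^{-1}]$ inside $R(\mathrm{SO}(3))[(\bar\pi\otimes\pi)^{-1}]=\mathbb{Z}[u^{\pm1}]$, so $K_0(A(\mathbf{Rep}(\mathrm{SO}(3)),\pi))\cong\mathbb{Z}[t]$ as rings with $t=[\pi]^{-1}$. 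Chasing the change-of-variables maps around the square — $s=x/y^2$ and $t=y/x^2$ on the $\mathrm{SU}(3)$ side, $t=u^{-1}$ on the $\mathrm{SO}(3)$ side — both $x/y^2$ and $y/x^2$ go to $u^{-1}$, so the $K_0$-map of the inclusion is $p(s,t)\mapsto p(t,t)$; in particular it is surjective. This already yields the ring isomorphism and the last assertion of the corollary.

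The remaining and more delicate task is the positive cone. For $\mathbb{Z}[t]_+\subseteq\{p:p>0\text{ on }(0,1/3]\}\cup\{0\}$ I would, for each $c\in(0,1/3]$, build a tracial state $\tau_c$ on $M_{3^\infty}^{\mathrm{SO}(3)}$: pick $z>0$ with $z+z^{-1}+1=c^{-1}$ and let $\tau_c$ act on the summand $M_{N_m}(\mathbb{C})$ of $\mathrm{End}_{\mathrm{SO}(3)}((\bar\pi\otimes\pi)^{\otimes n})$ labelled by $V_m$ as $\chi_{V_m}(z)\,\chi_\pi(z)^{-2n}\,\mathrm{Tr}_{N_m}$; this is compatible with the connecting maps because $\chi_\pi(z)^2\chi_{V_m}(z)=\chi_{\pi^{\otimes 2}\otimes V_m}(z)$, and positive because $\chi_{V_m}(z)=\sum_{j=-m}^{m}z^j>0$. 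On $K_0\cong\mathbb{Z}[t]$ this $\tau_c$ is evaluation at $c$, and since $M_{3^\infty}^{\mathrm{SO}(3)}$ is simple it is faithful, so every nonzero element of $\mathbb{Z}[t]_+$ — being the class of a nonzero projection — is strictly positive at every $c\in(0,1/3]$. For the reverse inclusion I would, given $q>0$ on $(0,1/3]$, factor $q(t)=t^m\tilde q(t)$ with $\tilde q(0)\neq0$, so $\tilde q>0$ on $[0,1/3]$, observe that the first coordinates of the points of the compact set $\mathcal{Y}$ all lie in $[0,1/3]$ — for the curved part $T(\mathcal{X})$ this amounts to $x/y^2\leq1/3$ on $\mathcal{X}$, which is Newton's inequality $e_2^2\geq3e_1e_3$ for the positive numbers $\alpha,\beta,\gamma$ with $\alpha\beta\gamma=1$ whose elementary symmetric functions give the coordinates $x=e_1$, $y=e_2$ — and conclude that $\tilde q(s)>0$ on $\mathcal{Y}$, so $s^m\tilde q(s)\in\mathbb{Z}[s,t]_+$ by Theorem~\ref{theorem:SU3} and its image $q(t)=t^m\tilde q(t)$ lies in $\mathbb{Z}[t]_+$.

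I expect the hard part to be precisely this determination of the cone: the functorial bookkeeping gives the ring structure and the formula $p\mapsto p(t,t)$ almost for free, but pinning down the \emph{half-open} interval $(0,1/3]$ requires the explicit faithful traces $\tau_c$ (hence simplicity of $M_{3^\infty}^{\mathrm{SO}(3)}$) together with the small convexity input on $\mathcal{Y}$. An alternative, closer to Wassermann's original route, would be to bypass the embedding entirely and rerun the proof of Theorem~\ref{theorem:SU3} (cf.\ Remark~\ref{remark:SU3}) in rank one, the essential point being that the characters of $\mathrm{SO}(3)$, namely the Dirichlet-kernel polynomials $\sum_{j=-m}^{m}z^j$, are nonnegative precisely on $\{z+z^{-1}+1:z>0\}=[3,\infty)$, which the change of variables $t=1/x$ carries onto $(0,1/3]$.
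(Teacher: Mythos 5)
Your proposal is correct and follows exactly the route the paper indicates for this corollary, namely combining Theorem~\ref{theorem:SU3} with the branching rules for $\mathrm{SO}(3)\subset\mathrm{SU}(3)$ (the paper's own proof is a one-line citation of precisely these two ingredients). The details you supply are sound: the restriction map $x,y\mapsto u$ on representation rings does give $p(s,t)\mapsto p(t,t)$ after the change of variables, the traces $\tau_c$ (whose faithfulness already follows from the strict positivity of the weights on each finite stage, so the appeal to simplicity is not needed) give one inclusion of cones, and the Newton inequality $e_2^2\geq 3e_1e_3$ correctly bounds the first coordinates of $\mathcal{Y}$ by $1/3$ for the other.
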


\begin{proof}
    This follows easily from Theorem \ref{theorem:SU3} and the well-known branching rules for the inclusion $\mathrm{SO}(3)\subset\mathrm{SU}(3)$ (cf.\ \cite{Ra}).
\end{proof}

    \subsection{Sp(4)$_k$}\label{subsection:Sp4k}

    The same techniques as those employed for SU(3)$_k$ yield the following result. As the proof introduces no new ideas, we omit it from the present paper. We note, however, that it is based on the formula for the fusion ideals $J_k(\mathrm{Sp}(4))$ that we mentioned in section \ref{subsection:fusion_rings} (and, of course, on the well-known fusion graph of $\mathbf{Rep}_k(\mathrm{Sp}(4))$ with respect to $\pi_{(0,1)}$, which we also omit).

    \begin{theorem}\label{theorem:Sp4k}
    As ordered rings,
    $$
        K_0(A(\mathbf{Rep}_k(\mathrm{Sp}(4)),\pi_{(0,1)}))\cong \mathbb{Z}[s,t]/I_k,
    $$
    where $I_k = \langle \{P_{\vec\lambda}^{\mathrm{Sp}(4)}(s,t)\,:\,\lambda_1+\lambda_2 = k+1\}\cup\{P_{(0,k+2)}^{\mathrm{Sp}(4)}(s,t)\}\rangle$.

    Here, the polynomial $P_{\vec\lambda}^{\mathrm{Sp}(4)}(s,t)$ (for $\vec\lambda\in \mathbb{N}_0^{\times 2}$) is obtained from the Laurent polynomial $y^{-(\lambda_1+\lambda_2)}Q_{\vec\lambda}^{\mathrm{Sp}(4)}(x,y)$ by performing the change of variables $(s,t) = (1/y,x^2/y^{2})$, and the positive cone on the right hand side is
    $$
        \big(\mathbb{Z}[s,t]/I_k\big)_+ = \{[p(s,t)]\,:\,p(\beta_{k,1},\beta_{k,2})>0\}\cup\{[0]\},
    $$
    where $\beta_{k,1} = d(\pi_{(0,1)})^{-1}$ and $\beta_{k,2} = d(\pi_{(1,0)})^2d(\pi_{(0,1)})^{-2}$.
    \end{theorem}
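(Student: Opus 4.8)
Our proof follows that of Theorem~\ref{theorem:SU3k}, with $\mathrm{SU}(3)$ replaced by $\mathrm{Sp}(4)$, the object $\pi_{(1,0)}$ replaced by $\pi_{(0,1)}$, the change of variables $(s,t)=(x/y^{2},y/x^{2})$ replaced by $(s,t)=(1/y,x^{2}/y^{2})$, and the Gepner description of $J_k(\mathrm{SU}(3))$ replaced by the formula for $J_k(\mathrm{Sp}(4))$ recalled in section~\ref{subsection:fusion_rings}. Concretely, I would first construct a group homomorphism
$$
\psi\colon K_0(A(\mathbf{Rep}_k(\mathrm{Sp}(4)),\pi_{(0,1)}))\longrightarrow\mathbb{Z}[s,t]/I_k,
$$
prove that it is an isomorphism of ordered groups, and then fit it into a commutative diagram
$$
\begin{CD}
    K_0(A(\mathbf{Rep}_k(\mathrm{Sp}(4)),\pi_{(0,1)})) @>\psi >> \mathbb{Z}[s,t]/I_k\\
    @V\phi VV @V\psi_{\mathrm{cv}} VV\\
    \mathrm{Ver}_k(\mathrm{Sp}(4))[\sigma^{-1}] @>\psi_{\mathrm{GF}}>> \mathbb{Z}[x,y^{\pm 1}]/\tilde{J}_k
\end{CD}
$$
in which $\phi$ is the injective ring homomorphism of section~\ref{subsection:ring_structure}, $\tilde{J}_k$ is the image of $J_k(\mathrm{Sp}(4))\subset\mathbb{Z}[x,y]$ under localization at $[\pi_{(0,1)}]=y$ (recall $[\sigma]=y^{2}$, since $\sigma=\pi_{(0,1)}^{\otimes 2}$ and $\pi_{(0,1)}$ is self-conjugate), $\psi_{\mathrm{GF}}$ is induced by the Gepner--Fuchs isomorphism of Theorem~\ref{theorem:Gepner}, and $\psi_{\mathrm{cv}}([p(s,t)])=[p(1/y,x^{2}/y^{2})]$ (which is well-defined by the definition of $I_k$ and the description of $J_k(\mathrm{Sp}(4))$). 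Once $\psi$ is known to be bijective, the injectivity of $\phi$ and $\psi_{\mathrm{GF}}$ forces $\psi_{\mathrm{cv}}$ to be injective, so $\psi_{\mathrm{cv}}^{-1}$ is a ring homomorphism on $\mathrm{im}(\psi_{\mathrm{GF}}\circ\phi)$ and $\psi=\psi_{\mathrm{cv}}^{-1}\circ\psi_{\mathrm{GF}}\circ\phi$ is an isomorphism of ordered rings.

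The preliminary combinatorial lemmas of section~\ref{subsection:SU3k} have direct analogues. Writing $x=x_1$, $y=x_2$ for the classes of the fundamental representations, one proves a support lemma for the polynomials $Q^{\mathrm{Sp}(4)}_{\vec\lambda}(x,y)$ (defined recursively from the fusion graph of $\mathbf{Rep}(\mathrm{Sp}(4))$ with respect to $\pi_{(0,1)}$) in the spirit of Lemma~\ref{lem:6}, which together with the obvious analogue of Lemma~\ref{lem:6b} shows that the substitution $(s,t)=(1/y,x^{2}/y^{2})$ turns $y^{-(\lambda_1+\lambda_2)}Q^{\mathrm{Sp}(4)}_{\vec\lambda}(x,y)$ into a genuine polynomial $P^{\mathrm{Sp}(4)}_{\vec\lambda}(s,t)$; here the $\mathbb{Z}/2$-grading of $R(\mathrm{Sp}(4))$ by the center of $\mathrm{Sp}(4)$ plays the role that the $\mathbb{Z}/3$-grading of $R(\mathrm{SU}(3))$ played before. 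One then derives recursion formulae for the $P^{\mathrm{Sp}(4)}_{\vec\lambda}$ from the fusion graph, records that those $P^{\mathrm{Sp}(4)}_{\vec\lambda}$ with $\lambda_1+\lambda_2>k$ arising as ``overflow'' terms lie in $I_k$ (this is the place where the formula for $J_k(\mathrm{Sp}(4))$ enters, exactly as in Lemma~\ref{lemma:generation}), and proves a separation-of-variables lemma of the type of Lemma~\ref{lem:9}, writing $P^{\mathrm{Sp}(4)}_{\vec\lambda}(s,t)$ as $\sum_j (st)^{j}(A_j(s)+B_j(t))$ (in the variant adapted to the $(1/y,x^{2}/y^{2})$ substitution) with controlled degrees.

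With these in hand, the construction and basic properties of $\psi$ proceed as in section~\ref{subsubsection:definition_psi}. One identifies $\{\mu\in\Lambda\,:\,\mu\prec\sigma^{\otimes n}\}$ and $\{\mu\in\Lambda\,:\,\mu\prec\sigma^{\otimes n}\otimes\pi_{(0,1)}\}$ with explicit finite index sets (only simple objects in the trivial class of the $\mathbb{Z}/2$-grading occur), defines $\psi$ on basis vectors by $\pi_{\vec\lambda}\mapsto[m_{\vec\lambda,n}(s,t)\,P^{\mathrm{Sp}(4)}_{\vec\lambda}(s,t)]$, where $m_{\vec\lambda,n}(s,t)$ is the monomial obtained from $Q^{\mathrm{Sp}(4)}_{\vec\lambda}(x,y)/y^{2n}$ (resp.\ $/y^{2n+1}$) by the substitution, and checks compatibility with the connecting maps $M_{\pi_{(0,1)}}$, the overflow term at level $k$ landing in $I_k$. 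Injectivity is shown as in section~\ref{subsubsection:definition_psi}: from $\psi_{0,n}(v)=[0]$ one clears denominators to deduce that $\sum v_{\vec\lambda}Q^{\mathrm{Sp}(4)}_{\vec\lambda}(x,y)$ lies in $J_k(\mathrm{Sp}(4))$ up to a power of $y$, whence $M_\sigma^{N}(v)=0$ for some $N$. Positivity follows from the Perron--Frobenius argument of Remark~\ref{remark:ordered_ring_iso}: under $\psi$ and the substitution the dimension homomorphism $d$ becomes evaluation at $(\beta_{k,1},\beta_{k,2})=(d(\pi_{(0,1)})^{-1},d(\pi_{(1,0)})^{2}d(\pi_{(0,1)})^{-2})$, so a class is positive if and only if its image in $\mathbb{Z}[s,t]/I_k$ is positive there.

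\textbf{The main obstacle is the surjectivity of $\psi$}, just as in the SU(3)$_k$ case. It amounts to showing that every monomial $s^{a}t^{b}$ lies in $\mathrm{im}\,\psi$. First, the analogue of Lemma~\ref{lemma:monomials2} gives $[m_{\vec\lambda,n}(s,t)]\in\mathrm{im}\,\psi$ for all admissible $\vec\lambda$, producing a stock of low-degree monomials (the analogue of Lemma~\ref{lem:8}), and the analogue of Remark~\ref{rem:5} --- increasing $n$ multiplies $m_{\vec\lambda,n}$ by the monomial that is the image of $y^{-2}$, and the intermediate maps through $\sigma^{\otimes n}\otimes\pi_{(0,1)}$ supply the image of $y^{-1}$ --- shows that $\mathrm{im}\,\psi$ is closed under the requisite shifts. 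Feeding into this a relation $P^{\mathrm{Sp}(4)}_{\vec\lambda}(s,t)\equiv 0$ with $\lambda_1+\lambda_2=k+1$, chosen via the separation-of-variables lemma so that it is monic of top degree in one of the variables modulo lower-order and $st$-divisible terms, then lets one bootstrap from the low-degree monomials to all of them, with the finitely many small values of $k$ treated individually. The part that has to be redone from scratch for Sp(4) --- and that I expect to be the real work --- is verifying that the degree bounds produced by the separation-of-variables lemma match the available supply of low-degree monomials, the step that in the SU(3) case rested on the inequality $\lfloor(k+1)/2\rfloor-\lfloor k/3\rfloor\leq\lfloor k/3\rfloor$; for Sp(4) the relevant arithmetic involves the colabels of $\mathrm{Sp}(4)$ and the factor $2$ in the change of variables, and is checked modulo a small period.
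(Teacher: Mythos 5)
Your proposal is correct and takes exactly the route the paper intends: the paper in fact omits the proof of this theorem, stating only that ``the same techniques as those employed for SU(3)$_k$'' apply together with the formula for $J_k(\mathrm{Sp}(4))$ and the fusion graph with respect to $\pi_{(0,1)}$, and your outline supplies precisely the right adaptations (localization at $y$ since $\sigma=\pi_{(0,1)}^{\otimes 2}$ with $\pi_{(0,1)}$ self-conjugate, the $\mathbb{Z}/2$-grading by the centre in place of the $\mathbb{Z}/3$-grading, and the redone degree arithmetic in the surjectivity bootstrap). One small slip in wording: $m_{\vec\lambda,n}(s,t)$ should be the monomial obtained from $y^{(\lambda_1+\lambda_2)-2n}$ (resp.\ $y^{(\lambda_1+\lambda_2)-(2n+1)}$) under the substitution, i.e.\ the normalizing factor satisfying $m_{\vec\lambda,n}(s,t)P^{\mathrm{Sp}(4)}_{\vec\lambda}(s,t)=Q^{\mathrm{Sp}(4)}_{\vec\lambda}(x,y)/y^{2n}$, rather than being ``obtained from $Q^{\mathrm{Sp}(4)}_{\vec\lambda}(x,y)/y^{2n}$'' itself.
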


    \noindent The positive cone is well-defined by the discussion in section \ref{subsection:fusion_rings}. Explicit formulae for $\beta_{k,1}$ and $\beta_{k,2}$, which may e.g.\ be deduced from the formulae for the modular $S$-matrix that are mentioned in section \ref{subsection:fusion_rings}, are rather complicated and not very illuminating, so we omit them.

    \subsection{(G$_{2}$)$_k$}

    Similarly, we were able to prove the following result.

    \begin{theorem}\label{theorem:G2k}
    As ordered rings,
    $$
        K_0(A(\mathbf{Rep}_k(\mathrm{G}_2),\pi_{(1,0)}))\cong \mathbb{Z}[s,t]/I_k,
    $$
    where $I_k = \langle \{P_{\vec\lambda}^{\mathrm{G}_2}(s,t)\,:\,\lambda_1+2\lambda_2 = k+1\}\cup\{P_{(k+2,0)}^{\mathrm{G}_2}(s,t)\}\rangle$.

    Here, the polynomial $P_{\vec\lambda}^{\mathrm{G}_2}(s,t)$ (for $\vec\lambda\in \mathbb{N}_0^{\times 2}$) is obtained from the Laurent polynomial $x^{-(\lambda_1+2\lambda_2)}Q_{\vec\lambda}^{\mathrm{G}_2}(x,y)$ by performing the change of variables $(s,t) = (1/x,y/x^{2})$, and the positive cone on the right hand side is
    $$
        \big(\mathbb{Z}[s,t]/I_k\big)_+ = \{[p(s,t)]\,:\,p(\beta_{k,1},\beta_{k,2})>0\}\cup\{[0]\},
    $$
    where $\beta_{k,1} = d(\pi_{(1,0)})^{-1}$ and $\beta_{k,2} = d(\pi_{(1,0)})^{-2}d(\pi_{(0,1)})$.
    \end{theorem}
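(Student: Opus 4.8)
The plan is to mimic, with the obvious modifications, the proof of Theorem \ref{theorem:SU3k}. The self-conjugacy of $\pi = \pi_{(1,0)}$ in $\mathbf{Rep}_k(\mathrm{G}_2)$ simplifies the combinatorics: since $\bar\pi\cong\pi$ we have $\sigma = \bar\pi\otimes\pi\cong\pi^{\otimes 2}$, so the inductive system computing $K_0(A(\mathbf{Rep}_k(\mathrm{G}_2),\pi))$ is cofinal in, and hence has the same limit as, the finer system
$$
\cdots\longrightarrow\bigoplus_{\mu\prec\pi^{\otimes n}}\mathbb{Z}\mu\overset{M_\pi}{\longrightarrow}\bigoplus_{\mu\prec\pi^{\otimes (n+1)}}\mathbb{Z}\mu\longrightarrow\cdots,
$$
all of whose connecting maps are multiplication by $\pi$ in $\mathrm{Ver}_k(\mathrm{G}_2)$; so we use the latter. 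Because $\mathrm{G}_2$ has trivial centre there is no congruence constraint playing the role of the ``triality'' condition of the SU(3) case, and the relevant index set $B(k,n) = \{\vec\lambda\,:\,\pi_{\vec\lambda}\prec\pi^{\otimes n},\ \ell(\vec\lambda)\leq k\}$ is cut out (for $n\geq 2$) purely by the level bound $\ell(\vec\lambda) = \lambda_1+2\lambda_2\leq\min(n,k)$; in particular only a single family of monomials is needed.

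First I would record the polynomial preliminaries, in parallel with section \ref{subsubsection:polys}. The fusion graph of $\mathbf{Rep}(\mathrm{G}_2)$ with respect to $\pi_{(1,0)}$ gives recursions for the $Q^{\mathrm{G}_2}_{\vec\lambda}(x,y)$, and the change of variables $(s,t) = (1/x,\,y/x^2)$ yields corresponding recursions for the $P^{\mathrm{G}_2}_{\vec\lambda}(s,t)$. The crucial elementary point (the $\mathrm{G}_2$-analogue of Lemmas \ref{lem:6} and \ref{lem:6b}) is that $x^{-n}Q^{\mathrm{G}_2}_{\vec\lambda}(x,y) = s^{\,n-\ell(\vec\lambda)}P^{\mathrm{G}_2}_{\vec\lambda}(s,t)$ whenever $\ell(\vec\lambda)\leq n$, so that the relevant ``monomials'' $m_{\vec\lambda,n}(s,t)$ are simply the powers $s^{\,n-\ell(\vec\lambda)}$ of $s$. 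I would also prove: a $\mathrm{G}_2$-analogue of Lemma \ref{lemma:generation}, matching the stated generating set $\{P^{\mathrm{G}_2}_{\vec\lambda}\,:\,\lambda_1+2\lambda_2 = k+1\}\cup\{P^{\mathrm{G}_2}_{(k+2,0)}\}$ of $I_k$ with the generating set $\{Q^{\mathrm{G}_2}_{\vec\lambda}\,:\,\ell(\vec\lambda) = k+1\}\cup\{Q^{\mathrm{G}_2}_{(k+2,0)}\}$ of $J_k(\mathrm{G}_2)$ furnished by Theorem 1.1 in \cite{D}; a $\mathrm{G}_2$-analogue of Lemma \ref{lem:N} giving the closed form of $P^{\mathrm{G}_2}_{\vec\lambda}$ on the coordinate axes (pinning down a leading coefficient to be $\pm 1$); and a $\mathrm{G}_2$-analogue of Lemma \ref{lem:9} writing $P^{\mathrm{G}_2}_{\vec\lambda}(s,t) = \sum_j (st)^j(A_j(s)+B_j(t))$ with controlled degrees of $A_j$ and $B_j$.

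With these in hand I define the group homomorphism $\psi\colon K_0(A(\mathbf{Rep}_k(\mathrm{G}_2),\pi_{(1,0)}))\to\mathbb{Z}[s,t]/I_k$ as the limit of the maps $\psi_n(\pi_{\vec\lambda}) = [s^{\,n-\ell(\vec\lambda)}P^{\mathrm{G}_2}_{\vec\lambda}(s,t)]$, $\vec\lambda\in B(k,n)$; the compatibility $\psi_n = \psi_{n+1}\circ M_\pi$ follows, exactly as in section \ref{subsubsection:definition_psi}, from the fusion rules of $\pi_{(1,0)}$ together with the fact that $xQ^{\mathrm{G}_2}_{\vec\lambda}-\sum_{\vec\nu}(N_\pi)_{\vec\lambda,\vec\nu}Q^{\mathrm{G}_2}_{\vec\nu}$ either vanishes (when $\ell(\vec\lambda)<k$) or is a level-$(k+1)$ polynomial, hence lies in $J_k(\mathrm{G}_2)$. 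Injectivity of $\psi$ is then proved by clearing denominators (multiplying through by a power of $x$) to pull a vanishing identity back into $J_k(\mathrm{G}_2)$, and positivity by the stationarity/Perron--Frobenius argument of Remark \ref{remark:ordered_ring_iso}, after checking that evaluation at $(\beta_{k,1},\beta_{k,2}) = (d(\pi_{(1,0)})^{-1},\,d(\pi_{(1,0)})^{-2}d(\pi_{(0,1)}))$ corresponds, under $(s,t) = (1/x,\,y/x^2)$, to the quantum-dimension functional $d$ on $\mathrm{Ver}_k(\mathrm{G}_2)[\sigma^{-1}]$, so that $(\beta_{k,1},\beta_{k,2})$ is a common zero of $I_k$ and the two positive cones correspond.

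The substantial part, and the step I expect to be the real obstacle, is surjectivity of $\psi$. The starting points are easy: $1$ and all powers of $s$ lie in $\im\psi$ (the powers $s^n$, $n\geq 2$, as $\psi_n(\pi_{\vec 0})$, and $s$ itself as $\psi_2(\pi_{(1,0)})$), while $t$ lies in $\im\psi$ as $\psi_2(\pi_{(0,1)})$, since $Q^{\mathrm{G}_2}_{(0,1)} = y$ forces $P^{\mathrm{G}_2}_{(0,1)}(s,t) = t$ and $\pi_{(0,1)}\prec\pi^{\otimes 2}$; more generally $[s^a P^{\mathrm{G}_2}_{\vec\lambda}]\in\im\psi$ for every simple $\pi_{\vec\lambda}$ of $\mathbf{Rep}_k(\mathrm{G}_2)$ and every $a\geq 0$, and a $\mathrm{G}_2$-analogue of Remark \ref{rem:5} shows $\im\psi$ is closed under multiplication by $st$. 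One then bootstraps: using the relations $[P^{\mathrm{G}_2}_{\vec\lambda}] = 0$ in $\mathbb{Z}[s,t]/I_k$ for judiciously chosen $\vec\lambda$ with $\ell(\vec\lambda) = k+1$, brought into the $\sum_j(st)^j(A_j(s)+B_j(t))$ form of the Lemma \ref{lem:9} analogue and with leading term in $t$ (respectively in $s$) of coefficient $\pm 1$, one shows by induction that every power of $t$, then every power of $s$, and finally, via the $st$-closure, every monomial $s^a t^b$ lies in $\im\psi$. The delicate point is, as in the surjectivity part of the proof of Theorem \ref{theorem:SU3k}, making the degree estimates in the Lemma \ref{lem:9} analogue cooperate uniformly in $k$; moreover, since the change of variables $(s,t) = (1/x,\,y/x^2)$ is not symmetric (reflecting that $\pi_{(1,0)}$ and $\pi_{(0,1)}$, unlike the fundamental representations of SU(3), are not conjugate), the two bootstraps are genuinely different, and a handful of small levels $k$ will need to be checked by hand. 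Finally, the square
$$
\begin{CD}
K_0(A(\mathbf{Rep}_k(\mathrm{G}_2),\pi_{(1,0)})) @>\psi>> \mathbb{Z}[s,t]/I_k\\
@V\phi VV @V\psi_{\mathrm{cv}} VV\\
\mathrm{Ver}_k(\mathrm{G}_2)[\sigma^{-1}] @>\psi_{\mathrm{GF}}>> \mathbb{Z}[x^{\pm 1},y]/\tilde{J}_k
\end{CD}
$$
--- in which $\tilde{J}_k$ is the image of $J_k(\mathrm{G}_2)$ in $(\mathbb{Z}[x,y])[x^{-1}]$, $\psi_{\mathrm{cv}}([p(s,t)]) = [p(1/x,\,y/x^2)]$ (well-defined by Douglas's description of $J_k(\mathrm{G}_2)$), $\psi_{\mathrm{GF}}$ is induced by Theorem \ref{theorem:Gepner}, and $\phi$ is the injective ring homomorphism of section \ref{subsection:ring_structure} --- commutes by the identity $x^{-n}Q^{\mathrm{G}_2}_{\vec\lambda}(x,y) = s^{\,n-\ell(\vec\lambda)}P^{\mathrm{G}_2}_{\vec\lambda}(s,t)$; since $\psi$ is bijective and $\phi,\psi_{\mathrm{GF}}$ are injective, $\psi_{\mathrm{cv}}$ is injective, whence $\psi = \psi_{\mathrm{cv}}^{-1}\circ\psi_{\mathrm{GF}}\circ\phi$ is a ring homomorphism and therefore an isomorphism of ordered rings.
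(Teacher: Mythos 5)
Your proposal is correct and is essentially the proof the paper has in mind: the paper explicitly omits the argument for $\mathrm{G}_2$, stating only that the same techniques as for $\mathrm{SU}(3)_k$ apply, based on Douglas's generating set for $J_k(\mathrm{G}_2)$ and the fusion graph with respect to $\pi_{(1,0)}$, and your sketch carries out exactly that adaptation (correctly exploiting the self-conjugacy of $\pi_{(1,0)}$ and the trivial centre to reduce to a single family of monomials $s^{\,n-\ell(\vec\lambda)}$, and correctly identifying the degree-controlled decomposition and the $t$-power bootstrap as the substantive steps). No gaps beyond the ones you yourself flag as requiring the routine but tedious verifications paralleling Lemmas \ref{lem:6}--\ref{lem:9}.
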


    \noindent Similar comments to those in section \ref{subsection:Sp4k} apply.

    \section{Recovering the Verlinde ring in special cases}\label{section:recovering}

    We will in this section show that $K_0(A(\mathcal{C},\pi))\cong F_{\mathcal{C}}$ as ordered rings in certain special cases, where $F_{\mathcal{C}}$ is equipped with the positive cone $(F_{\mathcal{C}})_+ = \{x\,:\,d(x)>0\}\cup\{0\}$. (As usual, $d$ denotes the quantum dimension in $\mathcal{C}$.)

    \subsection{General remarks}

    Fix a rigid C*-tensor category $\mathcal{C}$ and an object $\pi$ therein. We denote by $\Lambda$ the set of simple objects in $\mathcal{C}$ and (as in Remark \ref{remark:subcategory}) by $\mathcal{C}_1$ the full C*-tensor subcategory of $\mathcal{C}$ generated by the simple objects that occur as direct summands in tensor powers of $\sigma = \bar{\pi}\otimes\pi$.
    In the next two lemmas, we clarify some assumptions that we will make later.

    \begin{lemma}\label{lemma:complete}
        Suppose that $\Lambda$ is a finite set. Then the following conditions are equivalent.
        \begin{itemize}
            \item[(i)] $\mathcal{C}_1 = \mathcal{C}$;
            \item[(ii)] For every $\mu\in\Lambda$ there exists $n\in\mathbb{N}_0$ such that $\mu\prec\sigma^{\otimes n}$;
            \item[(iii)] There exists $n\in\mathbb{N}_0$ such that $\mu\prec\sigma^{\otimes n}$ for every $\mu\in\Lambda$.
        \end{itemize}
    \end{lemma}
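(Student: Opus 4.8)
The plan is to first isolate the one structural fact that makes the lemma work, namely that $\boldsymbol{1}\prec\sigma$. Indeed $\sigma=\bar\pi\otimes\pi$, and applying Frobenius Reciprocity to the object $\bar\pi$ (whose conjugate is $\pi$) shows that $\boldsymbol{1}$ is a direct summand of $\bar\pi\otimes\overline{\bar\pi}\cong\bar\pi\otimes\pi=\sigma$. From this I would extract the \emph{monotonicity} principle: if $\mu\prec\sigma^{\otimes n}$ and $m\geq n$, then $\mu\prec\sigma^{\otimes m}$. This is because $\boldsymbol{1}\cong\boldsymbol{1}^{\otimes(m-n)}\prec\sigma^{\otimes(m-n)}$, so $\sigma^{\otimes m}=\sigma^{\otimes n}\otimes\sigma^{\otimes(m-n)}$ has $\sigma^{\otimes n}\otimes\boldsymbol{1}\cong\sigma^{\otimes n}$ — and hence $\mu$ — as a direct summand.

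Next I would treat the equivalence (i)$\Leftrightarrow$(ii). By semisimplicity, $\mathcal{C}_1$ is the full subcategory whose objects are the direct sums of simple objects lying in the set $S=\{\mu\in\Lambda:\mu\prec\sigma^{\otimes n}\text{ for some }n\in\mathbb{N}_0\}$; one just has to observe that $S$ already generates a C*-tensor subcategory on the nose, i.e.\ that $\boldsymbol{1}\in S$ (the fact just established), that $S$ is closed under $\otimes$ (any simple summand of $\mu\otimes\nu$ with $\mu\prec\sigma^{\otimes n}$, $\nu\prec\sigma^{\otimes m}$ is a summand of $\sigma^{\otimes(n+m)}$), and that $S$ is closed under conjugation (since $\sigma$ is self-conjugate, $\overline{\sigma^{\otimes n}}\cong\sigma^{\otimes n}$). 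Therefore $\mathcal{C}_1=\mathcal{C}$ exactly when $S=\Lambda$, which is precisely condition (ii).

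It remains to handle (ii)$\Leftrightarrow$(iii). The implication (iii)$\Rightarrow$(ii) is trivial. For (ii)$\Rightarrow$(iii) I would use finiteness of $\Lambda$: for each $\mu\in\Lambda$ pick $n_\mu\in\mathbb{N}_0$ with $\mu\prec\sigma^{\otimes n_\mu}$, set $n=\max_{\mu\in\Lambda}n_\mu$ (well-defined as $\Lambda$ is finite), and apply the monotonicity principle to conclude that $\mu\prec\sigma^{\otimes n}$ simultaneously for all $\mu\in\Lambda$.

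I do not expect a genuine obstacle here: the proof is short. The two points that warrant care are the verification that $\boldsymbol{1}\prec\sigma$ (as opposed to only $\boldsymbol{1}\prec\pi\otimes\bar\pi$, which is the form in which Frobenius Reciprocity was quoted above) and the identification of the simple objects of $\mathcal{C}_1$ with the set $S$; finiteness of $\Lambda$ is used only to make the maximum in the last step exist.
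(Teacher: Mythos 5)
Your proof is correct and follows essentially the same route as the paper's (much terser) argument: the key fact is $\boldsymbol{1}\prec\sigma$, which makes the sets $\{\mu\in\Lambda:\mu\prec\sigma^{\otimes n}\}$ increase with $n$, and finiteness of $\Lambda$ forces them to stabilize, giving (ii)$\Rightarrow$(iii). Your extra care in checking that $S$ is closed under $\otimes$ and conjugation (so that $\mathcal{C}_1$ really is the additive hull of $S$), and in applying Frobenius Reciprocity to $\bar\pi$ rather than $\pi$ to land on $\bar\pi\otimes\pi=\sigma$, only makes explicit what the paper leaves implicit.
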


    \begin{proof}
        This follows from the fact that $1\prec\sigma$, which implies that the set of those $\mu\in\Lambda$ for which $\mu\prec\sigma^{\otimes n}$ increases with $n$, and hence stabilizes eventually.
    \end{proof}

    \noindent Given a commutative ring $R$ with identity, we will use the symbol $\mathrm{Inv}(R)$ to denote the group of invertible elements in $R$.

    \begin{lemma}\label{lemma:invertible}
        Suppose that $\Lambda$ is a finite set and that $F_{\mathcal{C}}$ is commutative. Then the following conditions are equivalent.
        \begin{itemize}
            \item[(i)] $\det(N_\pi) = \pm 1$;
            \item[(ii)] $\sigma\in\mathrm{Inv}(F_{\mathcal{C}})$;
            \item[(iii)] $\sigma\in\mathrm{Inv}(F_{\mathcal{C}_1})$.
        \end{itemize}
    \end{lemma}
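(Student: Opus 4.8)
The plan is to reduce all three conditions to a single linear-algebraic fact about the free abelian group $F_{\mathcal{C}} = \bigoplus_{\mu\in\Lambda}\mathbb{Z}\mu$, which has \emph{finite} rank precisely because $\Lambda$ is finite: multiplication by $\sigma$ on $F_{\mathcal{C}}$ is the $\mathbb{Z}$-linear endomorphism with matrix $N_\sigma$ (in the basis $\Lambda$). I would combine this with two observations --- that $\det N_\sigma = (\det N_\pi)^2$, and that $F_{\mathcal{C}_1}$ embeds into $F_{\mathcal{C}}$ as a direct summand that is invariant under multiplication by $\sigma$.

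First I would record the preliminaries. By Frobenius Reciprocity the fusion matrix $N_{\bar\pi}$ is the transpose of $N_\pi$, so $\det N_{\bar\pi} = \det N_\pi$; since $\sigma = \bar\pi\otimes\pi$, multiplication by $\sigma$ factors as multiplication by $\bar\pi$ after multiplication by $\pi$, so $N_\sigma = N_{\bar\pi}N_\pi$ (the factors commuting as $F_{\mathcal{C}}$ is commutative), whence $\det N_\sigma = (\det N_\pi)^2 \ge 0$. Next, identify $F_{\mathcal{C}_1}$ with the subring $\bigoplus_{\mu\in\Lambda'}\mathbb{Z}\mu$ of $F_{\mathcal{C}}$, where $\Lambda'\subseteq\Lambda$ is the set of simple objects occurring in some $\sigma^{\otimes n}$; note $\boldsymbol{1},\sigma\in F_{\mathcal{C}_1}$. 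The one nonroutine point is the combinatorial claim that if $\mu\in\Lambda\setminus\Lambda'$ then every simple summand of $\sigma\otimes\pi_\mu$ again lies in $\Lambda\setminus\Lambda'$: if $\pi_\nu\prec\sigma\otimes\pi_\mu$ with $\pi_\nu\prec\sigma^{\otimes n}$, then Frobenius Reciprocity together with the self-duality $\bar\sigma\cong\sigma$ gives $\pi_\mu\prec\bar\sigma\otimes\pi_\nu\cong\sigma\otimes\pi_\nu\prec\sigma^{\otimes(n+1)}$, forcing $\mu\in\Lambda'$, a contradiction. Hence $W\df\bigoplus_{\mu\in\Lambda\setminus\Lambda'}\mathbb{Z}\mu$ is invariant under multiplication by $\sigma$, and $F_{\mathcal{C}} = F_{\mathcal{C}_1}\oplus W$ with both summands $\sigma$-invariant (equivalently, $N_\sigma$ is block-diagonal for this splitting).

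The three implications are then short. For (i)$\Leftrightarrow$(ii): since $F_{\mathcal{C}}$ is free of finite rank, $\sigma$ is invertible in $F_{\mathcal{C}}$ iff multiplication by $\sigma$ is surjective (a preimage of $\boldsymbol{1}$ is an inverse, using commutativity), which for a $\mathbb{Z}$-linear endomorphism of such a module holds iff its determinant is $\pm1$; by the computation above this is exactly $\det N_\pi = \pm1$. The implication (iii)$\Rightarrow$(ii) is immediate, since $F_{\mathcal{C}_1}$ is a unital subring of $F_{\mathcal{C}}$ containing $\sigma$. For (ii)$\Rightarrow$(iii), write $\sigma^{-1} = a+b$ with $a\in F_{\mathcal{C}_1}$ and $b\in W$; then $\boldsymbol{1} = \sigma a + \sigma b$ with $\sigma a\in F_{\mathcal{C}_1}$ and $\sigma b\in W$, so, the sum being direct and $\boldsymbol{1}\in F_{\mathcal{C}_1}$, we get $\sigma a = \boldsymbol{1}$, i.e.\ $\sigma$ is invertible in $F_{\mathcal{C}_1}$. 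I do not expect a genuine obstacle; the only step demanding care is the combinatorial claim on summands of $\sigma\otimes\pi_\mu$ for $\mu\notin\Lambda'$, where self-duality of $\sigma$ and Frobenius Reciprocity enter. (Alternatively, from the block-diagonal form of $N_\sigma$ and $\det N_\sigma=\pm1$ one reads off $\det(N_\sigma|_{\Lambda'})=\pm1$, which is precisely the invertibility of multiplication by $\sigma$ on $F_{\mathcal{C}_1}$, giving (ii)$\Rightarrow$(iii) directly.)
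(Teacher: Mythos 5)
Your proof is correct. It has the same skeleton as the paper's --- both reduce invertibility of $\sigma$ in the fusion ring to unimodularity of the matrix of multiplication by $\sigma$, then use $\det N_\sigma = \det N_{\bar\pi}\det N_\pi = (\det N_\pi)^2$ to reach condition (i) --- but the execution of the key claim differs. The paper realizes $F_{\mathcal{C}}\otimes\mathbb{C}$ as a commutative C*-subalgebra of $M_\Lambda(\mathbb{C})$ via the regular representation, uses spectral permanence to invert $N(\sigma)$ inside that subalgebra with a priori complex coefficients, and recovers integrality by comparing the columns indexed by $\boldsymbol{1}$. You instead argue entirely over $\mathbb{Z}$: multiplication by $\sigma$ is an endomorphism of the finite-rank free module $\mathbb{Z}^{(\Lambda)}$, surjectivity of such an endomorphism is equivalent to determinant $\pm 1$, and a preimage of $\boldsymbol{1}$ is an inverse by commutativity. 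Your route is more elementary (no C*-theory, no integrality argument) and it also makes (ii)$\Rightarrow$(iii) fully explicit where the paper only says it ``follows from the claim and Frobenius Reciprocity'': you supply the needed combinatorial input, namely that $W=\bigoplus_{\mu\in\Lambda\setminus\Lambda'}\mathbb{Z}\mu$ is invariant under multiplication by $\sigma$ (via self-duality of $\sigma$ and Frobenius Reciprocity), so that $F_{\mathcal{C}}=F_{\mathcal{C}_1}\oplus W$ is a $\sigma$-invariant splitting and the $F_{\mathcal{C}_1}$-component of $\sigma^{-1}$ is already an inverse there. Both approaches prove the same general fact ($x\in\mathrm{Inv}(F_{\mathcal{C}})$ iff the regular representation of $x$ has determinant $\pm 1$); yours is self-contained, while the paper's fits the C*-algebraic framework it has already set up.
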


    \begin{proof}
        Note that $F_{\mathcal{C}}\otimes\mathbb{C}$ is a finite-dimensional $*$-algebra under the $*$-operation induced by conjugation of objects.
        Recall the regular representation of $\mathcal{C}$ (in the terminology of \cite{PV}), which is the injective $*$-homomorphism $N\colon F_{\mathcal{C}}\otimes\mathbb{C}\to M_\Lambda(\mathbb{C})$ defined by $N(\mu) = N_\mu$ for $\mu\in\Lambda$ (and extended by linearity), which manifests $F_{\mathcal{C}}\otimes\mathbb{C}$ as a commutative C*-subalgebra of $M_{\Lambda}(\mathbb{C})$.

        Let $x\in F_{\mathcal{C}}$ be given. Then $N(x)\in M_{\Lambda}(\mathbb{Z})$ and we claim that $\det(N(x))=\pm 1$ if and only if $x\in\mathrm{Inv}(F_{\mathcal{C}})$.  To see this, assume first that $\det(N(x))=\pm 1$. Then $N(x)$ is invertible in $M_{\Lambda}(\mathbb{Z})$, whereby it is also invertible in the C*-subalgebra $F_{\mathcal{C}}\otimes\mathbb{C}$ of $M_{\Lambda}(\mathbb{C})$. Thus, there are complex numbers $c_\mu$ such that
    $
        x(\sum_{\mu}c_\mu\mu) = 1.
    $
    Applying $N$ to this equation, we get that $N(x)\big(\sum_{\mu} c_\mu N_\mu\big) = 1$, i.e., $N(x)^{-1} = \sum_{\mu} c_\mu N_\mu$. In particular, the columns corresponding to $\boldsymbol{1}\in\Lambda$ coincide. It follows that $c_\mu\in\mathbb{Z}$ for all $\mu\in\Lambda$, hence that $x\in \mathrm{Inv}(F_{\mathcal{C}})$. The converse implication is clear, so the claim is now proved.

The claim implies that (i) is equivalent to (ii). Moreover, it follows from the claim and Frobenius Reciprocity that (ii) is equivalent to (iii).
    \end{proof}

    \begin{remark}
        Note that, in the case where $\mathcal{C} = \mathbf{Rep}_k(G)$, one can 
        define a\linebreak $*$-isomorphism 
        $\mathrm{Ver}_k(G)\otimes\mathbb{C}\to C(V)$ by $\pi_{\vec\lambda} 
        \mapsto Q_{\vec\lambda}^G(x_1,\ldots,x_r)|_{V}$, where $V$ is the 
        fusion variety associated to $\mathbf{Rep}_k(G)$, cf.\ section 
        \ref{subsection:fusion_rings}. Moreover, in the notation of that 
        section, $Q^G_{\vec\lambda}(x_{\vec\nu}^{(1)},\ldots,x_{\vec\nu}^{(r)}) 
        = S_{\vec\lambda,\vec\nu}/S_{\vec 0,\vec\nu}$, which means that the 
        vectors $(S_{\vec\lambda,\vec\nu})_{\vec\lambda}$, i.e., the columns of 
        the modular $S$-matrix, constitute a basis of simultaneous eigenvectors 
        for the fusion matrices $N_{\pi_{\vec\lambda}}$ and, moreover, that the 
        eigenvalue of $N_{\pi_{\vec\lambda}}$ that corresponds to the 
        eigenvector $(S_{\vec\lambda,\vec\nu})_{\vec\lambda}$ is 
        $S_{\vec\lambda,\vec\nu}/S_{\vec 0,\vec\nu}$. This is precisely the 
        statement that the Verlinde Conjecture (cf.\ \cite{Ver}) holds for the 
        WZW models, a fact which was proved in \cite{TUY}, \cite{Fa}, \cite{Be}.
    \end{remark}

    \begin{proposition}\label{proposition:fusion_ring}
        Suppose that $\mathcal{C}$ is a braided C*-tensor category with finitely many simple objects and suppose that the given object $\pi$ in $\mathcal{C}$ satisfies both the equivalent conditions in Lemma \ref{lemma:complete} and those in Lemma \ref{lemma:invertible}. Then
        $$
            K_0(A(\mathcal{C},\pi))\cong F_{\mathcal{C}}
        $$
        as ordered rings, where the positive cone in $F_{\mathcal{C}}$ is $(F_{\mathcal{C}})_+ = \{x\,:\,d(x)>0\}\cup\{0\}$.
    \end{proposition}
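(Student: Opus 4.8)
The plan is to read the proposition straight off the analysis of the map $\phi$ from section \ref{subsection:ring_structure}. Recall that $\phi\colon K_0(A(\mathcal{C},\pi))\to F_{\mathcal{C}}[\sigma^{-1}]$ is an injective ring homomorphism with image contained in the subring $F_{\mathcal{C}_1}[\sigma^{-1}]$ (Remark \ref{remark:subcategory}), and that, because $\mathcal{C}$ has only finitely many simple objects, Remark \ref{remark:ordered_ring_iso} upgrades this to the statement that $\phi$ is an isomorphism of \emph{ordered} rings from $K_0(A(\mathcal{C},\pi))$ onto $F_{\mathcal{C}_1}[\sigma^{-1}]$, the positive cone of the latter being $\{y:d(y)>0\}\cup\{0\}$. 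So everything reduces to identifying $F_{\mathcal{C}_1}[\sigma^{-1}]$, as an ordered ring, with $F_{\mathcal{C}}$.

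First I would use the assumption that $\pi$ satisfies the equivalent conditions of Lemma \ref{lemma:complete}: condition (i) gives $\mathcal{C}_1=\mathcal{C}$, so $F_{\mathcal{C}_1}=F_{\mathcal{C}}$ and $\phi$ is an ordered ring isomorphism onto $F_{\mathcal{C}}[\sigma^{-1}]$. Next I would use the assumption that $\pi$ satisfies the equivalent conditions of Lemma \ref{lemma:invertible}: condition (ii) says $\sigma$ is \emph{already} invertible in $F_{\mathcal{C}}$. Hence the canonical localization homomorphism $\ell\colon F_{\mathcal{C}}\to F_{\mathcal{C}}[\sigma^{-1}]$, $x\mapsto x/1$, is a ring isomorphism, with inverse $x/\sigma^n\mapsto x\sigma^{-n}$ (well-defined precisely because $\sigma\in\mathrm{Inv}(F_{\mathcal{C}})$).

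It then remains to check that $\ell$ respects order units and positive cones. It is unital, and $\boldsymbol{1}$ is the distinguished order unit on both sides, so the order-unit condition is immediate. As for the cones, $F_{\mathcal{C}}[\sigma^{-1}]_+=\{y:d(y)>0\}\cup\{0\}$ with $d\colon F_{\mathcal{C}}[\sigma^{-1}]\to\mathbb{R}$ the ring homomorphism determined by $d(\mu/\sigma^n)=d(\mu)/d(\sigma)^n$; precomposing with $\ell$ gives back exactly the quantum-dimension homomorphism $F_{\mathcal{C}}\to\mathbb{R}$, so $\ell^{-1}(F_{\mathcal{C}}[\sigma^{-1}]_+)=\{x\in F_{\mathcal{C}}:d(x)>0\}\cup\{0\}=(F_{\mathcal{C}})_+$. (In particular $(F_{\mathcal{C}},(F_{\mathcal{C}})_+,\boldsymbol{1})$ is genuinely an ordered ring, being order-isomorphic to $K_0(A(\mathcal{C},\pi))$ via $\ell^{-1}\circ\phi$.) Thus $\ell^{-1}\circ\phi\colon K_0(A(\mathcal{C},\pi))\to F_{\mathcal{C}}$ is the desired isomorphism of ordered rings.

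There is essentially no obstacle beyond correctly invoking the hypotheses. The substantive input is Remark \ref{remark:ordered_ring_iso} --- that $\phi$ is an order isomorphism onto $F_{\mathcal{C}_1}[\sigma^{-1}]$ --- whose proof rests on the stationarity of the inductive system in equation (\ref{equation:system}) together with the Perron--Frobenius description of the positive cone; this we are free to quote. Granting it, the two hypotheses on $\pi$ serve only to collapse the subscript (Lemma \ref{lemma:complete}) and to render the localization trivial (Lemma \ref{lemma:invertible}), and the sole point needing a moment's care is the routine observation that localizing at an already-invertible element is an order isomorphism for the cones defined through $d$.
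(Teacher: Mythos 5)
Your proposal is correct and follows exactly the route the paper intends: the paper's own proof is the one-line "This follows immediately from Remark \ref{remark:ordered_ring_iso}," and your argument simply spells out the implicit steps --- Lemma \ref{lemma:complete} collapses $\mathcal{C}_1$ to $\mathcal{C}$, Lemma \ref{lemma:invertible} makes the localization map $F_{\mathcal{C}}\to F_{\mathcal{C}}[\sigma^{-1}]$ an isomorphism, and compatibility of the cones follows since $d$ commutes with inverting $\sigma$. Nothing is missing.
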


    \begin{proof}
        This follows immediately from Remark \ref{remark:ordered_ring_iso}.
    \end{proof}

    \subsection{SU(2)$_k$}

    We will next give some examples of ``levels'' $k$ and (isomorphism classes of) objects $\pi$ in the category $\mathcal{C} = \mathbf{Rep}_k(\mathrm{SU}(2))$ for which the hypotheses of Proposition \ref{proposition:fusion_ring} are satisfied.

    \begin{lemma} The element $\pi = 1 + \pi_1\in\mathrm{Ver}_k(\mathrm{SU}(2))$ satisfies
    $$
        \pi\in\mathrm{Inv}(\mathrm{Ver}_k(\mathrm{SU}(2)))\iff k\notin 1+3\mathbb{Z}.
    $$
    \end{lemma}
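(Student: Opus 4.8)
The plan is to pass to an explicit polynomial model. By Theorem~\ref{theorem:Gepner} together with Gepner's formula $J_k(\mathrm{SU}(2)) = \langle Q^{\mathrm{SU}(2)}_{k+1}(x)\rangle$, I would identify $\mathrm{Ver}_k(\mathrm{SU}(2))$ with $R := \mathbb{Z}[x]/\langle Q^{\mathrm{SU}(2)}_{k+1}(x)\rangle$ via $\pi_j \mapsto [Q^{\mathrm{SU}(2)}_j(x)]$, so that $\pi = 1 + \pi_1$ corresponds to $[1 + x]$. Since $Q^{\mathrm{SU}(2)}_{k+1}$ is monic of degree $k+1$, the cosets of $1, x, \dots, x^k$ form a $\mathbb{Z}$-basis of $R$; this will be used later to recognise nonzero elements of low degree.

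The key computation is the value $Q^{\mathrm{SU}(2)}_{k+1}(-1)$. From $Q^{\mathrm{SU}(2)}_0 = 1$, $Q^{\mathrm{SU}(2)}_1 = x$ and the three-term recursion $Q^{\mathrm{SU}(2)}_{j+1}(x) = xQ^{\mathrm{SU}(2)}_j(x) - Q^{\mathrm{SU}(2)}_{j-1}(x)$ (which encodes the $\mathrm{SU}(2)$ fusion rules, cf.\ section~\ref{subsection:fusion_rings}), specialising at $x = -1$ yields the integer sequence $q_j := Q^{\mathrm{SU}(2)}_j(-1)$ with $q_0 = 1$, $q_1 = -1$, $q_{j+1} = -q_j - q_{j-1}$, hence $(q_j)_{j\geq 0} = (1, -1, 0, 1, -1, 0, \dots)$, periodic of period $3$. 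Thus $Q^{\mathrm{SU}(2)}_{k+1}(-1)$ equals $0$ exactly when $k+1 \equiv 2 \pmod 3$, i.e.\ when $k \in 1 + 3\mathbb{Z}$, and lies in $\{1, -1\}$ otherwise.

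It remains to convert this into an invertibility statement. If $k \notin 1 + 3\mathbb{Z}$, I would set $\varepsilon := Q^{\mathrm{SU}(2)}_{k+1}(-1) \in \{1,-1\}$ and, using that $-1$ is a root of $Q^{\mathrm{SU}(2)}_{k+1}(x) - \varepsilon$, write $Q^{\mathrm{SU}(2)}_{k+1}(x) - \varepsilon = (1+x)h(x)$ with $h \in \mathbb{Z}[x]$; then in $R$ one has $(1+x)(-\varepsilon h(x)) = -\varepsilon\bigl(Q^{\mathrm{SU}(2)}_{k+1}(x) - \varepsilon\bigr) = \varepsilon^2 = 1$, exhibiting $[1+x]$ as a unit. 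If instead $k \in 1 + 3\mathbb{Z}$, then $1 + x$ divides $Q^{\mathrm{SU}(2)}_{k+1}(x)$, say $Q^{\mathrm{SU}(2)}_{k+1}(x) = (1+x)g(x)$ with $\deg g = k$; since $\deg g < \deg Q^{\mathrm{SU}(2)}_{k+1}$ the coset $[g(x)]$ is nonzero in $R$, while $[1+x]\,[g(x)] = 0$, so $[1+x]$ is a zero divisor and hence not invertible. Combining the two cases yields the claimed equivalence $\pi \in \mathrm{Inv}(\mathrm{Ver}_k(\mathrm{SU}(2))) \iff k \notin 1 + 3\mathbb{Z}$. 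The argument is elementary throughout; the only point needing a little care is the bookkeeping in the first paragraph (the identification with $R$ and the image of $\pi$), and one could alternatively argue via the norm $\det(\mathrm{mult}_{1+x}) = \pm\, Q^{\mathrm{SU}(2)}_{k+1}(-1)$, in the spirit of the proof of Lemma~\ref{lemma:invertible}.
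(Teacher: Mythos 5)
Your proof is correct, but it takes a genuinely different route from the paper's. The paper argues entirely inside the fusion ring: it sets $a_k = \det(N_{1+\pi_1})$, verifies $a_1=0$, $a_2=-1$ and the recursion $a_{k+1}=a_k-a_{k-1}$, and then invokes the criterion established in the proof of Lemma~\ref{lemma:invertible} (an element $x$ of the fusion ring is invertible iff $\det(N(x))=\pm 1$) to conclude. You instead pass to the polynomial presentation $\mathrm{Ver}_k(\mathrm{SU}(2))\cong\mathbb{Z}[x]/\langle Q^{\mathrm{SU}(2)}_{k+1}(x)\rangle$ via Theorem~\ref{theorem:Gepner} and Gepner's single-generator description of $J_k(\mathrm{SU}(2))$, and reduce everything to the value $Q^{\mathrm{SU}(2)}_{k+1}(-1)$, whose period-$3$ vanishing pattern you compute from the Chebyshev-type recursion. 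The two computations are two faces of the same fact, as you note at the end: $\det(N_{1+\pi_1})=\pm\,Q^{\mathrm{SU}(2)}_{k+1}(-1)$, since the eigenvalues of $N_{\pi_1}$ are the roots of the monic polynomial $Q^{\mathrm{SU}(2)}_{k+1}$. The paper's argument is shorter and needs no presentation of the Verlinde ring, only the determinant criterion; yours costs the input of Theorem~\ref{theorem:Gepner} plus $J_k(\mathrm{SU}(2))=\langle Q^{\mathrm{SU}(2)}_{k+1}\rangle$, but in exchange produces an explicit inverse $-\varepsilon h(x)$ in the invertible case and an explicit annihilator $g(x)$ (nonzero because $Q^{\mathrm{SU}(2)}_{k+1}$ is monic, so $1,x,\dots,x^k$ is a $\mathbb{Z}$-basis of the quotient) in the non-invertible case. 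All the steps check out, including the bookkeeping $\pi_j\mapsto[Q^{\mathrm{SU}(2)}_j(x)]$ and hence $1+\pi_1\mapsto[1+x]$.
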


    \begin{proof} Put $a_k = \det(N_{\pi})$, where $\pi$ is viewed as an object in $\mathbf{Rep}_k(\mathrm{SU}(2))$. It is easy to verify that $a_1 = 0$, $a_2 = -1$, and $a_{k+1} = a_k-a_{k-1}$ for all $k\geq 2$. It follows that $a_k = 0$ if $k\in 1+3\mathbb{Z}$ while $a_k = \pm 1$ otherwise.
    \end{proof}

    \noindent Since every simple object in $\mathbf{Rep}_k(\mathrm{SU}(2))$ 
    occurs as a direct summand of\linebreak 
    $(\boldsymbol{1}\oplus\pi_1)^{\otimes k}$, the object 
    $\boldsymbol{1}\oplus\pi_1$ satisfies the hypotheses of Proposition 
    \ref{proposition:fusion_ring}. Thus,
    if $k\notin 1+3\mathbb{Z}$ then $K_0(A(\mathbf{Rep}_k(\mathrm{SU}(2)),\boldsymbol{1}\oplus\pi_1))\cong \mathrm{Ver}_k(\mathrm{SU}(2))$ as ordered rings.

    \begin{lemma}\label{lemma:even_identity} In the ring $\mathrm{Ver}_{2n}(\mathrm{SU}(2))$, where $n\in\mathbb{N}_0$, we have
    that
    $$
    	\sum_{j=0}^{2n}(-1)^j\pi_j^2 = \sum_{j=0}^{n}(-1)^j\pi_{2j}.
    $$
    \end{lemma}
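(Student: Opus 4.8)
The plan is to compute the left-hand side directly from the fusion rules of $\mathbf{Rep}_{2n}(\mathrm{SU}(2))$ recorded earlier in the paper and then interchange the order of summation. Setting $k=2n$ in those fusion rules, one sees that for $0\leq j\leq 2n$ the product $\pi_j^2=\pi_j\cdot\pi_j$ consists precisely of the summands $\pi_l$ with $l$ even and $0\leq l\leq\min(2j,4n-2j)$; that is,
$$
    \pi_j^2=\sum_{m=0}^{\min(j,\,2n-j)}\pi_{2m}.
$$
I would substitute this expression into $\sum_{j=0}^{2n}(-1)^j\pi_j^2$ and read off the coefficient of each $\pi_{2m}$. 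Since $\pi_{2m}$ occurs as a summand of $\pi_j^2$ exactly when $m\leq j$ and $m\leq 2n-j$ --- which in particular forces $0\leq m\leq n$ --- the coefficient of $\pi_{2m}$ in $\sum_{j=0}^{2n}(-1)^j\pi_j^2$ is $\sum_{j=m}^{2n-m}(-1)^j$.

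The remaining step is the elementary evaluation of this inner alternating sum. It has $2n-2m+1$ terms, an odd number, and reindexing by $i=j-m$ gives $\sum_{j=m}^{2n-m}(-1)^j=(-1)^m\sum_{i=0}^{2(n-m)}(-1)^i=(-1)^m$, using that $\sum_{i=0}^{2N}(-1)^i=1$ for all $N\geq 0$. Therefore $\sum_{j=0}^{2n}(-1)^j\pi_j^2=\sum_{m=0}^{n}(-1)^m\pi_{2m}$, which is exactly the asserted identity.

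I do not anticipate any genuine obstacle; the only point requiring care is the bookkeeping of the summation ranges, and in particular the truncation $\min(2j,4n-2j)$ coming from the ``$i+j>k$'' branch of the fusion rules of $\mathbf{Rep}_{2n}(\mathrm{SU}(2))$ --- this truncation is precisely what makes both sides of the identity terminate at index $2n$. (It may be worth double-checking the final formula by hand for $n=0,1,2$ before committing to the general argument.)
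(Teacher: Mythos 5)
Your proof is correct and rests on the same computation as the paper's: expand $\pi_j^2=\sum_{m=0}^{\min(j,2n-j)}\pi_{2m}$ from the truncated fusion rules and then tally the signed multiplicity of each $\pi_{2m}$. The only difference is organizational --- you interchange the order of summation once and evaluate the odd-length alternating sum $\sum_{j=m}^{2n-m}(-1)^j=(-1)^m$ directly, which avoids the paper's split into the cases $2n=4m-2$ and $2n=4m$ and is arguably the cleaner write-up.
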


    \begin{proof} Assume first that $2n = 4m-2$. Then we have that
    $$
    	\sum_{j=0}^{4m-2}(-1)^j\pi_j^2 =
    	\sum_{j=0}^{2m-1}(-1)^j[1+\pi_2+\cdots+\pi_{2j}]+\sum_{j=2m}^{4m-2}(-1)^j[1+\pi_2+\cdots+\pi_{8m-4-2j}].
    $$
    In the first sum on the right hand side, $1$ appears $2m$ times with
    alternating sign, hence cancels out. More generally, for each $0\leq i\leq
    2m-1$, $\pi_{2i}$ appears
    $2m-i$ times with alternating sign, starting with the sign $(-1)^i$. Thus,
    the first sum is equal to $-\pi_2-\pi_6-\cdots-\pi_{4m-2}$. Similarly, the
    second sum on the right hand side is equal to
    $1+\pi_4+\cdots+\pi_{4m-4}$. This proves the formula in this case. The
    proof in the case where $2n=4m$ is similar.\end{proof}

    \begin{proposition} In the ring $\mathrm{Ver}_{2n}(\mathrm{SU}(2))$, where $n\in\mathbb{N}_0$, we have
        the identity
        $$
        	\left(1+2\sum_{j=1}^{2n}\pi_j\right)\left(1+2\sum_{j=1}^{2n}(-1)^j\pi_j\right)
        	 = 1.
        $$
    \end{proposition}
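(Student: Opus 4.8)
Write $u = \sum_{j=0}^{2n}\pi_j$ for the sum of all simple objects in $\mathrm{Ver}_{2n}(\mathrm{SU}(2))$ and let $v = \sum_{j=0}^{2n}(-1)^j\pi_j$ be its image under the grading automorphism $\varepsilon\colon\pi_j\mapsto(-1)^j\pi_j$ (this is a ring automorphism because, by the fusion rules recalled above, every simple summand of $\pi_i\otimes\pi_j$ has the parity of $i+j$). Since the first factor in the statement equals $2u-1$ and the second equals $2v-1$, the identity to be proved is equivalent to
\[
    (2u-1)(2v-1) = 4uv - 2u - 2v + 1 = 1,
\]
i.e.\ to the single identity $4uv = 2u + 2v$. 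So the plan is to evaluate the product $uv$ up to a controllable correction.

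The first step is a direct computation of $\pi_1 u$ from the $\mathrm{SU}(2)_{2n}$ fusion rules: using $\pi_1\otimes\pi_j\cong\pi_{j-1}\oplus\pi_{j+1}$ for $1\le j\le 2n-1$ together with the boundary cases $\pi_1\otimes\pi_0\cong\pi_1$ and $\pi_1\otimes\pi_{2n}\cong\pi_{2n-1}$, a short telescoping of $\sum_{j}$ gives $\pi_1 u = 2u - 1 - \pi_{2n}$, that is,
\[
    (2-\pi_1)u = 1 + \pi_{2n}.
\]
Applying $\varepsilon$ — which fixes $1$ and fixes $\pi_{2n}$ (as $2n$ is even), sends $\pi_1\mapsto-\pi_1$, and sends $u\mapsto v$ — yields
\[
    (2+\pi_1)v = 1 + \pi_{2n}.
\]
(If one prefers to avoid invoking $\varepsilon$, this second identity can be obtained by the same kind of telescoping applied directly to $\pi_1 v$.)

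Next, multiply the first identity by $v$ and the second by $u$ and add: the $\pi_1$-terms cancel and one obtains $4uv = (1+\pi_{2n})(u+v)$. To finish, recall that $\pi_{2n}$ is the invertible simple object, with $\pi_{2n}\otimes\pi_j\cong\pi_{2n-j}$ for all $j$, so multiplication by $\pi_{2n}$ permutes the basis of $\mathrm{Ver}_{2n}(\mathrm{SU}(2))$ by the reflection $j\mapsto 2n-j$; hence $\pi_{2n}u = u$, and $\pi_{2n}v = v$ because $(-1)^{2n-j}=(-1)^j$. Therefore $(1+\pi_{2n})(u+v) = 2(u+v)$, so $4uv = 2u+2v$, and substituting back into the displayed expansion gives $(2u-1)(2v-1)=1$, which is the assertion.

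All the manipulations here are elementary identities in the fusion ring; the only place that calls for a little care — and the "main obstacle" such as it is — is keeping track of the degenerate boundary contributions ($j=0$ and $j=2n$) in the sums defining $\pi_1 u$ and $\pi_1 v$, where the generic fusion rule $\pi_1\otimes\pi_j\cong\pi_{j-1}\oplus\pi_{j+1}$ fails. As an alternative route, one could instead verify $4uv=2u+2v$ by evaluating both sides under each of the $2n+1$ one-dimensional characters of $\mathrm{Ver}_{2n}(\mathrm{SU}(2))\otimes\mathbb{C}$, using that the modular $S$-matrix is invertible so these characters separate points of the free abelian group $\mathrm{Ver}_{2n}(\mathrm{SU}(2))$; but the direct argument above is shorter.
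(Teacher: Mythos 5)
Your proof is correct, but it takes a genuinely different route from the paper's. The paper expands the product directly, reduces the claim to the identity $\sum_{i,j=1}^{2n}(-1)^i\pi_i\pi_j=-\sum_{j=1}^{n}\pi_{2j}$, and verifies this by a term-by-term cancellation argument that relies on an auxiliary lemma ($\sum_{j=0}^{2n}(-1)^j\pi_j^2=\sum_{j=0}^{n}(-1)^j\pi_{2j}$, itself proved by careful sign bookkeeping split into the cases $2n=4m-2$ and $2n=4m$). You instead exploit two structural features of $\mathrm{Ver}_{2n}(\mathrm{SU}(2))$: the $\mathbb{Z}/2$-grading automorphism $\varepsilon\colon\pi_j\mapsto(-1)^j\pi_j$ (which is multiplicative because every summand of $\pi_i\otimes\pi_j$ has the parity of $i+j$) and the simple current $\pi_{2n}$, which acts on the basis by $j\mapsto 2n-j$ and hence fixes both $u$ and $v$. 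This reduces everything to the single telescoping computation $(2-\pi_1)u=1+\pi_{2n}$, after which $4uv=(1+\pi_{2n})(u+v)=2(u+v)$ falls out formally. Your argument is shorter, makes the auxiliary lemma unnecessary, and is more conceptual in that it explains \emph{why} the two factors are inverse to each other (they are related by the grading, and $u$ is nearly an eigenvector for multiplication by $\pi_1$); the paper's argument is more pedestrian but entirely self-contained at the level of the fusion coefficients. The only caveat, which does not affect correctness, is that for $n=0$ the object $\pi_1$ does not exist and your intermediate identities are vacuous, but there the proposition reads $1\cdot 1=1$ and holds trivially.
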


    \begin{proof} It is clearly enough to show that
        $$
            \sum_{i,j=1}^{2n}(-1)^i\pi_i\pi_j = -\sum_{j=1}^{n}\pi_{2j}.
        $$
        By Lemma \ref{lemma:even_identity}, this is equivalent to
        $$
            \sum_{\substack{1\leq i<j\leq 2n\\ i\equiv j\text{ (mod 2)}}}
            (-1)^i\pi_i\pi_j = -\sum_{i=1}^{\lfloor n/2\rfloor} \pi_{4i}.
        $$
        We can write the left hand side as $\sum_{r=1}^{n-1}x_r$, where
        $$
        	x_r = \sum_{i=1}^{n-r}(-1)^i\pi_i\pi_{i+2r}+\sum_{i=n-r+1}^{2n-2r}(-1)^i\pi_i\pi_{i+2r}.
        $$
        Here, the left-most sum is equal to
        $$
        	\sum_{i=1}^{n-r}(-1)^i[\pi_{2r}+\pi_{2r+2}+\cdots+\pi_{2r+2i}]
        $$
        while the right-most sum is equal to
        $$
        	\sum_{i=n-r+1}^{2n-2r}(-1)^i[\pi_{2r}+\pi_{2r+2}+\cdots+\pi_{4n-2r-2i}].
        $$
        Thus, by the proof of Lemma \ref{lemma:even_identity}, if $r\equiv n+ 1$ (mod 2) then
        $
        	x_r = -\pi_{2r+2}+\pi_{2r+4}-\cdots-\pi_{2n}
        $
        while if $r\equiv n$ (mod 2) then
        $
        	x_r = -\pi_{2r+2}+\pi_{2r+4}-\cdots+\pi_{2n}.
        $
        It follows easily from this that $\sum_{r=1}^{n-1}x_r = -\sum_{i=1}^{\lfloor n/2\rfloor} \pi_{4i}$, as desired.
    \end{proof}

    \noindent We conclude that the object $\pi = \boldsymbol{1}\oplus \bigoplus_{j=1}^{2n}\pi_j^{\oplus 2}$ in $\mathbf{Rep}_{2n}(\mathrm{SU}(2))$ satisfies the hypotheses of Proposition \ref{proposition:fusion_ring}.

    \begin{remark}
        If  $n\geq 2$ then the ordered group 
        $\mathrm{Ver}_{1}(\mathrm{SU}(n))$, equipped with the positive cone 
        defined above, is not the ordered $K_0$-group of any AF-algebra. This 
        can be shown as follows. Recall first that an ordered group $(G,G_+)$ 
        is said to have the \emph{Riesz interpolation property} if, whenever 
        $h_1,h_2,g_1,g_2\in G$ are such that $h_i\leq g_j$ for all 
        $i,j\in\{1,2\}$, there exists $x\in G$ such that $h_i\leq x\leq g_j$ 
        for all $i,j\in\{1,2\}$. It is a fact that the ordered $K_0$-group of 
        any AF-algebra has this property, cf.\ \cite{EHS}. To see that 
        $\mathrm{Ver}_{1}(\mathrm{SU}(n))$ with the positive cone 
        $\mathrm{Ver}_{1}(\mathrm{SU}(n))_+ = \{x\,:\,d(x)>0\}\cup\{0\}$ does 
        not have this property, note first that each of the $n$ simple objects 
        $\rho_1 = \boldsymbol{1}$, $\rho_2 = \pi_{\vec{e}_1}$, $\ldots$, 
        $\rho_n = 
        \pi_{\vec{e}_{n-1}}$ in $\mathbf{Rep}_1(\mathrm{SU}(n))$ has quantum 
        dimension $1$ by the main result of \cite{F1}. Moreover, with $h_1 = 
        0$, $h_2 = \rho_1-\rho_2$, $g_1 = \rho_1$ and $g_2 = 2\rho_1-\rho_2$, 
        we get that $h_i\leq g_j$ for all $i,j\in\{1,2\}$. If now $x\in G$ were 
        such that $h_i\leq x\leq g_j$ for all $i,j\in\{1,2\}$ then we would 
        either have $d(x)=0$ or $d(x)=1$, both of which lead to a contradiction.
    \end{remark}

    \section{Concluding remarks}\label{section:conclusion}

    In the present paper, we have explicitly identified certain ordered $K_0$-groups with (quotients of) polynomial rings over $\mathbb{Z}$ and shown that the product on these rings is induced by a $*$-homomorphism that arises from a unitary braiding on an underlying rigid C*-tensor category. These considerations raise the following question. As $A(\mathbf{Rep}(G),\pi)\cong M_{n^\infty}^G$, where $n = \dim(\pi)$, one is lead to ask if also $A(\mathbf{Rep}_k(G),\pi)$ may be realized as a fixed point algebra $A^{G_q}$ under some action (properly defined) of a quantized deformation $G_q$ of $G$ on a C*-algebra $A$. If we could do this, then the ring structure on $K_0(A^{G_q})$ should be induced by a ``natural'' $*$-homomorphism $A\otimes A\to A$.

    Wassermann \cite{Wa} and Handelman--Rossmann \cite{HR2} observed that if 
    $G$ is a compact, connected group then $K_0(M_{n^\infty}\rtimes G)$ may be 
    identified with the localization $R(G)[(\bar{\pi}\pi)^{-1}]$, where $R(G) = 
    F_{\mathbf{Rep}(G)}$ is the representation ring of $G$. Moreover, they 
    noted that $K_0(M_{n^\infty}^G)$ may be identified with a certain subring 
    of $K_0(M_{n^\infty}\rtimes G)$ (see also \cite{R}). In analogy with this, 
    and with the isomorphism $K_0^G(M_{n^\infty})\cong K_0(M_{n^\infty}\rtimes 
    G)$, which follows from the Green--Julg Theorem (cf.\ e.g.\ \cite{Ph}), we 
    ought to have that
    $$
        K_0^{G_q}(A)\cong K_0(A\rtimes G_q)\cong \mathrm{Ver}_k(G)[(\bar{\pi}\pi)^{-1}].
    $$
    Based on the compact group case, one can guess what a Bratteli diagram for $A\rtimes G_q$ should look like. Namely, every level should have the vertex set $\Lambda$, i.e., the set of simple objects in $\mathbf{Rep}_k(G)$, and the inclusion matrix between every pair of neighboring levels should be the fusion matrix $N_{\bar{\pi}\otimes\pi} = N_\pi^*N_\pi$. Denote by $B(\mathbf{Rep}_k(G),\pi)$ the AF-algebra arising from this Bratteli diagram. (We are on purpose being vague about the multiplicities of the vertices.) It is then trivial to show that $K_0(B(\mathbf{Rep}_k(G),\pi))\cong \mathrm{Ver}_k(G)[\sigma^{-1}]$ as ordered groups. It remains to figure out how to identify $B(\mathbf{Rep}_k(G),\pi)$ with $A\rtimes G_q$ once the action of $G_q$ on $A$ has been defined and how to equip $K_0^{G_q}(A)$ with a ``natural'' ring structure.

    Let us describe the positive cone of the ordered group $K_0(B(\mathbf{Rep}_k(G),\pi))$ in some special cases. The work of Evans and Gould (cf.\ Theorem \ref{theorem:SU2k}) readily implies that positivity in $K_0(B(\mathbf{Rep}_k(\mathrm{SU}(2)),\pi_1))\cong \mathrm{Ver}_k(\mathrm{SU}(2))[\pi_1^{-2}]\cong \mathbb{Z}[x^{\pm 1}]/\tilde{J}_k$ is determined by evaluating the even and odd parts of a coset separately at $d(\pi_1)$.
    (Here, $\tilde{J}_k$ is the image of the fusion ideal $J_k(\mathrm{SU}(2))$ in the localization $\mathbb{Z}[x^{\pm 1}]$ of $\mathbb{Z}[x]$. In what follows, $\tilde{J}_k$ will always denote the appropriate localization of the relevant fusion ideal.) More precisely, a coset $[f(x)]$ in $\mathbb{Z}[x^{\pm 1}]/\tilde{J}_k$ belongs to the positive cone if and only if $[f(x)] = [f_0(x)]+[f_1(x)]$, where $f_0(x)$ is even, $f_1(x)$ is odd and, for both $i=1,2$, $[f_i(x)] = [0]$ or $f_i(d(\pi_1))>0$. Note that this splitting up into two parts is due to the fact that the center $\mathcal{Z}(\mathrm{SU}(2))$ of $\mathrm{SU}(2)$ is isomorphic to $\mathbb{Z}/2\mathbb{Z}$.

    Similarly, positivity in $K_0(B(\mathbf{Rep}_k(\mathrm{SU}(3)),\pi_{(1,0)}))\cong \mathbb{Z}[x^{\pm 1},y^{\pm 1}]/\tilde{J}_k$ is determined by evaluating the parts of type $0$, $1$ and $2$ separately at $(d(\pi_{(1,0)}),d(\pi_{(1,0)}))$. Here, a monomial $x^ay^b$ with $a,b\in\mathbb{Z}$ is said to be of type $j$ if $a-b\equiv j$ (mod 3), where the `3' is related to the fact that $\mathcal{Z}(\mathrm{SU}(3))\cong\mathbb{Z}/3\mathbb{Z}$. Finally, positivity in $K_0(B(\mathbf{Rep}_k(\mathrm{Sp}(4)),\pi_{(0,1)}))\cong \mathbb{Z}[x,y^{\pm 1}]/\tilde{J}_k$ is determined by evaluation of ``semi-even'' and ``semi-odd'' parts (due to $\mathcal{Z}(\mathrm{Sp}(4))\cong \mathbb{Z}/2\mathbb{Z}$), where $x^ay^b$ is semi-even (resp.\ semi-odd) if $a$ is even (resp.\ odd), while positivity in $K_0(B(\mathbf{Rep}_k(\mathrm{G}_2),\pi_{(1,0)}))\cong \mathbb{Z}[x^{\pm 1},y]/\tilde{J}_k$ is determined by evaluation of the whole function (as $\mathrm{G}_2$ has trivial center).

    These positivity conditions are analogous to those for the crossed products of the compact groups in question. For instance, it is easily deduced from the work of Wassermann (cf.\ Theorem \ref{theorem:SU2} above) that positivity in $K_0(M_{2^\infty}\rtimes\mathrm{SU}(2))\cong R(\mathrm{SU}(2))[\pi_1^{-1}]\cong \mathbb{Z}[x^{\pm 1}]$ is determined by evaluating the even and odd parts of a function separately on the interval $[2,\infty)$. Moreover, positivity in $K_0(M_{3^\infty}\rtimes\mathrm{SU}(3))\cong \mathbb{Z}[x^{\pm 1},y^{\pm 1}]$ is more or less (but not quite) determined by evaluating the parts of type $0$, $1$ and $2$ separately on the set $\mathcal{X}$ that was defined on page \pageref{page:X}.

    Let us next discuss some exact sequences that relate $K_0(B(\mathbf{Rep}_k(G),\pi))$ to $\mathrm{Ver}_k(G)$. Given $x\in\mathrm{Ver}_k(G)\otimes\mathbb{Q}$, there is a short exact sequence of rings
    $$
        0\to \mathrm{Ann}_{\mathrm{Ver}_k(G)\otimes\mathbb{Q}}(x)\to \mathrm{Ver}_k(G)\otimes\mathbb{Q}\to (\mathrm{Ver}_k(G)\otimes\mathbb{Q})[x^{-1}]\to 0,
    $$
    where $\mathrm{Ann}_{\mathrm{Ver}_k(G)\otimes\mathbb{Q}}(x) = \{y\in\mathrm{Ver}_k(G)\otimes\mathbb{Q}\,:\,xy = 0\}$ is isomorphic to the null-space of $N(x)$ (cf.\ the proof of Lemma \ref{lemma:invertible}) as a vector space over $\mathbb{Q}$.
        In the case where $G = \mathrm{SU}(2)$, $x = \pi_1$ and $k$ is even, this yields a short exact sequence
    \begin{equation}\label{equation:SES_SU2}
        0\to \mathbb{Q}\to \mathrm{Ver}_k(\mathrm{SU}(2))\otimes\mathbb{Q}\to K_0(B(\mathbf{Rep}_k(\mathrm{SU}(2)),\pi_1))\otimes\mathbb{Q}\to 0
    \end{equation}
    of groups, since $$(\mathrm{Ver}_k(\mathrm{SU}(2))\otimes\mathbb{Q})[\pi_1^{-1}]\cong \mathbb{Q}[x^{\pm 1}]/\tilde{J}_k\cong K_0(B(\mathbf{Rep}_k(\mathrm{SU}(2)),\pi_1))\otimes\mathbb{Q}$$ and $N_{\pi_1}$ has nullity one. (Here, $\tilde{J}_k$ is the ideal obtained by taking the image of the fusion ideal $J_k(\mathrm{SU}(2))$ in the localization $\mathbb{Q}[x^{\pm 1}]$ of $\mathbb{Z}[x]$.) However, when $k$ is odd, $N_{\pi_1}$ is invertible and we get that $K_0(B(\mathbf{Rep}_k(\mathrm{SU}(2)),\pi_1))\otimes\mathbb{Q}\cong \mathrm{Ver}_k(\mathrm{SU}(2))\otimes\mathbb{Q}$. In fact, one can show that $$K_0(B(\mathbf{Rep}_k(\mathrm{SU}(2)),\pi_1))\cong \mathrm{Ver}_k(\mathrm{SU}(2))$$ in this case, since $N_{\pi_1}$ is actually invertible over $\mathbb{Z}$, and use the identification $$K_0(B(\mathbf{Rep}_k(\mathrm{SU}(2)),\pi_1))\cong K_0(A(\mathbf{Rep}_k(\mathrm{SU}(2)),\pi_1))^{\oplus 2}$$ of groups to define ``intrinsically'' a ring structure on $K_0(B(\mathbf{Rep}_k(\mathrm{SU}(2)),\pi_1))$ giving rise to that of $\mathrm{Ver}_k(\mathrm{SU}(2))$.

    Similarly, in the case where $G = \mathrm{SU}(3)$, $x = \pi_{(1,0)}$ and $k\in 3\mathbb{Z}$, we get a short exact sequence
    \begin{equation}\label{equation:SES_SU3}
        0\to \mathbb{Q}\to \mathrm{Ver}_k(\mathrm{SU}(3))\otimes\mathbb{Q}\to K_0(B(\mathbf{Rep}_k(\mathrm{SU}(3)),\pi_1))\otimes\mathbb{Q}\to 0
    \end{equation}
    of groups. If $k\notin 3\mathbb{Z}$ then $N_{\pi_{(1,0)}}$ is invertible over $\mathbb{Q}$ and so $$K_0(B(\mathbf{Rep}_k(\mathrm{SU}(3)),\pi_{(1,0)}))\otimes\mathbb{Q}\cong \mathrm{Ver}_k(\mathrm{SU}(3))\otimes\mathbb{Q}.$$ (In this case, it seems that $K_0(B(\mathbf{Rep}_k(\mathrm{SU}(3)),\pi_{(1,0)}))\not\cong \mathrm{Ver}_k(\mathrm{SU}(3))$. At least $[x]$ is not an invertible element of $\mathbb{Z}[x,y]/J_k(\mathrm{SU}(3))\cong \mathrm{Ver}_k(\mathrm{SU}(3))$ unlike in $\mathbb{Z}[x^{\pm 1},y^{\pm 1}]/\tilde{J}_k\cong K_0(B(\mathbf{Rep}_k(\mathrm{SU}(3)),\pi_{(1,0)}))$.)

    For $G = \mathrm{Sp}(4)$ and $G = \mathrm{G}_2$, the short exact sequences we get are less nice, since the nullity of the fusion matrices of fundamental representations seems to either fluctuate in a haphazard manner or to never be zero as indicated by some experimentation using the computer program \emph{Maple}. However, we do get that
    $$
        K_0(B(\mathbf{Rep}_k(\mathrm{Sp}(4)),\pi_{(0,1)}))\otimes\mathbb{Q}\cong \mathrm{Ver}_k(\mathrm{Sp}(4))\otimes\mathbb{Q}
    $$
    whenever $N_{\pi_{(0,1)}}$ is invertible over $\mathbb{Q}$, which seems likely to occur if and only if $k+3$ is divisible by neither $3$ nor $5$. Indeed, we used \emph{Maple} to verify this for $k\leq 100$. Similarly, we used \emph{Maple} to verify, again for all $k\leq 100$, that $N_{\pi_{(1,0)}}$ is not invertible. More precisely, we verified that the nullity of $N_{\pi_{(1,0)}}$ is $\lfloor(k+2)/2\rfloor$ for all $k\leq 100$. We also get that
    $$
        K_0(B(\mathbf{Rep}_k(\mathrm{G}_2),\pi_{(1,0)}))\otimes\mathbb{Q}\cong \mathrm{Ver}_k(\mathrm{G}_2)\otimes\mathbb{Q}
    $$
    whenever $N_{\pi_{(1,0)}}$ is invertible over $\mathbb{Q}$. A theorem of Gannon and Walton (cf.\ Theorem 7 in \cite{GW}) states that this is the case precisely when $k+4$ is divisible by neither $4$ nor $30$. However, we used \emph{Maple} to verify that, for $k\leq 100$, $N_{\pi_{(1,0)}}$ is invertible if and only if $k+4$ is divisible by neither $4$, $7$ nor $30$. Moreover, we verified by hand that $N_{\pi_{(1,0)}}$ is not invertible over $\mathbb{Q}$ when $k = 3$. Similarly, a computation using \emph{Maple} showed that, for $k\leq 100$, $N_{\pi_{(0,1)}}$ is invertible if and only if $k+4$ is divisible by neither $5$, $7$ nor $8$.

    Now, the short exact sequences in equations (\ref{equation:SES_SU2}) and (\ref{equation:SES_SU3}) as well as the various isomorphisms above lead us to ask if they can somehow be understood on the level of $*$-homomorphisms between (possibly non-AF) C*-algebras. This is related to our aforementioned long-term goal of finding a ``natural'' C*-algebra $B$ such that $K_0(B)\cong \mathrm{Ver}_k(G)$ as rings and $\mathbf{Rep}_k(G)$ is the category of (finitely generated projective) modules over $B$ under an appropriate tensor product (see also \cite{AE}).

\vspace{2mm}
{\footnotesize
\noindent {\sc Andreas N\ae s Aaserud}\\ 
E-mail: aaseruda@cardiff.ac.uk / andreas.naes.aaserud@gmail.com\\[1mm]
{\sc David E.\ Evans}\\
E-mail: evansde@cardiff.ac.uk\\[1.5mm]
\noindent {\it Both authors are affiliated with}\\[1mm]
{\sc School of Mathematics, Cardiff
University, Senghennydd Road, Cardiff, CF24 4AG, Wales, UK}
}
\end{document}